\setlist[enumerate]{topsep=0pt,itemsep=-1ex,partopsep=1ex,parsep=1ex}
\setlist[itemize]{topsep=0pt,itemsep=-1ex,partopsep=1ex,parsep=1ex}
\author{Michael B. Cohen\thanks{Work was done while 
the first two authors were visiting Microsoft Research and hosted by S\'{e}bastien Bubeck,
and the third author was visiting University of Washington.
The authors would like to express their sincere gratitude to Rasmus Kyng for his questions about inverse maintenance that initiated this project. A preliminary version of this paper appeared in the Proceedings of 51th Annual ACM Symposium on Theory of Computing (STOC 2019). The full version of this paper appeared in the Journal of the Association for Computing Machinery (JACM 2020).
}
\\
MIT \& Microsoft Research\\
micohen@mit.edu\and Yin Tat Lee\footnotemark[1]
\\
UW \& Microsoft Research\\
yintat@uw.edu\and  Zhao Song\footnotemark[1]
\\
UT-Austin \& UW\\
zhaos@utexas.edu}
\date{}
\title{Solving Linear Programs in \\ the Current Matrix Multiplication Time}
\newtheorem{theorem}{Theorem}[section]
\newtheorem{lemma}[theorem]{Lemma}
\newtheorem{corollary}[theorem]{Corollary}
\newtheorem{assumption}[theorem]{Assumption}
\newtheorem{fact}[theorem]{Fact}
\newtheorem{remark}[theorem]{Remark}
\newtheorem{claim}[theorem]{Claim}
\newcommand{\wh}{\widehat}
\newcommand{\wt}{\widetilde}
\newcommand{\ov}{\overline}
\newcommand{\eps}{\epsilon}
\newcommand{\R}{\mathbb{R}}
\renewcommand{\varepsilon}{\epsilon}
\renewcommand{\tilde}{\wt}
\renewcommand{\hat}{\wh}
\renewcommand{\eps}{\epsilon}
\newcommand{\diag}{\textrm{diag}}
\DeclareMathOperator*{\E}{{\bf {E}}}
\DeclareMathOperator*{\Var}{{\bf {Var}}}
\DeclareMathOperator*{\Sup}{{\bf {Sup}}}
\DeclareMathOperator{\nnz}{nnz}
\DeclareMathOperator{\new}{new}
\DeclareMathOperator{\OPT}{OPT}
\global\long\def\defeq{\stackrel{\mathrm{{\scriptscriptstyle def}}}{=}}
\algnewcommand{\LineComment}[1]{\State \(\triangleright\) #1}
\newcommand*{\RN}[1]{\expandafter\@slowromancap\romannumeral #1@}
\newcommand{\define}[4][ignore]{%
  \ifstrequal{#1}{ignore}{}{
  \@namedef{thmtitle@#2}{#1}}%
  \@namedef{thm@#2}{#4}%
  \@namedef{thmtypen@#2}{lemma}%
  \newtheorem{thmtype@#2}[theorem]{#3}%
  \newtheorem*{thmtypealt@#2}{#3~\ref{#2}}%
}
\newcommand{\state}[1]{%
  \@namedef{curthm}{#1}
  \@ifundefined{thmtitle@#1}{
  \begin{thmtype@#1}
    }{
  \begin{thmtype@#1}[\@nameuse{thmtitle@#1}]
  }
    \label{#1}
    \@nameuse{thm@#1}
  \end{thmtype@#1}
  \@ifundefined{thmdone@#1}{
  \@namedef{thmdone@#1}{stated}%
  }{}
}
\newcommand{\restate}[1]{%
  \@namedef{curthm}{#1}
  \@ifundefined{thmtitle@#1}{
    \begin{thmtypealt@#1}
    }{
  \begin{thmtypealt@#1}[\@nameuse{thmtitle@#1}]
  }
    \@nameuse{thm@#1}
  \end{thmtypealt@#1}
  \@ifundefined{thmdone@#1}{
  \@namedef{thmdone@#1}{stated}%
  }{}
}
\newcommand{\thmlabel}[1]{
  \@ifundefined{thmdone@\@nameuse{curthm}}{\label{#1}
    }{\tag*{\eqref{#1}}}
}
\begin{document}

\begin{titlepage}
  \maketitle
  \begin{abstract}

This paper shows how to solve linear programs of the form
$\min_{Ax=b,x\geq0} c^\top x$ with $n$ variables
in time $$O^*((n^{\omega}+n^{2.5-\alpha/2}+n^{2+1/6}) \log(n/\delta))$$ where $\omega$ is the exponent of matrix multiplication,
$\alpha$ is the dual exponent of matrix multiplication, and $\delta$ is the relative accuracy.
For the current value of $\omega\sim2.37$ and $\alpha\sim0.31$, our algorithm takes $O^*(n^{\omega} \log(n/\delta))$ time.
When $\omega = 2$, our algorithm takes $O^*(n^{2+1/6} \log(n/\delta))$ time.

Our algorithm utilizes several new concepts that we believe may be
of independent interest: 
\begin{itemize}
\item We define a stochastic central path method.
\item We show how to maintain a projection matrix $\sqrt{W}A^{\top}(AWA^{\top})^{-1}A\sqrt{W}$
in sub-quadratic time under $\ell_{2}$ multiplicative changes in
the diagonal matrix $W$.
\end{itemize}

  \end{abstract}
  \thispagestyle{empty}
\end{titlepage}
%
%\newpage
%{\hypersetup{linkcolor=black}
%\tableofcontents
%}

\section{Introduction}

Linear programming is one of the key problems in computer science.
In both theory and practice, many problems can be reformulated as
linear programs to take advantage of fast algorithms. For an arbitrary
linear program $\min_{Ax=b,x\geq0}c^{\top}x$ with $n$ variables and
$d$ constraints\footnote{Throughout this paper, we assume there is no redundant constraints and hence $n\geq d$. 
Note that papers in different communities uses different symbols to denote the number of variables and constraints in a linear program.}, the fastest algorithm takes $O^*(\sqrt{d}\cdot\nnz(A)+d^{2.5})$\footnote{We use $O^*$ to hide $n^{o(1)}$ and $\log^{O(1)}(1/\delta)$ factors and $\tilde{O}$ to hide $\log^{O(1)}(n/\delta)$ factors.} 
where $\nnz(A)$ is the number of non-zeros in $A$ \cite{ls14,ls15}.

For the generic case $d=\Omega(n)$ we focus in this paper, the current fastest
runtime is dominated by $O^*(n^{2.5})$. This runtime has not been improved since a result by Vaidya on 1989 \cite{v87,v89b}. The
$n^{2.5}$ bound originated from two factors:  the cost per iteration
$n^{2}$ and the number of iterations $\sqrt{n}$. The $n^{2}$ cost
per iteration looks optimal because this is the cost to compute $Ax$ for a dense $A$. Therefore, many efforts  \cite{k84,r88,na89,v89a,ls14} have been focused on decreasing  
the number of iterations while maintaining the cost per iteration.
As for many important linear programs (and convex programs), the number
of iterations has been decreased, including maximum flow \cite{m13,m16}, minimum
cost flow \cite{cmsv17}, geometric median \cite{clmps16}, matrix scaling and balancing \cite{cmtv17}, and $\ell_{p}$ regression \cite{bcll18}. Unfortunately, beating
$\sqrt{n}$ iterations (or $\sqrt{d}$ when $d \ll n$) for the general case remains one of the biggest
open problems in optimization.

Avoiding this open problem, this paper develops a stochastic central path method that
has a runtime of $O^*(n^{\omega} + n^{2.5-\alpha/2} + n^{2+1/6})$, where $\omega$ is the exponent of matrix multiplication and $\alpha$ is the dual exponent of matrix multiplication\footnote{The dual exponent of matrix multiplication $\alpha$ is the supremum among all $a\geq0$ such that it takes $n^{2+o(1)}$ time to multiply an $n \times n$ matrix by an $n \times n^a$ matrix.}. For the current value of $\omega \sim 2.38$ and $\alpha \sim 0.31$, the runtime is simply $O^*(n^{\omega})$. This achieves a natural barrier for solving linear programs because linear systems are a special case of linear program and this is the best known runtime for solving linear systems. Although \cite{afl15, aw18, alman2019limits} showed that the exact and similar\footnote{Improving the matrix multiplication constant boils down to constructing/analyzing the tensors in better sense. Those work about limitation of matrix multiplication constant explore the exact same tensor and also variation of tensor in the previous work. For more details, we refer the readers to matrix multiplication literatures.} approaches used in \cite{cw87,w12,ds13,l14} cannot give a bound on $\omega$ better than $2.168$, we believe improving the additive $2+1/6$ term remains important for understanding linear programming. A recent work \cite{jswz20} improved the $2+1/6$ term to $2+1/18$.

Our method is a stochastic version of the short step central path method. This short step method takes $O^*(\sqrt{n})$ steps and each step decreases $x_i s_i$ by a $1-1/\sqrt{n}$ factor for all $i$ where $x$ is the primal variable and $s$ is the dual variable \cite{r88} (See the definition of $s$ in (\ref{eq:KKT})). This results in $O^*(\sqrt{n})\times n = O^*(n^{1.5})$ coordinate updates. Our method takes the same number of steps but only updates $\tilde{O}(\sqrt{n})$ coordinates each step. Therefore, we only update $O^*(n)$ coordinates in total, which is nearly optimal.

Our framework is efficient enough to take a much smaller step while maintaining the same running time. For the current value of $\omega\sim 2.38$, we show how to obtain the same runtime of $O^*(n^\omega)$ by taking $O^*(n)$ steps and $\tilde{O}(1)$ coordinates update per steps. This is because the complexity of each step decreases proportionally when the step size decreases. Beyond the cost per iteration, we remark that our algorithm is one of the very few central path algorithms \cite{prt02,m13,m16} that does not maintain $x_i s_i$ close to some ideal vector in $\ell_2$ norm. We are hopeful that our stochastic method and our proof will be useful for future research on interior point methods. % In particular, it would be interesting to see how this can be combined with techniques in \cite{c95,ls14} to get a faster algorithm for linear programs with $d \ll n$.

% Besides the applications to linear programs, some of our techniques are probably useful for studying other important problems in convex optimization. In particular, our framework should be naturally extendable to a larger class of convex programs. 

%Besides giving a temporary tight runtime $\dbtilde{O}(n^{\omega})$ for
%linear programs, some
%of our techniques will be useful for studying other important problems
%in convex optimization. In particular, the techniques may be useful for
%obtaining to a nearly linear time algorithm for dense
%linear programs for the case $n\gg d$ by combining with.

\subsection{Related Work}

Interior point method has a long history, for more detailed surveys, we refer the readers to \cite{w97,y97,r01,rtv05,m12,t13}.
This paper is in part inspired by the use of data-structure in Laplacian solvers \cite{st04,kmp10,kmp11,ckmst11,kosz13,ckmpprx14,klpss16,ks16,ckkpprs18,kpsz18}, in particular the cycle update in \cite{kosz13}.

In a few recent follow-ups of this paper, the techniques developed in this work are generalized to a more broad class of optimization problems, i.e., Empirical Risk Minimization \cite{lsz19}, Cutting plane method \cite{jlsw20}, semi-definite programming \cite{jklps20}, deep neural network training \cite{bpsw20,clp+20}. A deterministic variant of our algorithm has been developed \cite{b20}, a sketching variant of our algorithm has been developed \cite{sy20}, a streaming variant has been developed \cite{lsz20}, the additive $1/6$ term has been improved to $1/18$ \cite{jswz20}, and the runtime has been improved to nearly linear time for dense linear programs with $n \gg d$ \cite{blss20}.

Matrix vector multiplication is a subtask of our iterative algorithm for solving linear programs. Online matrix-vector multiplication \cite{lw17,hkns15,ckl18} is closely related to our problem, but usually the computational model is different than our setting.

\section{Results and Techniques}

\begin{theorem}[Main result]\label{thm:main}
Given a linear program $\min_{Ax=b,x\geq0}c^{\top}x$ with no redundant
constraints. Assume that the polytope has diameter $R$ in $\ell_{1}$
norm, namely, for any $x\geq0$ with $Ax=b$, we have $\|x\|_{1}\leq R$.

Then, for any $0<\delta\leq1$, $\textsc{Main}(A,b,c,\delta)$ outputs
$x\geq0$ such that
\begin{align*}
c^{\top}x  \leq\min_{Ax=b,x\geq0}c^{\top}x+\delta\cdot\|c\|_{\infty}R \quad \text{and} \quad
\|Ax-b\|_{1} \leq\delta\cdot\left(R\sum_{i,j} | A_{i,j} |+\|b\|_{1}\right)
\end{align*}
in expected time
\begin{align*}
\left(n^{\omega+o(1)}+n^{2.5-\alpha/2+o(1)}+n^{2+1/6+o(1)}\right)\cdot\log(\frac{n}{\delta})
\end{align*}
where $\omega$ is the exponent of matrix multiplication, $\alpha$
is the dual exponent of matrix multiplication.

For the current value of $\omega \sim 2.38$ and $\alpha \sim 0.31$, the expected
time is simply $n^{\omega+o(1)}\log(\frac{n}{\delta})$.
\end{theorem}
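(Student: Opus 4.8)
The plan is to build the algorithm \textsc{Main} as a stochastic short-step interior point method and to account for its running time via the projection-maintenance data structure advertised in the abstract. I would organize the argument into three layers: (1) a stochastic central path iteration that moves the pair $(x,s)$ while keeping $x_i s_i$ in a suitable (non-$\ell_2$) neighborhood of the central path with constant probability of staying close per step; (2) a potential-function analysis showing that $O^*(\sqrt n)$ such steps suffice to drive the duality gap $x^\top s$ below $\delta^{O(1)}\cdot\mathrm{poly}(n)\cdot\|c\|_\infty R$, after which a standard rounding step converts the near-optimal interior point into an $x\ge 0$ satisfying the stated primal error bounds $c^\top x\le \mathrm{OPT}+\delta\|c\|_\infty R$ and $\|Ax-b\|_1\le\delta(R\sum_{i,j}|A_{i,j}|+\|b\|_1)$; and (3) a running-time accounting in which each of the $O^*(\sqrt n)$ steps updates only $\tilde O(\sqrt n)$ coordinates of the relevant diagonal matrix $W$, so the total number of coordinate updates is $O^*(n)$, and the dominant cost becomes that of maintaining $\sqrt W A^\top(AWA^\top)^{-1}A\sqrt W$ under $\ell_2$-bounded multiplicative changes to $W$.

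First I would set up the central path: introduce a path parameter $t$ (or $\mu$), write the perturbed optimality conditions $x_i s_i = t$, $Ax=b$, $A^\top y+s=c$, and define the Newton step in terms of the projection matrix $P_W = \sqrt W A^\top(AWA^\top)^{-1}A\sqrt W$ with $W = X S^{-1}$ (diagonal). The stochastic twist is that instead of applying the full Newton step, each iteration applies a random sparse update: sample $\tilde O(\sqrt n)$ coordinates (with probabilities tuned to the leverage scores / to the magnitude of the local residual), update only those, and show that the expected move matches the idealized step while the variance is controlled. I would then prove a one-step invariant: if $x_i s_i$ lies in the current neighborhood, then after one stochastic step it still does with high probability, using a martingale/Bernstein argument on the coordinate-wise error; the key quantity to bound is the change in a carefully chosen potential (something like $\sum_i \cosh(\lambda(\log(x_i s_i/t) ))$ or a softmax surrogate) rather than an $\ell_2$ distance, which is exactly the departure from \cite{prt02,m13,m16} emphasized in the introduction.

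Next I would handle the running time. The per-iteration work splits into: (a) forming the random step direction, which costs $\tilde O(n)$ plus the cost of applying $P_W$ to a sparse vector; and (b) updating the maintained projection matrix after the diagonal $W$ changes multiplicatively in $\ell_2$. Invoking the projection-maintenance result (stated earlier in the paper), $k$ steps with total multiplicative $\ell_2$-movement $\sum \|w^{(\mathrm{new})}/w^{(\mathrm{old})} - 1\|_2 \le O^*(\sqrt k)$ can be supported in amortized time that sums to $O^*(n^\omega + n^{2.5-\alpha/2} + n^{2+1/6})$; this is where the three terms in the claimed bound come from — $n^\omega$ for occasional full recomputations, $n^{2.5-\alpha/2}$ for the rectangular-multiplication "batched low-rank update" regime controlled by the dual exponent $\alpha$, and $n^{2+1/6}$ from balancing the update block sizes when $\omega=2$. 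I would also verify that the number of steps times the per-step coordinate count meets the hypothesis of the maintenance lemma, i.e. that the stochastic step sizes really do keep $\sum_t \|\Delta \log w\|_2$ at the $O^*(\sqrt n)$ level; the alternative "$O^*(n)$ steps, $\tilde O(1)$ updates each" regime mentioned in the introduction gives the same total by the same budget, so I would state the main running-time lemma parametrically in the step size and then instantiate.

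The main obstacle I expect is the one-step probabilistic invariant in the non-$\ell_2$ neighborhood: because the step is sparse and random, a naive $\ell_2$ potential blows up (a single badly-sampled coordinate can move $x_i s_i$ by a constant factor), so the whole point is to show that a $\log$-scale / softmax potential contracts in expectation and concentrates, and that the neighborhood it defines is still tight enough that the central path machinery (in particular the bound on how far a valid point can be from optimal) goes through. Coupling this probabilistic control with the deterministic error guarantees of the projection-maintenance data structure — which only returns an \emph{approximate} projection — requires showing the two sources of error (sampling noise and approximation noise) compose additively and are each within the neighborhood budget; making these constants consistent across the iteration bound, the potential, and the maintenance tolerance is the delicate bookkeeping at the heart of the proof. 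The rounding step at the end and the diameter/bit-complexity reductions (scaling $A,b,c$ so the polytope has $\ell_1$-diameter $R$) are standard and I would defer them to lemmas.
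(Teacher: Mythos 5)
Your proposal follows essentially the same route as the paper: a stochastic sparse central-path step whose expectation matches the Newton step, a $\cosh$-type soft-max potential with $\lambda=\Theta(\log n)$ that keeps $x_i s_i$ multiplicatively (not $\ell_2$-) close to $t$ over $\tilde O(\sqrt n)$ iterations, a fallback classical step on the low-probability event that the potential blows up, the reduction of Lemma~\ref{lem:feasible_LP} for initialization and rounding, and the amortized projection-maintenance accounting that yields the three terms $n^{\omega}$, $n^{2.5-\alpha/2}$, and $n^{1.5+a}$ balanced at $a=\min(2/3,\alpha)$. The only cosmetic difference is in how you attribute the $n^{2+1/6}$ term (in the paper it arises from the per-iteration query cost $n^{1+a}$ times the $\tilde O(\sqrt n/\epsilon)$ iteration count, traded off against the $n^{2-a/2}$ update cost), which does not change the argument.
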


See \cite{r88} and \cite[Sec E, F]{lee2013path} on the discussion on converting an approximation solution to an exact solution. For integral $A,b,c$, it suffices to pick $\delta = 2^{-O(L)}$ to get an exact solution where 
$L = \log(1+d_{\max} + \|c\|_\infty + \|b\|_\infty)$ is the bit complexity and $d_{\max}$ is the largest absolute value of the determinant of a square sub-matrix of $A$. For many combinatorial problems, $L=O(\log(n + \|b\|_\infty + \|c\|_\infty))$.

In this paper, we assume all floating point calculations are done exactly for simplicity. In general, the algorithm can be carried out with $O(L)$ bits of accuracy. See \cite{r88} for some discussions on the numerical stability of the interior point methods.

If $T(n)$ is the current cost of matrix multiplication and inversion with $T(n)\sim n^{2.38}$,
our runtime is simply $O(T(n)\log n\log(\frac{n}{\delta}))$.
The $\log(\frac{n}{\delta})$ comes from iteration count and
the $\log n$ factor comes from the doubling trick $( |y_{\pi(1.5r)}|\geq(1-1/\log n)|y_{\pi(r)}| )$ 
in the projection maintenance section. We left the problem of obtaining $O(T(n)\log(\frac{n}{\delta}))$ as an open problem.

Finally, we note that our runtime holds for any square and rectangular matrix multiplication algorithm as long as $\omega \leq 3 - \alpha$ (See Lemma \ref{lem:omega_leq_3_minus_a})\footnote{\cite{cglz20} proved a stronger result $\omega + 0.5\omega \alpha \leq 3$.} For example, Strassen algorithm together with a simple rectangular multiplication algorithm gives a runtime of roughly $n^{2.807}$.

\subsection{Central Path Method}
Our algorithm relies on two new ingredients: stochastic central path and projection maintenance. The central path method considers the linear programs
\begin{align*}
\min_{Ax=b,x\geq0} c^{\top}x \quad \text{(primal)} \quad \text{and} \quad \max_{A^{\top}y \leq c} b^{\top} y \quad \text{(dual)}
\end{align*}
with $A \in \R^{d \times n}$. Any solution of the linear program satisfies
the following optimality conditions:
\begin{align}
\label{eq:KKT}
x_{i}s_{i} & =0\text{ for all }i,\\
Ax & =b,\notag\\
A^{\top}y+s & =c,\notag\\
x_{i},s_{i} & \geq0\text{ for all }i.\notag
\end{align}
We call $(x,s,y)$ feasible if it satisfies the last three equations above.
For any feasible $(x,s,y)$, the duality gap of $(x,s,y)$ is $\sum_{i}x_{i}s_{i}$.
The central path method finds a solution of the linear
program by following the central path which uniformly decrease the duality gap. The central path $(x_{t},s_{t},y_{t})\in\R^{n+n+d}$ is a path parameterized by $t$ and defined by
\begin{align}\label{eq:primal_dual_central_path}
x_{t,i}s_{t,i} & =t\text{ for all }i,\\
Ax_{t} & =b,\notag\\
A^{\top}y_{t}+s_{t} & =c,\notag\\
x_{t,i},s_{t,i} & \geq0\text{ for all }i.\notag
\end{align}
where $x_{t,i}$ is the $i$-th coordinate of $x_t$ and $s_{t,i}$ is the $i$-th coordinate of $s_t$. It is known \cite{ytm94} how to transform linear programs by adding $O(n)$ many variables
and constraints so that:
\begin{itemize}
\item The optimal solution remains the same.
\item The central path at $t=1$ is near $(1_n,1_n,0_d)$ where $1_n$ and $0_d$
are all $1$ and all $0$ vectors with lengths $n$ and $d$.
\item It is easy to convert an approximate solution of the transformed program to the original one.
\end{itemize}
For completeness, a theoretical version of such result is included in Lemma~\ref{lem:feasible_LP}.
This result shows that it suffices to move gradually $(x_{1},s_{1},y_{1})$ to $(x_{t},s_{t},y_{t})$ for small enough $t$.

\subsubsection{Short Step Central Path Method}
The short step central path method maintains $x_{i}s_{i}=\mu_{i}$ for some vector $\mu$ such that
\begin{align}\label{eq:l2_invariant}
\sum_i (\mu_i - t)^2 = O(t^2) \quad \text{ for some scalar }t>0.
\end{align}
Since the duality gap is $\sum_i \mu_i$, it suffices to find $x$ and $s$ satisfying the above equation with small enough $t$. There are many variants of central path methods. We will focus on the version that decreases $t$ and takes a step of $\mu$ at the same time. The purpose of moving $\mu$ is to maintain the invariant (\ref{eq:l2_invariant}) and the purpose of decreasing $t$ is decrease the duality gap, which is roughly $nt$.
One natural way to maintain the invariant (\ref{eq:l2_invariant}) is to do a gradient descent step on the energy $\sum_i (\mu_i - t)^2$ defined in (\ref{eq:l2_invariant}), namely, moving $\mu$ to $\mu - h (\mu-t)$ with step size $h$\footnote{The classical view of central path method is to take a Newton step on the system (\ref{eq:primal_dual_central_path}), which turns out to be same as taking a gradient step on the energy defined in (\ref{eq:l2_invariant}). However, our main algorithm will choose a different energy and this gradient descent view is crucial for designing our algorithm.}. % Compared to other versions, we only take one step instead of multiple steps to move $\mu$ closer to the central path $t$ per update of $t$.

More generally, say we want to move from $\mu$ to
$\mu+\delta_{\mu}$, we approximate
the term $(x+\delta_{x})_i(s+\delta_{s})_i$ by $x_i s_i+x_i\delta_{s,i}+s_i\delta_{x,i}$
and obtain the following system:
\begin{align}
X\delta_{s}+S\delta_{x} & =\delta_{\mu},\notag\\
A\delta_{x} & =0,\label{eq:delta_x_s_y_mu}\\
A^{\top}\delta_{y}+\delta_{s} & =0,\notag
\end{align}
where $X=\diag(x)$ and $S=\diag(s)$. This equation is the linear approximation of the
original goal (moving from $\mu$ to $\mu+\delta_{\mu}$), and
that the step is explicitly given by the formula
\begin{align}\label{eq:d_step}
\delta_{x}=\frac{X}{\sqrt{XS}}(I-P)\frac{1}{\sqrt{XS}}\delta_{\mu}\text{ and }\delta_{s}=\frac{S}{\sqrt{XS}}P\frac{1}{\sqrt{XS}}\delta_{\mu},
\end{align}
where $P=\sqrt{\frac{X}{S}}A^{\top}\left(A\frac{X}{S}A^{\top}\right)^{-1}A\sqrt{\frac{X}{S}}$
is an orthogonal projection and the formulas $\frac{X}{\sqrt{XS}}, \frac{X}{S}, \cdots$ are the diagonal matrices of the corresponding vectors.

In turns out that one can decrease $t$ by $1-\frac{1}{\sqrt{n}}$ a multiplicative factor every iteration while maintaining the invariant (\ref{eq:l2_invariant}). This requires $\tilde{O}(\sqrt{n})$ iterations to converge. Combining this with the inverse maintenance technique \cite{v87}, this gives a total runtime of $n^{2.5}$.
More precisely, the algorithm maintains the invariant $\sum_i (\mu_i - t)^2 = O(t^2)$ by making steps bring $\mu_i$ closer to $t$ while taking steps to decrease $\mu_i$ uniformly. The progress of the whole algorithm is measured by $t$ because the duality gap of $(x_t,s_t,y_t)$ is bounded by $n t$. 

\subsubsection{Stochastic Central Path Method}
This part discusses how to modify the short step central path to decrease the cost per iteration to roughly $n^{\omega-\frac{1}{2}}$. 
Since our goal is to implement a central path method in sub-quadratic time per iteration, we do not even have the budget to compute $Ax$ every iterations. Therefore, instead of maintaining $\left(A\frac{X}{S}A^{\top}\right)^{-1}$ as shown in previous papers, we will study the problem of maintaining
a projection matrix $P=\sqrt{\frac{X}{S}}A^{\top}\left(A\frac{X}{S}A^{\top}\right)^{-1}A\sqrt{\frac{X}{S}}$ due to the formula of $\delta_x$ and $\delta_s$ (\ref{eq:d_step}). 

However, even if the projection matrix $P$ is given explicitly for free, it is difficult
to multiply the dense projection matrix with a dense vector $\delta_\mu$ in time $o(n^{2})$.
To avoid moving along a dense $\delta_{\mu}$, we move along an $O(k)$ sparse direction $\wt{\delta}_{\mu}$ defined by
\begin{align}
\wt{\delta}_{\mu,i}=\begin{cases}
\delta_{\mu,i}/p_{i} , & \text{with probability }p_{i}\defeq k\cdot\left(\frac{\delta_{\mu,i}^{2}}{ \sum_{l}\delta_{\mu,l}^{2} }+\frac{1}{n}\right);\\
0 , & \text{else}.
\end{cases}\label{eq:tilde_delta_mu}
\end{align}
The sparse direction is defined so that we are moving
in the same direction in expectation ($\E [ \wt{\delta}_{\mu,i} ] = \delta_{\mu,i}$)
and that the direction has as small variance as possible ($\E [ \wt{\delta}_{\mu,i}^{2} ] \leq\frac{\sum_{i}\delta_{\mu,i}^{2}}{k}$).
If the projection matrix is given explicitly, we can apply the projection matrix on $\wt{\delta}_{\mu}$ in time $O(nk)$.
This paper picks $k\sim\sqrt{n}$ and the sum of the cost of
projection vector multiplications in the whole algorithm is about $n k^2 = n^{2}$.

During the whole algorithm, we maintain a projection matrix 
$$\overline{P}=\sqrt{\frac{\overline{X}}{\overline{S}}}A^{\top}\left(A\frac{\overline{X}}{\overline{S}}A^{\top}\right)^{-1}A\sqrt{\frac{\overline{X}}{\overline{S}}}$$ for vectors $\overline{x}$ and $\overline{s}$ such that $\overline{x}_{i}$ and $\overline{s}_{i}$ are multiplicative approximations of $x_{i}$ and $s_{i}$ respectively for all $i$. Since we maintain the projection at a nearby point $(\overline{x}, \overline{s})$, our stochastic step $x\leftarrow x+\wt{\delta}_{x}$,
$s\leftarrow s+\wt{\delta}_{s}$ and $y\leftarrow y+ \wt{\delta}_{y}$ are defined by
\begin{align}
\overline{X}\wt{\delta}_{s}+\overline{S}\wt{\delta}_{x} & =\wt{\delta}_{\mu},\notag\\
A \wt{\delta}_{x} & =0,\label{eq:tilde_delta_x_s_y_mu}\\
A^{\top}\wt{\delta}_{y}+ \wt{\delta}_{s} & =0,\notag
\end{align}
which is different from (\ref{eq:delta_x_s_y_mu}) on both sides of the first equation.
Note that this system uses $\overline{X}$ and $\overline{S}$ because we have only maintained this projection matrix. The main goal of Section \ref{sec:stochastic} is to show $\overline{X} = \Theta(X)$ and $\overline{S} = \Theta(S)$ is good enough for our interior point method.
 Similar to (\ref{eq:d_step}), Lemma \ref{lem:stochastic_step_formula} shows that
\begin{align}\label{eq:d_step2}
\wt{\delta}_{x}=\frac{\overline{X}}{\sqrt{\overline{X}\overline{S}}}(I-\overline{P})\frac{1}{\sqrt{\overline{X}\overline{S}}}\wt{\delta}_{\mu}\text{ and }\wt{\delta}_{s}=\frac{\overline{S}}{\sqrt{\overline{X}\overline{S}}}\overline{P}\frac{1}{\sqrt{\overline{X}\overline{S}}}\wt{\delta}_{\mu}.
\end{align}

The previously fastest algorithm involves maintaining the matrix inverse $( A \frac{X}{S} A^\top )^{-1}$ using subspace embedding techniques \cite{s06,cw13,nn13} and leverage score sampling \cite{ss11}. In this paper, we maintain the projection directly using lazy updates.

The key departure from the central path we present is that we can only maintain 
$$0.9 t \leq \mu_i = x_{i}s_{i} \leq 1.1 t \quad \text{ for some }t>0$$
instead of $\mu$ close to $t$ in $\ell_2$ norm. We will
further explain the proof in Section~\ref{sec:stochastic_outline}.

\subsection{Projection Maintenance via Lazy Update}

The projection matrix we maintain is of the form $\sqrt{W}A^{\top}\left(AWA^{\top}\right)^{-1}A\sqrt{W}$
where $W=\diag(x/s)$. For intuition, we only explain how to maintain the matrix $M_{w} \defeq A^{\top}(AWA^{\top})^{-1}A$ for the short step central path step here.
In this case, we have $\sum_{i}\left(\frac{w_{i}^{\new}-w_{i}}{w_{i}}\right)^{2}=O(1)$ for each step. Given this, there are mainly two extreme cases, $w$ changes uniformly on all coordinates and $w$ changes only on a few coordinates.

If the changes $\left(\frac{w_{i}^{\new}-w_{i}}{w_{i}}\right)^{2}$ is uniform across all the coordinates, then $w_{i}^{\new}=(1\pm\frac{1}{\sqrt{n}})w_{i}$ for all $i$.
Since it takes $\sqrt{n}$ steps to change all coordinates
by a constant factor and we only need to maintain $M_{v}$ for some
$v_{i}=\Theta(w_{i})$ for all $i$, we can update the matrix
every $\sqrt{n}$ steps. Hence, the average cost per iteration of maintaining the
projection matrix is $n^{\omega-\frac{1}{2}}$, which is exactly what
we desired. 

%For the other extreme case that the ``adversary'' puts all of his $\ell_{2}$ budget on few coordinates,

For the other extreme case when $w$ changes on only a few coordinates, only $\sqrt{n}$ coordinates 
are changed by a constant factor during all $\sqrt{n}$ iterations. In
this case, instead of updating $M_{w}$ every step, we can compute
$M_{w}h$ online by the Woodbury matrix identity.

\begin{fact}[\cite{woodbury1950inverting}]\label{fact:woodbury}
The Woodbury matrix identity is 
\begin{align*}
( M + U C V )^{-1} = M^{-1} - M^{-1} U ( C^{-1} + V M^{-1} U )^{-1} V M^{-1}.
\end{align*}
\end{fact}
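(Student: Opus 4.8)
The plan is to prove the identity by direct verification. Write $K \defeq C^{-1} + V M^{-1} U$ and let $N \defeq M^{-1} - M^{-1} U K^{-1} V M^{-1}$ be the claimed expression for $(M + UCV)^{-1}$; since $M + UCV$ is square, it suffices to check $(M+UCV)\, N = I$. Throughout I assume $M$, $C$, and $K$ are invertible, which is precisely what makes every term above well-defined.

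First I would isolate the one computation that does all the work, namely the identity $(M + UCV)\, M^{-1} U = U C K$. This follows by factoring $UC$ out on the left: $UC(C^{-1} + V M^{-1} U) = U + U C V M^{-1} U$, which is exactly $(M + UCV)\, M^{-1} U$. Next I would expand
\begin{align*}
(M + UCV)\, N &= (M + UCV)\, M^{-1} - (M + UCV)\, M^{-1} U K^{-1} V M^{-1} \\
&= (I + U C V M^{-1}) - U C K K^{-1} V M^{-1} \\
&= I + U C V M^{-1} - U C V M^{-1} = I,
\end{align*}
which is the claim. If one insists on a two-sided check, the computation $N (M + UCV) = I$ runs identically after exchanging the roles of the left and right factors.

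An alternative, more conceptual derivation avoids guessing $N$: form the block matrix $\begin{pmatrix} M & -U \\ V & C^{-1} \end{pmatrix}$ and compute the $(1,1)$ block of its inverse by a Schur complement in two ways. Eliminating the $C^{-1}$ block first produces $(M + UCV)^{-1}$, while eliminating the $M$ block first produces $M^{-1} - M^{-1} U (C^{-1} + V M^{-1} U)^{-1} V M^{-1}$. Equating the two yields the identity, and this route also explains where the inner inverse $K^{-1}$ comes from.

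There is no genuinely hard step here; the only points requiring care are bookkeeping of the invertibility hypotheses and performing each cancellation on the correct side, since none of $M, U, C, V$ need commute. In the sequel this fact is used only as a black box — with $C$ a scalar or a low-rank correction — to update $M_w h = A^\top (AWA^\top)^{-1} A h$ online when only a few coordinates of $w$ change, so the classical statement above is all that is needed.
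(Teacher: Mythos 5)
Your verification is correct: the key cancellation $(M+UCV)M^{-1}U = UC(C^{-1}+VM^{-1}U)$ is exactly the computation that makes the identity work, and for square matrices the one-sided check $(M+UCV)N=I$ does suffice. The paper states this as a classical Fact with no proof at all (it is used purely as a black box in the projection-maintenance updates, e.g.\ in Lemma~\ref{lem:maintain_projection_update}), so there is no paper argument to compare against; your direct expansion, and the alternative Schur-complement derivation you sketch, are both the standard routes and either would serve as a complete proof.
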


Let $S \subset [n]$ denote the set of coordinates that is changed by more than a constant factor and $r = |S|$.
Using the identity above, we have that
\begin{align} \label{eq:mw_update}
M_{w^{\new}} = M_{w} - (M_w)_S ( \Delta_{S,S}^{-1} + (M_w)_{S,S} )^{-1} ( (M_w)_S)^\top,
\end{align}
where $\Delta=\diag(w^{\new}-w)$, $(M_w)_S\in \R^{n \times r}$ is the $r$ columns from $S$ of $M_w$ and $(M_w)_{S,S},\Delta_{S,S} \in \R^{r \times r}$ are the $r$ rows and columns from $S$ of $M_w$ and $\Delta$.

As long as there are only few coordinates violating $v_{i}=\Theta(w_{i})$, (\ref{eq:mw_update}) can be applied online efficiently. In another case, we can use (\ref{eq:mw_update}) instead to
 update the matrix $M_{w}$ and
the cost is dominated by multiplying a $n\times n$ matrix with a
$n\times r$ matrix. 

\begin{theorem}[Rectangular matrix multiplication, \cite{gu18}]\label{lem:rmw}
Let the dual exponent of matrix multiplication $\alpha$ be the supremum among all $a\geq0$ such that it takes $n^{2+o(1)}$ time to multiply an $n \times n$ matrix by an $n \times n^a$ matrix.

Then, for any $n\geq r$, multiplying an $n \times r$ with an $r \times n$ matrix or $n \times n$ with $n \times r$ takes time
$$n^{2+o(1)}+r^{\frac{\omega-2}{1-\alpha}}n^{2-\frac{\alpha(\omega-2)}{1-\alpha}+o(1)}.$$
Furthermore, we have $\alpha>0.31389$.
\end{theorem}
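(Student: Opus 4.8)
The plan is to reduce the statement to a single quantity, the exponent $\omega(1,1,s)$ of multiplying an $n\times n$ matrix by an $n\times n^{s}$ matrix, and then to obtain the bound by convexly interpolating between the two regimes $s=\alpha$, where the cost is $n^{2+o(1)}$ essentially by the definition of the dual exponent, and $s=1$, where the cost is $n^{\omega+o(1)}$. First I would note that the two shapes appearing in the statement have the same complexity: the matrix multiplication exponent $\omega(a,b,c)$ is invariant under permuting $a,b,c$ (and transposing a product swaps two of the dimensions), so with $r=n^{s}$ both ``$n\times r$ times $r\times n$'' and ``$n\times n$ times $n\times r$'' run in $n^{\omega(1,1,s)+o(1)}$ time. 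It therefore suffices to bound $\omega(1,1,s)$ for $s\in[0,1]$.

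For $s\le\alpha$ I would just pad the inner dimension: enlarge $n^{s}$ up to $n^{\alpha}$ with zero rows and columns, which by the definition of $\alpha$ costs $n^{2+o(1)}$; this already accounts for the first term of the claimed bound (and is optimal up to $n^{o(1)}$ since writing the $n\times n$ output needs $n^{2}$ operations). For $s\in[\alpha,1]$ I would use the standard fact that $s\mapsto\omega(1,1,s)$ is convex; if one wants it self-contained, the usual tensor-product argument does it: tensoring a bilinear algorithm for $m\times m$ by $m\times m^{s_{1}}$ using $m^{\omega_{1}+o(1)}$ multiplications with one for $p\times p$ by $p\times p^{s_{2}}$ using $p^{\omega_{2}+o(1)}$ multiplications yields an algorithm for $mp\times mp$ by $mp\times m^{s_{1}}p^{s_{2}}$ using $m^{\omega_{1}}p^{\omega_{2}+o(1)}$ multiplications; writing $p=m^{c}$ and $N=mp$ this realizes inner dimension $N^{(s_{1}+cs_{2})/(1+c)}$ at exponent $(\omega_{1}+c\omega_{2})/(1+c)$, which over rational $c>0$ (and then all $c$ by continuity) is exactly convexity. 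Feeding in the endpoints $\omega(1,1,\alpha)\le 2+o(1)$ and $\omega(1,1,1)=\omega$ gives, for $s\in[\alpha,1]$,
\[
\omega(1,1,s)\ \le\ \frac{2(1-s)+(s-\alpha)\,\omega}{1-\alpha}+o(1)\ =\ \frac{\omega s-2s-\alpha\omega+2}{1-\alpha}+o(1).
\]

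The last step is pure bookkeeping: substituting $s=\log_{n}r$, the right-hand side exponent equals the exponent of $r^{\frac{\omega-2}{1-\alpha}}\,n^{2-\frac{\alpha(\omega-2)}{1-\alpha}}$ (both reduce to $\frac{\omega s-2s-\alpha\omega+2}{1-\alpha}$); since this exponent is linear in $s$, equals $2$ at $s=\alpha$, and equals $\omega$ at $s=1$, it is $\ge 2$ on $[\alpha,1]$, so the $n^{2+o(1)}$ term is dominated there, whereas for $r<n^{\alpha}$ the $n^{2+o(1)}$ term takes over; adding the two terms thus yields a valid bound for every $r\le n$. Finally, the concrete bound $\alpha>0.31389$ I would simply quote from \cite{gu18}: it is obtained by running the laser method on a high tensor power of the Coppersmith--Winograd construction and numerically optimizing, and it is not something the interpolation above can produce. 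Indeed, that numerical lower bound on $\alpha$ is the genuine obstacle here --- the interpolation itself is elementary, the only mild technical care being to treat the possibility that the supremum defining $\alpha$ is not attained, which I would sidestep by interpolating from $s=a$ for each $a<\alpha$ and letting $a\uparrow\alpha$.
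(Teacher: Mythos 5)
Your proposal is correct and is essentially the paper's own argument in slightly more abstract clothing: the tensor-product proof of convexity of $s\mapsto\omega(1,1,s)$, instantiated at the endpoints $s_1=\alpha$ and $s_2=1$ with $p=m^c$ chosen so the inner dimension equals $r$, is exactly the paper's block decomposition with $k=(n/r)^{1/(1-\alpha)}$ (a $k\times k^{\alpha}$ grid of square $(n/k)\times(n/k)$ blocks), and both treatments quote \cite{gu18} for the small-$r$ case and for $\alpha>0.31389$.
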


See Lemma \ref{lem:rectangular_matrix_multiplication} for the origin of the formula.
Since the cost of multiplying $n\times n$ matrix by a $n\times1$
matrix is same as the cost for $n\times n$ with $n\times n^{0.31}$,
(\ref{eq:mw_update}) should be used to update at least $n^{0.31}$ coordinates. In
the extreme case only few $w_i$ are changing, we only need to update the matrix
$n^{\frac{1}{2}-0.31}$ times during the whole algorithm and each takes $n^{2}$ time, and hence
the total cost is less than $n^{\omega}$ for the current value of $\omega \sim 2.37$.

In previous papers \cite{k84,v89b,na91,nn94,ls14,ls15}, the matrix is updated in a fixed schedule independent of the input sequence $w$. This leads to sub-optimal bounds if used in this paper. We instead define a potential function to measure the distance between the approximate vector $v$ and the target vector $w$. 
When there are less than $n^{\alpha}$
coordinates of $v$ that is far from $w$, we are lazy and do not update the matrix. We simply apply the Woodbury matrix identity online. 
When there are more than $n^{\alpha}$
coordinates, we update $v$ by a certain greedy step. As in the extreme cases, the worst case for our algorithm is when $w$ changes uniformly across all coordinates %the worst case of our algorithm is that the ``adversary'' puts his $\ell_{2}$ budget across all coordinates uniformly 
and hence
the worst case runtime is $n^{\omega-\frac{1}{2}}$ per iteration. We will further explain the potential in Section~\ref{sec:projection_outline}.
\section{Notations}

For notational convenience, we assume the number of variables $n \geq 10$ and there are no redundant constraints. In particular, this implies that the constraint matrix $A$ is full rank and $n\geq d$.

For a positive integer $n$, let $[n]$ denote the set $\{1,2,\cdots,n\}$.

For any function $f$, we define $\wt{O}(f)$ to be $f\cdot \log^{O(1)}(f)$. In addition to $O(\cdot)$ notation, for two functions $f,g$, we use the shorthand $f\lesssim g$ (resp. $\gtrsim$) to indicate that $f\leq C g$ (resp. $\geq$) for some absolute constant $C$.

We use $\sinh x$ to denote $\frac{ e^x - e^{-x} }{2}$ and $\cosh x$ to denote $\frac{ e^x + e^{-x} }{2}$.

For vectors $a,b \in \R^n$ and accuracy parameter $\epsilon \in (0,1)$, we use $a \approx_{\epsilon} b$ to denote that $(1-\epsilon) b_i \leq a_i \leq (1+\epsilon) b_i, \forall i \in [n]$. Similarly, for any scalar $t$, we use $a \approx_{\epsilon} t$ to denote that $ (1-\epsilon) t \leq a_i \leq (1+\epsilon) t, \forall i \in [n]$.

For a vector $x \in \R^n$ and $s \in \R^n$, we use $xs$ to denote a length $n$ vector with the $i$-th coordinate $(xs)_i$ is $x_i \cdot s_i$. Similarly, we extend other scalar operations to vector coordinate-wise.

Given vectors $x, s \in \R^n$, we use $X$ and $S$ to denote the diagonal matrix of those two vectors. We use $\frac{X}{S}$ to denote the diagonal matrix given $(\frac{X}{S})_{i,i} = x_{i} / s_{i}$. Similarly, we extend other scalar operations to diagonal matrix diagonal-wise. Note that matrix $\sqrt{ \frac{X}{S} } A^\top ( A \frac{X}{S} A^\top )^{-1} A \sqrt{ \frac{X}{S} }$ is an orthogonal projection matrix.

\newpage
\section{Stochastic Central Path Method}\label{sec:stochastic}

\subsection{Proof Outline} \label{sec:stochastic_outline}

\begin{figure}[t] 
  \centering
    \includegraphics[width=0.7\textwidth]{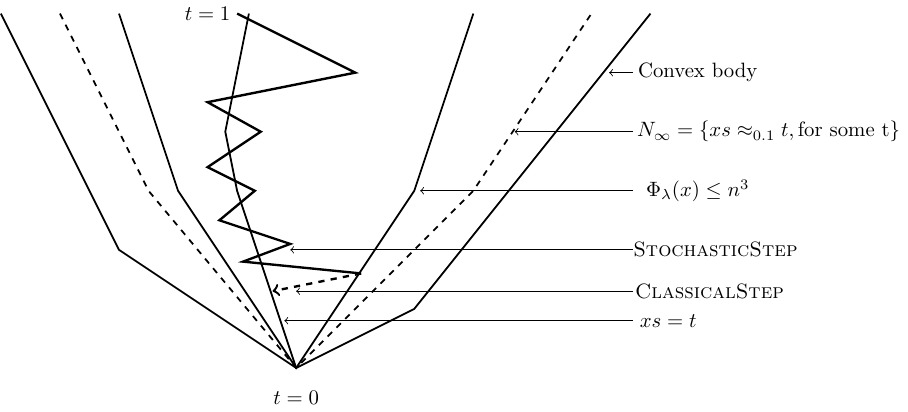}
    \caption{\textsc{ClassicalStep} happens with $n^{-2}$ probability\label{fig:step}}
\end{figure}

\begin{algorithm}[t]\caption{}\label{alg:stochastic_step}
\begin{algorithmic}[1]
\Procedure{\textsc{StochasticStep}}{$\mathrm{mp},x,s,\delta_{\mu},k,\epsilon$} \Comment{Lemma~\ref{lem:stochastic_step_formula},\ref{lem:stochastic_step},\ref{lem:bounding_mu_new_minus_mu}}
	\State $w \leftarrow \frac{x}{s}$, $\wt{v} \leftarrow \mathrm{mp}.\textsc{Update}(w)$ \Comment{Algorithm~\ref{alg:maintain_projection}}
	\State $\ov{x} \leftarrow x \sqrt{ \frac{ \wt{v} }{ w } }$, $\ov{s} \leftarrow s \sqrt{ \frac{ w }{ \wt{v} } }$ \Comment{It guarantees that $\frac{ \ov{x} }{ \ov{s} } = \wt{v}$ and $\ov{x} \ov{s} = x s$ }
	\Repeat
		\State Generate $\wt{\delta}_{\mu}$ such that \Comment{Compute a sparse direction}
		\State $\wt{\delta}_{\mu,i} \leftarrow \begin{cases} \delta_{\mu,i} / p_i , & \text{~with~prob.~} p_i = \min(1,k \cdot ( ( \delta_{\mu,i}^2 / \sum_{l=1}^n \delta_{\mu,l}^2 ) + 1/n ) ); \\ 0 & \text{~else.~} \end{cases}$
		\State \Comment{Compute an approximate step}
		%%%% below is fake comment. :)
		\State {$\triangleright$ Find $(\wt{\delta}_x , \wt{\delta}_s , \wt{\delta}_y )$ such that these three equations hold \begin{align*} 
		\ov{X} \wt{\delta}_s + \ov{S} \wt{\delta}_x = & ~ \wt{\delta}_{\mu}, \\
		A \wt{\delta}_x = & ~ 0, \\
		A^\top \wt{\delta}_y + \wt{\delta}_s = & ~  0.
		 \end{align*} } \label{line:ds_satisfy}
		\State $p_{\mu} \leftarrow \mathrm{mp}.\textsc{Query}( \frac{1}{\sqrt{ \ov{X} \ov{S} } } \wt{\delta}_{\mu})$ \Comment{Algorithm~\ref{alg:maintain_projection}}
		\State $\wt{\delta}_s \leftarrow \frac{ \ov{S} }{ \sqrt{\ov{X} \ov{S} } } p_{\mu}$ \Comment{According to \eqref{eq:def_tilde_delta_s}}
		\State $\wt{\delta}_x \leftarrow \frac{1}{\ov{S}} \wt{\delta}_{\mu} - \frac{\ov{X}}{\sqrt{\ov{X} \ov{S}}} p_{\mu}$ \Comment{According to \eqref{eq:def_tilde_delta_x}}
	\Until { $\|{\overline{s}}^{-1} \wt{\delta}_s \|_{\infty} \leq \frac{1}{100\log{n}}$ and $ \|{\overline{x}}^{-1} \wt{\delta}_x \|_{\infty} \leq \frac{1}{100\log{n}}$ } \label{line:resample}
	\State \Return $( x + \wt{\delta}_x  ,  s + \wt{\delta}_s)$
\EndProcedure
\end{algorithmic}
\end{algorithm}

\begin{algorithm}\caption{Our main algorithm}\label{alg:main}
\begin{algorithmic}[1]
\Procedure{\textsc{Main}}{$A,b,c,\delta$} \Comment{Theorem~\ref{thm:main}}
	\State $\epsilon \leftarrow \frac{1}{40000\log{n}}$, $\epsilon_{\mathrm{mp}} \leftarrow \frac{1}{40000}$, $k \leftarrow \frac{1000 \epsilon \sqrt{n} \log^2 n}{\epsilon_{\mathrm{mp}}}$.
	\State $\lambda \leftarrow 40 \log n$, $\delta \leftarrow \min(\frac{\delta}{2}, \frac{1}{\lambda})$, $a \leftarrow \min(\alpha, 2/3)$.
	\State Modify the linear program and obtain an initial $x$ and $s$ according to Lemma~\ref{lem:feasible_LP}.
	\State $\textsc{MaintainProjection}~\mathrm{mp}$
	\State $\mathrm{mp}.\textsc{Initialize}(A,\frac{x}{s},\epsilon_{\mathrm{mp}}, a)$ \Comment{Algorithm~\ref{alg:maintain_projection}}
	\State $t \leftarrow 1$ \Comment{Initialize $t$}
	\While{$t > \delta^2/(32n^3)$} \Comment{We stop once the error is small enough}
		\State $t^{\new} \leftarrow ( 1 - \frac{ \epsilon }{3 \sqrt{n} } ) t$
		\State $\mu \leftarrow x s$
			\State $\delta_{\mu} \leftarrow ( \frac{t^{\new}}{t} - 1 ) x s - \frac{\epsilon}{2} \cdot t^{\new} \cdot \frac{ \nabla \Phi_{\lambda}( \mu / t - 1 ) }{ \| \nabla \Phi_{\lambda}(  \mu / t  - 1 ) \|_2 } $ \Comment{$\Phi_{\lambda}$ is defined in Lemma~\ref{lem:potential_function_Phi_cosh}}

			\State $(x^{\new} , s^{\new}) \leftarrow \textsc{StochasticStep}(\mathrm{mp},x,s,\delta_{\mu},k,\epsilon)$ \Comment{Algorithm~\ref{alg:stochastic_step}}
		\If{$\Phi_{\lambda} ( \mu^{\new} / t^{\new} - 1 ) > n^{3}$} \Comment{When potential function is large}
			\State $(x^{\new} , s^{\new}) \leftarrow \textsc{ClassicalStep}(x,s,t^{\new})$ \Comment{Lemma~\ref{lem:classical_step}, \cite{v89b}}
           \State $\mathrm{mp}.\textsc{Initialize}(A,\frac{x^{\new}}{s^{\new}},\epsilon_{\mathrm{mp}},a)$ \Comment{Restart the data structure}
		\EndIf
		\State $(x,s) \leftarrow (x^{\new},s^{\new})$, $t \leftarrow t^{\new}$ 
	\EndWhile
    \State Return an approximate solution of the original linear program according to Lemma \ref{lem:feasible_LP}.
\EndProcedure
\end{algorithmic}
\end{algorithm}

The short step central path method is defined using the approximation $(x+\delta_{x})_{i}(s+\delta_{s})_{i}\sim x_{i}s_{i}+x_{i}\delta_{s,i}+s_{i}\delta_{x,i}$.
This approximate is accurate if $\|X^{-1}\delta_{x}\|_{\infty} \leq 1/2$
and $\|S^{-1}\delta_{s}\|_{\infty} \leq 1/2$. For the $\delta_{x}$ step,
we have
\begin{align}\label{eq:stoc_cent_path_explain}
X^{-1}\delta_{x}=\frac{1}{\sqrt{XS}}(I-P)\frac{1}{\sqrt{XS}}\delta_{\mu}\sim\frac{1}{t}(I-P)\delta_{\mu},
\end{align}
where we used $x_{i}s_{i}\sim t$ for all $i$.

If we know that $\|\delta_{\mu}\|_{2}\leq t/4$, then the $\ell_{\infty}$
norm can be roughly bounded as follows:
\begin{align*}
\|X^{-1}\delta_{x}\|_{\infty}\leq\|X^{-1}\delta_{x}\|_{2}\lesssim\frac{1}{t}\|(I-P)\delta_{\mu}\|_{2}\leq\frac{1}{t}\|\delta_{\mu}\|_{2}\leq 1/2,
\end{align*}
where we used that $I-P$ is an orthogonal projection matrix. This
is the reason why a standard choice of $\delta_{\mu,i}$ is $-c t/\sqrt{n}$
for all $i$ for some small constant $c$.

For the stochastic step, $\tilde{\delta}_{\mu,i}\sim-\frac{t}{\sqrt{n}}\frac{n}{k}$
for roughly $k$ coordinates where the term $\frac{n}{k}$ is used to preserve the expectation of the step. Therefore, the $\ell_{2}$ norm of $\tilde{\delta}_{\mu}$
is very large ($\|\tilde{\delta}_{\mu}\|_{2}\sim t\sqrt{\frac{n}{k}}$).
After the projection, we have $\|X^{-1}\delta_{x}\|_{2}\sim\frac{1}{t}\|(I-P)\delta_{\mu}\|_{2}\sim\sqrt{\frac{n}{k}}$.
Hence, the bound of $\|X^{-1}\delta_{x}\|_{\infty}$ using $\|X^{-1}\delta_{x}\|_{2}$ is too weak.
To improve the bound, we use Chernoff bounds to estimate $\|X^{-1}\delta_{x}\|_{\infty}$. To simplify the proof, we use a loop in  Algorithm \ref{alg:stochastic_step} to ensure both the sup norm is always small not just with high probability.

Beside the $\ell_{\infty}$ norm bound, the proof sketch in (\ref{eq:stoc_cent_path_explain}) also
requires using $x_{i}s_{i}\sim t$ for all $i$. The short step central
path proof maintains an invariant that $\sum_{i}(x_{i}s_{i}-t)^{2}=O(t^{2})$.
However, since our stochastic step has a stochastic noise with $\ell_{2}$ norm
as large as $t\sqrt{\frac{n}{k}}$, one cannot hope to maintain $x_{i}s_{i}$
close to $t$ in $\ell_{2}$ norm. Instead, we follow an idea in \cite{ls14,lsw15}  and maintain the following potential

\begin{align*}
\sum_{i=1}^{n} \cosh\left(\lambda\left(\frac{x_{i}s_{i}}{t}-1\right)\right)=n^{O(1)}
\end{align*}
with $\lambda = \Theta(\log n)$. This potential is a variant of soft-max. Note that the potential bounded by
$n^{O(1)}$ implies that $x_{i} s_{i}$ is a multiplicative approximation
of $t$. To bound the potential, consider $r_{i} = \frac{ x_{i} s_{i} }{ t }$ and $\Phi(r)$ be the potential
above. Then, we have that
\begin{align*}
\E [ \Phi(r^{\new}) ] \leq \Phi(r)+\left\langle \nabla\Phi(r),\E [ r^{\new}-r ] \right\rangle + O(1) \E [ \| r^{\new} - r\|_{ \nabla^{2} \Phi(r) }^{2} ].
\end{align*}
The first order term can be bounded efficiently because $\E [ r^{\new}-r ]$
is close to the short step central path step. The second term is a variance
term which scales like $1/k$ due to the $k$ independent coordinates.
Therefore, the potential changed by $1/k\sim 1/\sqrt{n}$ factor each step.
Hence, we can maintain it for roughly $\sqrt{n}$ steps. 

To make sure the potential $\Phi$ is bounded during the whole algorithm, our step is
the mixtures of two steps of the form $\delta_{\mu}\sim-\frac{t}{\sqrt{n}}-t\frac{\nabla\Phi}{\|\nabla\Phi\|_{2}}$.
The first term is to decrease $t$ and the second term is to decrease
$\Phi$. 

Since the algorithm is randomized, there is a tiny probability that $\Phi$ is large. In that case, we switch to a short step central path method. See Figure \ref{fig:step}, Algorithm \ref{alg:stochastic_step}, and Algorithm \ref{alg:main}.  
The first part of the proof involves bounding every quantity listed in Table \ref{tab:notation}.
In the second part, we are using these quantities to bound the expectation
of $\Phi$.

To decouple the proof in both parts, we will make the following assumption in the first part. It will be verified in the second part.
\begin{assumption}\label{ass:x_s_mu}
Assume the following for the input of the procedure $\textsc{StochasticStep}$ (see Algorithm~\ref{alg:stochastic_step}):
\begin{itemize}
\item $xs \approx_{0.1} t$ with $t>0$.
\item $\mathrm{mp}.\textsc{Update}(w)$ outputs $\wt{v}$ such that $w \approx_{\epsilon_{\mathrm{mp}}} \wt{v}$ with $\epsilon_{\mathrm{mp}} \leq 1/40000$.
\item $\|\delta_\mu\|_2 \leq \epsilon t$ with $0<\epsilon<1/(40000 \log n)$.
\item $k \geq 1000 \epsilon \sqrt{n} \log^2{n} / \epsilon_{\mathrm{mp}}$.
\end{itemize}
\end{assumption}
The data structure $\mathrm{mp}$ in both Algorithm \ref{alg:stochastic_step} and Algorithm \ref{alg:main} is used to maintain some approximation of the projection matrix. It is formally defined in Section \ref{sec:projection_maintenance_datastructure}. For this section, the only facts we need is that $w \approx_{\epsilon_{\mathrm{mp}}} \wt{v}$ stated in the assumption and that the vector $\mathrm{mp}.\textsc{Query}(w)$ outputs satisfies line \ref{line:ds_satisfy} in Algorithm \ref{alg:stochastic_step}.
\subsection{Bounding each quantity of stochastic step}
First, we give an explicit formula for our step, which will be used in all subsequent calculations.

\begin{lemma}\label{lem:stochastic_step_formula}
The procedure $\textsc{StochasticStep}(\mathrm{mp}, x, s, \delta_{\mu}, k, \epsilon)$ (see Algorithm~\ref{alg:stochastic_step}) finds a solution $\wt{\delta}_x$, $\wt{\delta}_s \in \R^n$ to \eqref{eq:tilde_delta_x_s_y_mu} by the formula
\begin{align}
\wt{\delta}_x = & ~ \frac{ \ov{X} }{ \sqrt{ \ov{X} \ov{S} } } (I - \ov{P}) \frac{1}{ \sqrt{ \ov{X} \ov{S} } } \wt{\delta}_{\mu} \label{eq:def_tilde_delta_s} \\
\wt{\delta}_s = & ~ \frac{ \ov{S} }{ \sqrt{ \ov{X} \ov{S} } } \ov{P} \frac{1}{ \sqrt{ \ov{X} \ov{S} } } \wt{\delta}_{\mu} \label{eq:def_tilde_delta_x}
\end{align}
with
\begin{align}\label{eq:def_ov_P}
\ov{P} = \sqrt{ \frac{ \ov{X} }{ \ov{S} } } A^\top \left( A \frac{ \ov{X} }{ \ov{S} } A^\top \right)^{-1} A \sqrt{ \frac{ \ov{X} }{ \ov{S} } }.
\end{align}
%for some vector $\wt{\delta}_{\mu} \in \R^n$ with sparsity $\E [ \| \wt{\delta}_{\mu} \|_0 ] \leq 3k$.
\end{lemma}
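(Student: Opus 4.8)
The plan is to solve the linear system \eqref{eq:tilde_delta_x_s_y_mu} directly by elimination, exactly as one derives \eqref{eq:d_step} from \eqref{eq:delta_x_s_y_mu}, but now with $\ov{X},\ov{S}$ in place of $X,S$. First I would use the third equation $\wt\delta_s = -A^\top \wt\delta_y$ to eliminate $\wt\delta_s$. Substituting into the first equation gives $\ov{S}\wt\delta_x = \wt\delta_\mu + \ov{X} A^\top \wt\delta_y$, hence $\wt\delta_x = \ov{S}^{-1}\wt\delta_\mu + \ov{S}^{-1}\ov{X} A^\top \wt\delta_y$. Imposing the second equation $A\wt\delta_x = 0$ then yields $A\ov{S}^{-1}\ov{X}A^\top \wt\delta_y = -A\ov{S}^{-1}\wt\delta_\mu$, so
\begin{align*}
\wt\delta_y = -\left(A\tfrac{\ov{X}}{\ov{S}}A^\top\right)^{-1} A\ov{S}^{-1}\wt\delta_\mu.
\end{align*}
Here I use that $A$ is full rank and $\ov{X}/\ov{S}$ is a positive diagonal matrix (since $\ov x,\ov s>0$, which holds because $x,s>0$ and $\wt v>0$), so the matrix $A\frac{\ov X}{\ov S}A^\top$ is invertible.

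Next I would back-substitute to get closed forms. From $\wt\delta_s = -A^\top\wt\delta_y$ we get $\wt\delta_s = A^\top\left(A\frac{\ov X}{\ov S}A^\top\right)^{-1}A\ov S^{-1}\wt\delta_\mu$. The remaining step is purely algebraic: insert $\sqrt{\ov X\ov S}\cdot\frac{1}{\sqrt{\ov X\ov S}}$ factors to expose the projection $\ov P$ of \eqref{eq:def_ov_P}. Writing $\ov S^{-1} = \frac{1}{\sqrt{\ov X\ov S}}\cdot\sqrt{\frac{\ov X}{\ov S}}\cdot\frac{1}{\sqrt{\ov X}}$ — more cleanly, $A\ov S^{-1} = A\sqrt{\frac{\ov X}{\ov S}}\cdot\frac{1}{\sqrt{\ov X\ov S}}$ and $A^\top = \sqrt{\frac{\ov X}{\ov S}}A^\top\cdot(\text{left factor})$ — one recognizes $\ov S^{-1}\cdot A^\top\left(A\frac{\ov X}{\ov S}A^\top\right)^{-1}A\cdot\ov S^{-1}\wt\delta_\mu$ reassembles, after multiplying through by $\frac{\ov S}{\sqrt{\ov X\ov S}}$ on the left, into $\frac{\ov S}{\sqrt{\ov X\ov S}}\ov P\frac{1}{\sqrt{\ov X\ov S}}\wt\delta_\mu$, giving \eqref{eq:def_tilde_delta_x}. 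For $\wt\delta_x$, plug $\wt\delta_s$ back into $\ov X\wt\delta_s + \ov S\wt\delta_x = \wt\delta_\mu$ to obtain $\wt\delta_x = \ov S^{-1}\wt\delta_\mu - \ov S^{-1}\ov X\wt\delta_s$, and the same factoring shows $\ov S^{-1}\wt\delta_\mu = \frac{\ov X}{\sqrt{\ov X\ov S}}\cdot\frac{1}{\sqrt{\ov X\ov S}}\wt\delta_\mu$ while $\ov S^{-1}\ov X\wt\delta_s = \frac{\ov X}{\sqrt{\ov X\ov S}}\ov P\frac{1}{\sqrt{\ov X\ov S}}\wt\delta_\mu$, so the difference is exactly $\frac{\ov X}{\sqrt{\ov X\ov S}}(I-\ov P)\frac{1}{\sqrt{\ov X\ov S}}\wt\delta_\mu$, which is \eqref{eq:def_tilde_delta_s}.

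Finally I would check that these formulas are what the algorithm actually computes: line~10 sets $p_\mu = \ov P\cdot\frac{1}{\sqrt{\ov X\ov S}}\wt\delta_\mu$ (the query returns the projection applied to its argument), line~11 sets $\wt\delta_s = \frac{\ov S}{\sqrt{\ov X\ov S}}p_\mu$, matching \eqref{eq:def_tilde_delta_x}, and line~12 sets $\wt\delta_x = \frac{1}{\ov S}\wt\delta_\mu - \frac{\ov X}{\sqrt{\ov X\ov S}}p_\mu$, which by the identity above equals \eqref{eq:def_tilde_delta_s}; one then verifies directly that the triple $(\wt\delta_x,\wt\delta_s,\wt\delta_y)$ with $\wt\delta_y$ as derived satisfies all three equations of \eqref{eq:tilde_delta_x_s_y_mu}. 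There is no real obstacle here — the only thing to be careful about is the bookkeeping of the diagonal scalings $\frac{\ov X}{\sqrt{\ov X\ov S}}$, $\frac{1}{\sqrt{\ov X\ov S}}$, $\sqrt{\frac{\ov X}{\ov S}}$ and the fact that all these diagonal matrices commute, so that the reassembly into $\ov P$ is legitimate; I would state that commutativity once at the start and then let the algebra go through.
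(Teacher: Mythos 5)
Your proposal is correct and follows essentially the same route as the paper: eliminate variables from the system \eqref{eq:tilde_delta_x_s_y_mu} to obtain $\wt{\delta}_y = -(A\frac{\ov{X}}{\ov{S}}A^\top)^{-1}A\ov{S}^{-1}\wt{\delta}_\mu$, back-substitute for $\wt{\delta}_s$ and $\wt{\delta}_x$, and regroup the diagonal factors to expose $\ov{P}$. The only difference is the (immaterial) order of elimination, and your explicit remarks on invertibility and commutativity of the diagonal scalings are fine additions.
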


\begin{table}[t]
\begin{center}
    \begin{tabular}{| l | l | l | l | l | l |}
    \hline
    Quantity & Bound   & Place   \\ \hline %
    $\| \E [ s^{-1} \wt{\delta}_{s} ] \|_2, \| \E [ x^{-1} \wt{\delta}_{x} ] \|_2, \| \E [ \mu^{-1} \wt{\delta}_{\mu} ] \|_2 $ & $O(\epsilon)$ & Part 1, Lemma~\ref{lem:stochastic_step} \\ \hline 
    $\| \E [ \mu^{-1} (\mu^{\new} - \mu - \wt{\delta}_{\mu}) ] \|_2$ & $O(\epsilon_{\mathrm{mp}} \cdot \epsilon)$ & Part 1, Lemma~\ref{lem:bounding_mu_new_minus_mu} \\ \hline
    $\| \E [ \mu^{-1} ( \mu^{\new} - \mu ) ] \|_2$ & $O(\epsilon)$ & Part 1, Lemma~\ref{lem:bounding_mu_new_minus_mu} \\ \hline
    $\Var[ s_i^{-1} \wt{\delta}_{s,i} ], \Var[ x_i^{-1} \wt{\delta}_{x,i} ], \Var[ \mu_i^{-1} \wt{\delta}_{\mu,i} ]$ & $O(\epsilon^2/k)$ & Part 2, Lemma~\ref{lem:stochastic_step} \\ \hline
    $\Var[ \mu_i^{-1} \mu^{\new} ]$ & $O(\epsilon^2/k)$ & Part 2, Lemma~\ref{lem:bounding_mu_new_minus_mu} \\ \hline
    $\| s^{-1} \wt{\delta}_{s} \|_{\infty}, \| x^{-1} \wt{\delta}_{x} \|_{\infty}, \| \mu^{-1} \wt{\delta}_{\mu} \|_{\infty}$ & $O(1/\log{n})$ & Part 3, Lemma~\ref{lem:stochastic_step} \\ \hline
    $\| \mu^{-1} ( \mu^{\new} - \mu ) \|_{\infty}$ & $O(1/\log{n})$ & Part 3, Lemma~\ref{lem:bounding_mu_new_minus_mu} \\ \hline
    \end{tabular}\caption{The bound of each quantity under Assumption~\ref{ass:x_s_mu}. For intuition, think $\epsilon \sim \epsilon_{\mathrm{mp}} \sim 1/10$ and $k \sim \sqrt{n}$.}\label{tab:notation}
\end{center}
\end{table}

\begin{proof}
%The sparsity follows from the formula of $\wt{\delta}_{\mu}$. Now, we prove that the $\wt{\delta}_x$ in  \eqref{eq:def_tilde_delta_s} and $\wt{\delta}_s$ in \eqref{eq:def_tilde_delta_x} provide a solution to  \eqref{eq:tilde_delta_x_s_y_mu}.

For the first equation of  \eqref{eq:tilde_delta_x_s_y_mu}, we multiply $ A \ov{S}^{-1}$ on both sides,
\begin{align*}
A \ov{S}^{-1} \ov{X} \wt{\delta}_s + A \wt{\delta}_x = A \ov{S}^{-1} \wt{\delta}_{\mu}.
\end{align*}
Since the second equation gives $A \wt{\delta}_x = 0$, then we know that $A \ov{S}^{-1} \ov{X} \wt{\delta}_s = A \ov{S}^{-1} \wt{\delta}_{\mu}$.

Multiplying $A \ov{S}^{-1} \ov{X}$ on both sides of the third equation of  \eqref{eq:tilde_delta_x_s_y_mu}, we have
\begin{align*}
- A \ov{S}^{-1} \ov{X} A^\top \wt{\delta}_y = A \ov{S}^{-1} \ov{X} \wt{\delta}_s = A \ov{S}^{-1} \wt{\delta}_{\mu}. 
\end{align*}
Thus, 
\begin{align*}
\wt{\delta}_y = & ~ - (A \ov{S}^{-1} \ov{X} A^\top)^{-1} A \ov{S}^{-1} \wt{\delta}_{\mu}, \\
\wt{\delta}_s = & ~ A^\top (A \ov{S}^{-1} \ov{X} A^\top)^{-1} A \ov{S}^{-1} \wt{\delta}_{\mu},  \\
\wt{\delta}_x = & ~ \ov{S}^{-1} \wt{\delta}_{\mu} - \ov{S}^{-1} \ov{X} A^\top ( A \ov{S}^{-1} \ov{X} A^\top )^{-1} A \ov{S}^{-1} \wt{\delta}_{\mu}.
\end{align*}
Recall we define $\ov{P}$ as  \eqref{eq:def_ov_P}, then we have
\begin{align*}
\wt{\delta}_s = & ~ \frac{ \ov{S} }{ \sqrt{ \ov{X} \ov{S} } } \cdot \sqrt{ \frac{ \ov{X} }{ \ov{S} } } A^\top (A \frac{ \ov{X} }{ \ov{S} } A^\top )^{-1} \sqrt{ \frac{ \ov{X} }{ \ov{S} } } \cdot \frac{1}{ \sqrt{ \ov{X} \ov{S} } } \wt{\delta}_{\mu} 
= \frac{ \ov{S} }{ \sqrt{ \ov{X} \ov{S} } } \ov{P} \frac{1}{ \sqrt{ \ov{X} \ov{S} } } \wt{\delta}_{\mu},
\end{align*}
and
\begin{align*}
\wt{\delta}_x = & ~ \ov{S}^{-1} \wt{\delta}_{\mu} - \frac{ \ov{X} } { \sqrt{ \ov{X} \ov{S} } } \cdot \sqrt{ \frac{ \ov{X} }{ \ov{S} } } A^\top (A \frac{ \ov{X} }{ \ov{S} } A^\top )^{-1} \sqrt{ \frac{ \ov{X} }{ \ov{S} } } \cdot \frac{1}{ \sqrt{ \ov{X} \ov{S} } } \wt{\delta}_{\mu} 
=  \frac{ \ov{X} } { \sqrt{ \ov{X} \ov{S} } }  (I - \ov{P}) \frac{ 1 } { \sqrt{ \ov{X} \ov{S} } } \wt{\delta}_{\mu}.
\end{align*}
which are matching  \eqref{eq:def_tilde_delta_s} and \eqref{eq:def_tilde_delta_x}.

To see why the $\textsc{StochasticStep}$ outputs $\wt{\delta}_x$, $\wt{\delta}_s$ satisfying \eqref{eq:def_tilde_delta_s} and \eqref{eq:def_tilde_delta_x}, we note that  $$p_{\mu}=\sqrt{\widetilde{V}}A^{\top}\left(A\frac{\overline{X}}{\overline{S}}A^{\top}\right)^{-1}A\sqrt{\widetilde{V}}\frac{1}{\sqrt{\overline{X}\overline{S}}}\widetilde{\delta}_{\mu}=\overline{P}\frac{1}{\sqrt{\overline{X}\overline{S}}}\widetilde{\delta}_{\mu}$$ because of Theorem~\ref{thm:maintain_projection}.
\end{proof}
Using the explicit formula, we are ready to bound all quantities we needed in the following two subsubsections.

\subsubsection{Bounding $\wt{\delta}_s$, $\wt{\delta}_x$ and $\wt{\delta}_\mu$}
\begin{lemma}[]\label{lem:stochastic_step}
Under the Assumption~\ref{ass:x_s_mu}, the two vectors $\wt{\delta}_x$ and $\wt{\delta}_s$ found by $\textsc{StochasticStep}$ satisfy : \\
1. $\| \E[ \ov{s}^{-1} \wt{\delta}_s ]  \|_2 \leq 2 \epsilon , \| \E[ \ov{x}^{-1} \wt{\delta}_x ] \|_2 \leq 2 \epsilon, \| \E[ s^{-1} \wt{\delta}_s ]  \|_2 \leq 2 \epsilon , \| \E[ x^{-1} \wt{\delta}_x ] \|_2 \leq 2 \epsilon,
 \| \E [ \mu^{-1} \wt{\delta}_{\mu} ] \|_2 \leq 4 \epsilon $. \\
2.  $\Var[ \frac{\wt{\delta}_{s,i}}{\ov{s}_i}] \leq \frac{2 \epsilon^2}{k}, \Var[\frac{\wt{\delta}_{x,i}}{\ov{x}_i} ] \leq \frac{2 \epsilon^2}{k}, \Var[ \frac{\wt{\delta}_{s,i}}{s_i}] \leq \frac{2 \epsilon^2}{k}, \Var[\frac{\wt{\delta}_{x,i}}{x_i}] \leq \frac{2 \epsilon^2}{k}, \Var[ \frac{\wt{\delta}_{\mu,i}}{\mu_i} ] \leq \frac{8 \epsilon^2}{k}$. \\
3. $\|\overline{s}^{-1}\wt{\delta}_{s}\|_{\infty}\leq \frac{0.01}{\log{n}},\|s^{-1}\wt{\delta}_{s}\|_{\infty}\leq \frac{0.02}{ \log{n}},\|\overline{x}^{-1}\wt{\delta}_{x}\|_{\infty}\leq \frac{0.01}{ \log{n}},\|x^{-1}\wt{\delta}_{x}\|_{\infty}\leq \frac{0.02}{\log{n}},\|\mu^{-1}\wt{\delta}_{\mu}\|_{\infty}\leq \frac{0.02}{\log{n}}$.
\end{lemma}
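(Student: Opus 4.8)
\textbf{Proof proposal for Lemma~\ref{lem:stochastic_step}.}

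The plan is to work directly from the explicit formulas \eqref{eq:def_tilde_delta_s}, \eqref{eq:def_tilde_delta_x} supplied by Lemma~\ref{lem:stochastic_step_formula}, exploiting the fact that $\ov P$ and $I-\ov P$ are orthogonal projections, together with the two key defining properties of the sparsified direction $\wt\delta_\mu$ from \eqref{eq:tilde_delta_mu}: unbiasedness $\E[\wt\delta_{\mu,i}]=\delta_{\mu,i}$ and the second-moment bound $\E[\wt\delta_{\mu,i}^2]\le \|\delta_\mu\|_2^2/k + \delta_{\mu,i}^2$ (the $+1/n$ term in $p_i$ guarantees $p_i\ge k/n$, hence $\E[\wt\delta_{\mu,i}^2]=\delta_{\mu,i}^2/p_i\le \|\delta_\mu\|_2^2/k$, modulo the truncation at $p_i=1$, which only helps). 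Throughout I will convert back and forth between the ``barred'' normalization ($\ov x,\ov s$) and the true one ($x,s$) using $\ov x\ov s=xs$ and $\ov x/\ov s=\wt v\approx_{\epsilon_{mp}} w=x/s$, which gives $\ov x\approx x$, $\ov s\approx s$ up to a $(1\pm O(\epsilon_{mp}))$ factor; this is where the four-way restatement of each bound (barred vs.\ unbarred) costs only a constant.

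For Part 1 (the $\ell_2$ norms of the expectations): since $\ov P$ is linear, $\E[\wt\delta_s] = \frac{\ov S}{\sqrt{\ov X\ov S}}\ov P\frac{1}{\sqrt{\ov X\ov S}}\delta_\mu$, so $\ov s^{-1}\E[\wt\delta_s] = \frac{1}{\ov X\ov S}\ov P\frac{1}{\sqrt{\ov X\ov S}}\delta_\mu \cdot \sqrt{\ov X\ov S}$... more cleanly, $\|\ov s^{-1}\E[\wt\delta_s]\|_2 = \|(\ov X\ov S)^{-1/2}\ov P(\ov X\ov S)^{-1/2}\delta_\mu\|_2 \le \|(\ov X\ov S)^{-1/2}\|_{op}\cdot\|\ov P\|_{op}\cdot\|(\ov X\ov S)^{-1/2}\delta_\mu\|_2 \le \frac{1}{\sqrt{0.9t}}\cdot 1 \cdot\frac{1}{\sqrt{0.9t}}\|\delta_\mu\|_2 \le \frac{1}{0.9t}\cdot\epsilon t \le 2\epsilon$, using $\ov x\ov s=xs\approx_{0.1}t$ and $\|\delta_\mu\|_2\le\epsilon t$ from Assumption~\ref{ass:x_s_mu}. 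The $\wt\delta_x$ bound is identical with $I-\ov P$ in place of $\ov P$. For $\mu^{-1}\wt\delta_\mu$: $\E[\mu^{-1}\wt\delta_\mu]=\mu^{-1}\delta_\mu$ so $\|\E[\mu^{-1}\wt\delta_\mu]\|_2\le\frac{1}{0.9t}\|\delta_\mu\|_2\le 2\epsilon$, and the factor $4$ absorbs slack (or uses $\mu\ge 0.9t$ more loosely). The unbarred versions follow since $\ov s\approx_{2\epsilon_{mp}} s$ etc.

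For Part 2 (coordinatewise variance): the trick is that $\wt\delta_s$ depends on \emph{all} coordinates of $\wt\delta_\mu$ through $\ov P$, so I write $\frac{\wt\delta_{s,i}}{\ov s_i} = \sum_j M_{ij}\wt\delta_{\mu,j}$ for a fixed matrix $M=\ov S^{-1}\cdot\frac{\ov S}{\sqrt{\ov X\ov S}}\ov P\frac{1}{\sqrt{\ov X\ov S}} = (\ov X\ov S)^{-1/2}\ov P(\ov X\ov S)^{-1/2}$, whose $i$-th row $M_{i,:}$ has $\|M_{i,:}\|_2 \le \|e_i^\top(\ov X\ov S)^{-1/2}\ov P\|_2\cdot\|(\ov X\ov S)^{-1/2}\|_{op}\le \frac{1}{\sqrt{0.9t}}\|\ov P e_i/\sqrt{(\ov X\ov S)_{ii}}\|_2$... the cleanest bound: $\|M_{i,:}\|_2\le \frac{1}{(0.9t)}$ since $\|\ov P\|_{op}\le 1$ and both diagonal factors are $\ge\sqrt{0.9t}$. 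Because the $\wt\delta_{\mu,j}$ are independent across $j$, $\Var[\sum_j M_{ij}\wt\delta_{\mu,j}] = \sum_j M_{ij}^2\Var[\wt\delta_{\mu,j}] \le \sum_j M_{ij}^2\,\E[\wt\delta_{\mu,j}^2] \le (\max_j \E[\wt\delta_{\mu,j}^2]/1)\cdot$... no — better: $\le \frac{\|\delta_\mu\|_2^2}{k}\sum_j M_{ij}^2 \le \frac{\epsilon^2 t^2}{k}\cdot\frac{1}{(0.9t)^2}\le\frac{2\epsilon^2}{k}$, using $\E[\wt\delta_{\mu,j}^2]\le\|\delta_\mu\|_2^2/k$ for every $j$. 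For $\wt\delta_x$ replace $M$ by $\ov X^{-1}\cdot\frac{\ov X}{\sqrt{\ov X\ov S}}(I-\ov P)\frac{1}{\sqrt{\ov X\ov S}}=(\ov X\ov S)^{-1/2}(I-\ov P)(\ov X\ov S)^{-1/2}$ and proceed identically. For $\Var[\wt\delta_{\mu,i}/\mu_i]$, this is a single-coordinate quantity: $\Var[\wt\delta_{\mu,i}]\le\E[\wt\delta_{\mu,i}^2]\le\|\delta_\mu\|_2^2/k\le\epsilon^2t^2/k$, divided by $\mu_i^2\ge(0.9t)^2$, gives $\le 2\epsilon^2/k$ (the stated $8\epsilon^2/k$ is generous). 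The unbarred-to-barred conversion costs another constant factor, consistent with the stated bounds.

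For Part 3 (the $\ell_\infty$ bounds) — which I expect to be the main obstacle — I would use a Bernstein/Chernoff argument on the sum $\frac{\wt\delta_{s,i}}{\ov s_i}=\sum_j M_{ij}\wt\delta_{\mu,j}$. Each summand is bounded: $|M_{ij}\wt\delta_{\mu,j}|\le |M_{ij}|\cdot|\delta_{\mu,j}|/p_j$, and since $p_j\ge k\delta_{\mu,j}^2/\|\delta_\mu\|_2^2$ we get $|\delta_{\mu,j}|/p_j\le \|\delta_\mu\|_2^2/(k|\delta_{\mu,j}|)$, which is the awkward case — but also $p_j\ge k/n$ so $|\delta_{\mu,j}|/p_j \le n|\delta_{\mu,j}|/k$; combining, $|\delta_{\mu,j}|/p_j \le \min(n|\delta_{\mu,j}|/k,\ \|\delta_\mu\|_2^2/(k|\delta_{\mu,j}|))\le \|\delta_\mu\|_2/\sqrt{k}$ (geometric mean of the two bounds, valid when the truncation at $p_j=1$ is not active; when it is active $\wt\delta_{\mu,j}=\delta_{\mu,j}$ and the bound is even easier). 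So the per-term magnitude is $\lesssim |M_{ij}|\|\delta_\mu\|_2/\sqrt k \le \frac{1}{0.9t}\cdot\frac{\epsilon t}{\sqrt k}$, and with $\sum_j M_{ij}^2 \le 1/(0.9t)^2$ the variance proxy is $\le \epsilon^2t^2/k\cdot 1/(0.9t)^2$. Bernstein then gives, for a single $i$, $\Pr[|\sum_j M_{ij}\wt\delta_{\mu,j} - M_{i,:}\delta_\mu|>\tau]\le 2\exp(-\Omega(\min(\tau^2 k/\epsilon^2,\ \tau\sqrt k/\epsilon)))$; choosing $\tau=\Theta(1/\log n)$ and using $k\ge 1000\epsilon\sqrt n\log^2 n/\epsilon_{mp} \gg \epsilon^2\log^3 n$ makes the exponent $\ge C\log n$, so a union bound over $i\in[n]$ (and over the $O(1)$ resampling attempts, controlled by the \textbf{Until} loop terminating quickly w.h.p.) keeps the deviation below $0.01/\log n$; adding the mean term $|M_{i,:}\delta_\mu|\le \|M_{i,:}\|_2\|\delta_\mu\|_2\le 2\epsilon\ll 0.01/\log n$ yields the claim. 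The $\wt\delta_x$ case is identical with $I-\ov P$; the $\mu^{-1}\wt\delta_\mu$ case is the elementary single-coordinate bound $|\wt\delta_{\mu,i}/\mu_i|\le (\|\delta_\mu\|_2/\sqrt k)/(0.9t)\le 2\epsilon/(0.9\sqrt k)\cdot(1/\sqrt k\cdot\sqrt k)$... directly $\le \|\delta_\mu\|_2/(0.9t\sqrt k)\le \epsilon/(0.9\sqrt k)$, which is $\ll 0.02/\log n$ by the lower bound on $k$. The delicate points are (i) the $\min$-of-two-bounds estimate $|\delta_{\mu,j}|/p_j\le\|\delta_\mu\|_2/\sqrt k$ that makes Bernstein applicable, and (ii) ensuring the union bound over coordinates and resampling rounds does not erode the $1/\log n$ threshold, which is exactly why $k$ is taken polylogarithmically larger than $\sqrt n$.
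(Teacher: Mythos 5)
Your proposal follows the same route as the paper's proof: the explicit projection formulas from Lemma~\ref{lem:stochastic_step_formula}, the row bound $\sum_j \ov{P}_{i,j}^2 = \ov{P}_{i,i}\le 1$ together with $\ov{x}\ov{s}=xs\approx_{0.1}t$ for Parts 1 and 2 (using independence of the coordinates of $\wt{\delta}_\mu$ for the variance), and a Bernstein-plus-union-bound argument for Part 3. Parts 1 and 2 are correct, and your direct treatment of the $\mu^{-1}\wt{\delta}_\mu$ bounds via unbiasedness and $\Var[\wt{\delta}_{\mu,i}]\le\|\delta_\mu\|_2^2/k$ is if anything slightly cleaner than the paper's detour through $\wt{\delta}_s+\wt{\delta}_x$.

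The one genuine error is the per-summand bound in Part 3. You claim $|\delta_{\mu,j}|/p_j\le\min\left(n|\delta_{\mu,j}|/k,\ \|\delta_\mu\|_2^2/(k|\delta_{\mu,j}|)\right)\le\|\delta_\mu\|_2/\sqrt{k}$ ``by the geometric mean,'' but the geometric mean of those two quantities is $\sqrt{n}\,\|\delta_\mu\|_2/k$, not $\|\delta_\mu\|_2/\sqrt{k}$; these agree only when $k=n$, and for $k=\Theta(\sqrt{n})$ your stated bound is too strong by a factor of about $n^{1/4}$ (e.g.\ when all $|\delta_{\mu,j}|$ are equal, $|\delta_{\mu,j}|/p_j=\Theta(\sqrt{n}\|\delta_\mu\|_2/k)$). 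This is exactly the bound the paper obtains via $a^2+b^2\ge 2ab$, namely $M=O(\epsilon\sqrt{n}/k)$ after normalizing by $t$. The slip propagates into your Bernstein exponent, where the term $\tau\sqrt{k}/\epsilon$ should read $\tau k/(\epsilon\sqrt{n})$; fortunately the corrected exponent is still $\Omega(\log^2 n/\epsilon_{mp})$ under the hypothesis $k\ge 1000\epsilon\sqrt{n}\log^2 n/\epsilon_{mp}$ (this is precisely why $k$ carries the extra $\log^2 n/\epsilon_{mp}$ factor), so the union bound over the $n$ coordinates and the conclusion of Part 3 survive once the constant is fixed. The same correction is needed in your direct bound on $|\mu_i^{-1}\wt{\delta}_{\mu,i}|$, which becomes $O(\epsilon\sqrt{n}/k)\le O(\epsilon_{mp}/\log^2 n)$, still well below $0.02/\log n$.
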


\begin{remark}
For notational simplicity, the $\E$ and $\Var$ in the proof are for the case without resampling (Line \ref{line:resample}). Since the all the additional terms due to resampling are polynomially bounded and since we can set failure probability to an arbitrarily small inverse polynomial (see Claim~\ref{claim:prob}), if we took into account the extra variance from resampling, the proof does not change and the result remains the same.
\end{remark}

\begin{proof}

%%%%%%%%%%%%%%%%%%%%%%%%%%%%%%%%%%%%%%%%%%%%%%%%%%%%%%%%%%%%%%%%%%%%%%%%%%%%%%%%%%%%%
\begin{claim}[Part 1, bounding the $\ell_2$ norm of expectation]\label{cla:bounding_l2_norm_expectation}
\begin{align*}
\| \E[ \ov{s}^{-1} \wt{\delta}_s ]  \|_2 \leq 2 \epsilon , \| \E[ \ov{x}^{-1} \wt{\delta}_x ] \|_2 \leq 2 \epsilon, \| \E[ s^{-1} \wt{\delta}_s ]  \|_2 \leq 2 \epsilon , \| \E[ x^{-1} \wt{\delta}_x ] \|_2 \leq 2 \epsilon,
 \| \E [ \mu^{-1} \wt{\delta}_{\mu} ] \|_2 \leq 4 \epsilon .
 \end{align*}
\end{claim}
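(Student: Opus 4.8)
The plan is to reduce the claim to the deterministic formulas of Lemma~\ref{lem:stochastic_step_formula} together with the single observation that the sparsified direction is unbiased. From the sampling rule in Algorithm~\ref{alg:stochastic_step}, for every coordinate $i$ one has $\E[\wt\delta_{\mu,i}] = p_i\cdot(\delta_{\mu,i}/p_i)+(1-p_i)\cdot 0 = \delta_{\mu,i}$ (the case $\delta_\mu=0$ being trivial), so $\E[\wt\delta_\mu]=\delta_\mu$. The vectors $\ov x,\ov s$, and hence $\ov P$ in \eqref{eq:def_ov_P}, are fixed before the \textbf{repeat} loop, so the expectation is taken over the sparsification of $\delta_\mu$ only; since the maps $\wt\delta_\mu\mapsto\wt\delta_s$ and $\wt\delta_\mu\mapsto\wt\delta_x$ given by \eqref{eq:def_tilde_delta_s}--\eqref{eq:def_tilde_delta_x} are linear, taking expectations and multiplying by $\ov S^{-1}$, $\ov X^{-1}$ respectively yields
\[
\E[\ov s^{-1}\wt\delta_s] = \tfrac{1}{\sqrt{\ov X\ov S}}\,\ov P\,\tfrac{1}{\sqrt{\ov X\ov S}}\,\delta_\mu, \qquad \E[\ov x^{-1}\wt\delta_x] = \tfrac{1}{\sqrt{\ov X\ov S}}\,(I-\ov P)\,\tfrac{1}{\sqrt{\ov X\ov S}}\,\delta_\mu,
\]
and also $\E[\mu^{-1}\wt\delta_\mu]=\mu^{-1}\delta_\mu$.

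Next I would bound these by operator norms. The algorithm sets $\ov x = x\sqrt{\wt v/w}$ and $\ov s = s\sqrt{w/\wt v}$, so $\ov X\ov S = XS = \diag(\mu)$, and the first bullet of Assumption~\ref{ass:x_s_mu} gives $\mu\approx_{0.1}t$; hence $\tfrac{1}{\sqrt{\ov X\ov S}}$ is diagonal with every entry at most $1/\sqrt{0.9\,t}$, i.e. $\|\tfrac{1}{\sqrt{\ov X\ov S}}\|_{\mathrm{op}}\le 1/\sqrt{0.9\,t}$, and $\|\mu^{-1}\|_{\mathrm{op}}\le 1/(0.9\,t)$. Since $\ov P$ and $I-\ov P$ are orthogonal projections, $\|\ov P\|_{\mathrm{op}}=\|I-\ov P\|_{\mathrm{op}}\le 1$. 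Combining with $\|\delta_\mu\|_2\le\epsilon t$ from the third bullet of Assumption~\ref{ass:x_s_mu},
\[
\|\E[\ov s^{-1}\wt\delta_s]\|_2,\ \ \|\E[\ov x^{-1}\wt\delta_x]\|_2,\ \ \|\E[\mu^{-1}\wt\delta_\mu]\|_2 \ \le\ \frac{\|\delta_\mu\|_2}{0.9\,t}\ \le\ \frac{\epsilon}{0.9}\ <\ 2\epsilon,
\]
which gives the first, second, and (with room to spare) fifth bounds of the claim.

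Finally I would pass from the $\ov x,\ov s$ normalizations to the $x,s$ normalizations. Writing $s^{-1}\wt\delta_s = \diag(\ov s/s)\,(\ov s^{-1}\wt\delta_s)$ and $x^{-1}\wt\delta_x = \diag(\ov x/x)\,(\ov x^{-1}\wt\delta_x)$, the ratios are $\ov s/s = \sqrt{w/\wt v}$ and $\ov x/x = \sqrt{\wt v/w}$; by the second bullet of Assumption~\ref{ass:x_s_mu} ($w\approx_{\epsilon_{mp}}\wt v$, $\epsilon_{mp}\le 1/40000$) and the elementary bounds $1-x\le\sqrt{1-x}$, $\sqrt{1+x}\le 1+x$, we get $\ov s/s\approx_{\epsilon_{mp}}1$ and $\ov x/x\approx_{\epsilon_{mp}}1$, so $\|\diag(\ov s/s)\|_{\mathrm{op}},\|\diag(\ov x/x)\|_{\mathrm{op}}\le 1+\epsilon_{mp}$, and therefore $\|\E[s^{-1}\wt\delta_s]\|_2,\|\E[x^{-1}\wt\delta_x]\|_2\le (1+\epsilon_{mp})\,\epsilon/0.9<2\epsilon$.

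There is no serious obstacle here: the content is entirely the observation that $\E[\wt\delta_\mu]=\delta_\mu$ turns the stochastic step into the (approximate) short-step direction, after which orthogonality of $\ov P$ and the bound $xs\approx_{0.1}t$ do all the work. The only points needing care are bookkeeping: checking $\ov X\ov S=XS$ so the hypothesis $xs\approx_{0.1}t$ applies verbatim to the diagonal rescaling, and verifying that the constants let the clean bounds $2\epsilon$ and $4\epsilon$ absorb the factors $1/0.9$ and $1+\epsilon_{mp}$ in the parameter ranges of Assumption~\ref{ass:x_s_mu}. The resampling loop on Line~\ref{line:resample} contributes only a polynomially bounded term of inverse-polynomial probability and is handled by the remark preceding the proof, so it does not affect this computation.
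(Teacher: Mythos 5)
Your proposal is correct and follows essentially the same route as the paper: unbiasedness $\E[\wt{\delta}_\mu]=\delta_\mu$, linearity of the formulas in Lemma~\ref{lem:stochastic_step_formula}, the operator-norm bound $\|\ov{P}\|\le 1$ for the orthogonal projection, the identity $\ov{X}\ov{S}=XS$ together with $xs\approx_{0.1}t$, and the $(1+\epsilon_{mp})$ rescaling to pass from $\ov{s},\ov{x}$ to $s,x$. The only (harmless) deviation is the last bound: you estimate $\|\E[\mu^{-1}\wt{\delta}_\mu]\|_2=\|\mu^{-1}\delta_\mu\|_2\le\epsilon/0.9$ directly, whereas the paper writes $\mu^{-1}\wt{\delta}_\mu=\ov{s}^{-1}\wt{\delta}_s+\ov{x}^{-1}\wt{\delta}_x$ and applies the triangle inequality to get $4\epsilon$; your version is slightly sharper and equally valid.
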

\begin{proof}

For $\| \ov{s}^{-1} \wt{\delta}_s \|_{\infty}$, we consider the $i$-th coordinate of the vector
\begin{align*}
\ov{s}^{-1}_i \wt{\delta}_{s,i} = \frac{ 1 }{ \sqrt{ \ov{x}_i \ov{s}_i } } \sum_{j=1}^n \ov{P}_{i,j} \frac{ \wt{\delta}_{\mu,j} }{ \sqrt{ \ov{x}_j \ov{s}_j } }.
\end{align*}
Then, we have
\begin{align*}
\E \left[ \ov{s}_i^{-1} \wt{\delta}_{s,i} \right] = \frac{ 1 }{ \sqrt{ \ov{x}_i \ov{s}_i } } \sum_{j=1}^n \ov{P}_{i,j} \frac{ \E [ \wt{\delta}_{\mu,j} ] }{ \sqrt{ \ov{x}_j \ov{s}_j } } = \frac{ 1 }{ \sqrt{ \ov{x}_i \ov{s}_i } } \sum_{j=1}^n \ov{P}_{i,j} \frac{  \delta_{\mu,j} }{ \sqrt{ \ov{x}_j \ov{s}_j } }.
\end{align*}

Since $x s \approx_{0.1} t$ and $\| \delta_{\mu} \|_2 \leq \epsilon t$, we have $\| \frac{ \delta_{\mu} }{ \sqrt{ x s } } \|_2 \leq \frac{1.1 \epsilon t}{ \sqrt{t} }$. Since $\ov{P}$ is an orthogonal projection matrix, we have $\| \ov{P} \frac{ \delta_{\mu} }{ \sqrt{ \ov{x} \ov{s} } } \|_2 \leq \| \frac{ \delta_{\mu} }{ \sqrt{ \ov{x} \ov{s} } } \|_2$. Putting all the above facts and $x s = \ov{x} \ov{s}$, we can show
\begin{align*}
\left\| \E [ \ov{s}^{-1} \wt{\delta}_s ] \right\|_2^2
= & ~ \sum_{i=1}^n \left( \frac{1}{ \sqrt{\ov{x}_i \ov{s}_i } } \sum_{j=1}^n \ov{P}_{i,j} \frac{ \delta_{\mu,j} }{ \sqrt{ \ov{x}_j \ov{s}_j } } \right)^2
= \sum_{i=1}^n \frac{1}{ \ov{x}_i \ov{s}_i } \left( \sum_{j=1}^n \ov{P}_{i,j} \frac{ \delta_{\mu,j} }{ \sqrt{ \ov{x}_j \ov{s}_j } } \right)^2 \\
\leq & ~ \frac{1}{0.9 t } \sum_{i=1}^n \left( \sum_{j=1}^n \ov{P}_{i,j} \frac{ \delta_{\mu,j} }{ \sqrt{ \ov{x}_j \ov{s}_j } } \right)^2
=  \frac{1}{0.9t} \| \ov{P} \frac{\delta_{\mu}}{ \sqrt{ \ov{x} \ov{s} } } \|_2^2 \\
\leq & ~ \frac{1}{0.9 t} \|  \frac{\delta_{\mu}}{ \sqrt{ \ov{x} \ov{s} } } \|_2^2 
\leq  \frac{(1.1)^2}{0.9 t} \cdot \frac{(\epsilon t)^2}{t}
\leq  1.4 \epsilon^2,
\end{align*}
which implies that
\begin{align}\label{eq:bars_delta_s_norm}
\left\| \E [ \ov{s}^{-1} \wt{\delta}_s ] \right\|_2 \leq 1.2 \epsilon.
\end{align}
Notice that the proof for $x$ is identical to the proof for $s$ because $(I - \ov{P})$ is also a projection matrix. Since $\ov{s} \approx_{0.1} s$ and $\ov{x} \approx_{0.1} x$, then we can also prove the next two inequalities in the Claim statement.

Now, we are ready to bound $\| \E[ \mu^{-1} \wt{\delta}_{\mu} ] \|_2$ 
\begin{align*}
\| \E[ \mu^{-1} \wt{\delta}_{\mu} ] \|_2
= \| \E[ \ov{s}^{-1} \ov{x}^{-1} ( \ov{x} \wt{\delta}_s + \ov{s} \wt{\delta}_x ) ]  \|_2
\leq \| \E[\ov{s}^{-1} \wt{\delta}_s ] \|_2 + \| \E[ \ov{x}^{-1} \wt{\delta}_x ] \|_2
\leq 4 \epsilon.
\end{align*}
by using $\mu = xs = \ov{x} \ov{s}$ and $\ov{x} \wt{\delta}_s + \ov{s} \wt{\delta}_x = \wt{\delta}_{\mu}$ from  \eqref{eq:tilde_delta_x_s_y_mu}.
\end{proof}

%%%%%%%%%%%%%%%%%%%%%%%%%%%%%%%%%%%%%%%%%%%%%%%%%%%%%%%%%%%%%%%%%%%%%%%%%%%%%%%%%%%%%
\begin{claim}[Part 2, bounding the variance per coordinate]\label{cla:bounding_coordinate_variance}
\begin{align*}
\Var[ \ov{s}_i^{-1} \wt{\delta}_{s,i}] \leq \frac{2 \epsilon^2}{k}, \Var[ \ov{x}_i^{-1} \wt{\delta}_{x,i} ] \leq \frac{2 \epsilon^2}{k}, \Var[ s_i^{-1} \wt{\delta}_{s,i}] \leq \frac{2 \epsilon^2}{k}, \Var[ x_i^{-1} \wt{\delta}_{x,i} ] \leq \frac{2 \epsilon^2}{k}, \Var[ \mu_i^{-1} \wt{\delta}_{\mu,i} ] \leq \frac{8 \epsilon^2}{k}.
\end{align*}
\end{claim}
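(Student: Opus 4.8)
The plan is to combine the explicit formula from Lemma~\ref{lem:stochastic_step_formula} with the fact that the coordinates $\wt{\delta}_{\mu,1},\dots,\wt{\delta}_{\mu,n}$ are sampled independently of one another. Expanding the formula for $\wt{\delta}_s$, the $i$-th entry $\ov{s}_i^{-1}\wt{\delta}_{s,i}$ is a fixed linear form in $\wt{\delta}_\mu$, namely $\ov{s}_i^{-1}\wt{\delta}_{s,i}=\sum_{j=1}^n a_{ij}\wt{\delta}_{\mu,j}$ with $a_{ij}=\ov{P}_{i,j}/(\sqrt{\ov{x}_i\ov{s}_i}\,\sqrt{\ov{x}_j\ov{s}_j})$, exactly the expression already written in the proof of Claim~\ref{cla:bounding_l2_norm_expectation}. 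By coordinate-wise independence, $\Var[\ov{s}_i^{-1}\wt{\delta}_{s,i}]=\sum_{j=1}^n a_{ij}^2\Var[\wt{\delta}_{\mu,j}]$, so the whole claim reduces to a one-coordinate variance bound on $\wt{\delta}_{\mu,j}$ together with an $\ell_2$ estimate of the coefficients $(a_{ij})_j$.

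For the one-coordinate bound, $\wt{\delta}_{\mu,j}$ equals $\delta_{\mu,j}/p_j$ with probability $p_j$ and $0$ otherwise, hence $\Var[\wt{\delta}_{\mu,j}]=\delta_{\mu,j}^2(1/p_j-1)$; this is $0$ when $p_j=1$ and is otherwise at most $\delta_{\mu,j}^2/p_j\le \|\delta_\mu\|_2^2/k$, since the definition of $p_j$ forces $p_j\ge k\,\delta_{\mu,j}^2/\|\delta_\mu\|_2^2$. In either case $\Var[\wt{\delta}_{\mu,j}]\le \|\delta_\mu\|_2^2/k$. For the coefficients, I would use $\ov{x}_j\ov{s}_j=x_js_j$ and Assumption~\ref{ass:x_s_mu}, so that $\ov{x}_j\ov{s}_j\ge 0.9t$ for every $j$, together with the fact that $\ov{P}$ is an orthogonal (hence symmetric idempotent) projection, so $\sum_{j=1}^n \ov{P}_{i,j}^2=(\ov{P}^2)_{i,i}=\ov{P}_{i,i}\le 1$. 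Combining these with $\|\delta_\mu\|_2\le \epsilon t$,
\[
\Var[\ov{s}_i^{-1}\wt{\delta}_{s,i}]=\sum_{j=1}^n a_{ij}^2\Var[\wt{\delta}_{\mu,j}]\le \frac{\|\delta_\mu\|_2^2}{k}\cdot\frac{1}{\ov{x}_i\ov{s}_i}\sum_{j=1}^n\frac{\ov{P}_{i,j}^2}{\ov{x}_j\ov{s}_j}\le \frac{(\epsilon t)^2}{k}\cdot\frac{1}{0.9t}\cdot\frac{1}{0.9t}\le \frac{2\epsilon^2}{k}.
\]

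The remaining inequalities are cosmetic variants. For $\ov{x}_i^{-1}\wt{\delta}_{x,i}$ one replaces $\ov{P}$ by $I-\ov{P}$, which is again an orthogonal projection with $(I-\ov{P})_{i,i}\le 1$, so the identical estimate applies. For the unbarred quantities $s_i^{-1}\wt{\delta}_{s,i}$ and $x_i^{-1}\wt{\delta}_{x,i}$, I would multiply through by $(\ov{s}_i/s_i)^2$ (resp.\ $(\ov{x}_i/x_i)^2$), each at most $(1.1)^2=1.21$ by $\ov{s}\approx_{0.1}s$ (resp.\ $\ov{x}\approx_{0.1}x$), which keeps the bound below $2\epsilon^2/k$. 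Finally, for $\mu_i^{-1}\wt{\delta}_{\mu,i}$ I would use $\mu_i^{-1}\wt{\delta}_{\mu,i}=\ov{s}_i^{-1}\wt{\delta}_{s,i}+\ov{x}_i^{-1}\wt{\delta}_{x,i}$ (from the first line of \eqref{eq:tilde_delta_x_s_y_mu} and $\mu=\ov{x}\ov{s}$) together with $\Var[U+V]\le 2\Var[U]+2\Var[V]$ to reach $8\epsilon^2/k$; alternatively the same bound follows directly from $\Var[\wt{\delta}_{\mu,i}]\le\|\delta_\mu\|_2^2/k$ and $\mu_i\ge 0.9t$. I do not expect a genuine obstacle: once independence across the $n$ sampling coordinates is invoked, the argument is a few lines of bookkeeping. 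The only points needing a little care are the truncated case $p_j=1$, where that coordinate becomes deterministic and contributes nothing to the variance, and the resampling loop on Line~\ref{line:resample}; as noted in the remark preceding the proof, the latter is harmless because its failure probability can be driven to an arbitrarily small inverse polynomial while every quantity in sight is polynomially bounded, so the stated constants are unaffected.
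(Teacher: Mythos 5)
Your proposal is correct and follows essentially the same route as the paper's proof: expand $\ov{s}_i^{-1}\wt{\delta}_{s,i}$ as a linear form in the independent coordinates of $\wt{\delta}_\mu$, bound each $\Var[\wt{\delta}_{\mu,j}]$ by $\|\delta_\mu\|_2^2/k$ via the definition of $p_j$, use $\sum_j \ov{P}_{i,j}^2=\ov{P}_{i,i}\le 1$ and $\ov{x}_j\ov{s}_j\ge 0.9t$, and handle $\mu_i^{-1}\wt{\delta}_{\mu,i}$ by $\Var[U+V]\le 2\Var[U]+2\Var[V]$. Your explicit treatment of the truncated case $p_j=1$ is a small point the paper elides, but the argument and constants match.
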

\begin{proof}

Consider the $i$-th coordinate of the vector
\begin{align*}
\ov{s}^{-1}_i \wt{\delta}_{s,i} = \frac{ 1 }{ \sqrt{ \ov{x}_i \ov{s}_i } } \sum_{j=1}^n \ov{P}_{i,j} \frac{ \wt{\delta}_{\mu,j} }{ \sqrt{ \ov{x}_j \ov{s}_j } }.
\end{align*}

For variance of $\ov{s}_i^{-1} \wt{\delta}_{s,i}$, we have
\begin{align*}
\Var[ \ov{s}^{-1}_i \wt{\delta}_{s,i} ]
= & ~ \frac{1}{ \ov{x}_i \ov{s}_i } \sum_{j=1}^n \frac{ \ov{P}_{i,j}^2 }{ \ov{x}_j \ov{s}_j } \Var[ \wt{\delta}_{\mu,j} ] & \text{~by~all~}\wt{\delta}_{\mu,j}\text{~are~independent} \\
\leq & ~ \frac{1}{ \ov{x}_i \ov{s}_i } \sum_{j=1}^n \frac{ \ov{P}_{i,j}^2 }{ \ov{x}_j \ov{s}_j } \frac{1}{k} \frac{ \delta_{\mu,j}^2 }{ \frac{ \delta_{\mu,j}^2 }{ \sum_{l=1}^n \delta_{\mu,l}^2  } + \frac{1}{n} } & \text{~by~ \eqref{eq:tilde_delta_mu}}  \\
\leq & ~ \frac{1}{ \ov{x}_i \ov{s}_i } \sum_{j=1}^n \frac{ \ov{P}_{i,j}^2 }{ \ov{x}_j \ov{s}_j } \frac{1}{k} \sum_{l=1}^n \delta_{\mu,l}^2 \\
\leq & ~ \frac{ 1.3  }{ t^2 } \sum_{j=1}^n \ov{P}_{i,j}^2 \frac{1}{k} \sum_{l=1}^n \delta_{\mu,l}^2  \leq \frac{1.3 \epsilon^2}{k},
& \text{~by~}\ov{x}_i \ov{s}_i = x_i s_i \approx_{1/10} t
\end{align*}
where we used that $\sum_{j=1}^n \ov{P}_{i,j}^2 = \ov{P}_{i,i} \leq 1$, $\| \delta_{\mu} \|_2 \leq \epsilon t$ at the end.

The proof for the other three inequalities in the Claim statement are identical to this one. We omit here.

For the variance of $\mu_i^{-1} \wt{\delta}_{\mu,i}$, 
\begin{align}
\Var[ \mu_i^{-1} \wt{\delta}_{\mu,i} ] 
= & ~ \Var[ \ov{x}_i^{-1} \ov{s}_i^{-1} ( \ov{x}_i \wt{\delta}_{s,i} + \ov{s}_i \wt{\delta}_{x,i} )  ] \notag \\
\leq & ~ 2 \Var[  \ov{x}_i^{-1} \ov{x}_i \ov{s}_i^{-1} \wt{\delta}_{s,i} ] + 2 \Var[ \ov{s}_i^{-1} \ov{s}_i \ov{x}_i^{-1} \wt{\delta}_{x,i} ] \notag \\
= & ~ 2 \Var[ \ov{s}_i^{-1} \wt{\delta}_{s,i} ] + 2 \Var[ \ov{x}_i^{-1} \wt{\delta}_{x,i} ]
\leq 8 \epsilon^2 / k. \notag
\end{align} 
where we used the definition $\mu = xs = \ov{x} \ov{s}$ and \eqref{eq:tilde_delta_x_s_y_mu} in the first step, the triangle inequality in the second step and, $\Var[\ov{s}_i^{-1} \wt{\delta}_{s,i} ], \Var[\ov{x}_i^{-1} \wt{\delta}_{x,i} ] \leq 2 \epsilon^2/k$ at the end.
\end{proof}

\begin{claim}[Part 3, bounding the probability of success]\label{claim:prob}
Without resampling, the following holds with probability $1-2n \exp(-\frac{0.003 k}{\epsilon\sqrt{n}\log{n}})$. 
$$\|\overline{s}^{-1}\wt{\delta}_{s}\|_{\infty}\leq \frac{0.01}{\log{n}},\|s^{-1}\wt{\delta}_{s}\|_{\infty}\leq \frac{0.02}{\log{n}},\|\overline{x}^{-1}\wt{\delta}_{x}\|_{\infty}\leq \frac{0.01}{\log{n}},\|x^{-1}\wt{\delta}_{x}\|_{\infty}\leq \frac{0.02}{\log{n}},\|\mu^{-1}\wt{\delta}_{\mu}\|_{\infty}\leq \frac{0.02}{\log{n}}.$$
With resampling, it always holds.
\end{claim}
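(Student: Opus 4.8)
The plan is a coordinate-wise Bernstein (Chernoff-type) argument, after which the ``with resampling'' statement follows for free from the exit condition of the \textsc{StochasticStep} loop. Fix a coordinate $i$ and write, using Lemma~\ref{lem:stochastic_step_formula},
\[
\ov s_i^{-1}\wt\delta_{s,i}=\sum_{j=1}^n Z_{i,j},\qquad Z_{i,j}=\frac{\ov P_{i,j}}{\sqrt{\ov x_i\ov s_i}\,\sqrt{\ov x_j\ov s_j}}\,\wt\delta_{\mu,j},
\]
where the $Z_{i,j}$ are independent across $j$ by the independent sampling in \eqref{eq:tilde_delta_mu}. The key structural observation is that a coordinate $j$ with $p_j=1$ contributes a \emph{deterministic} term $Z_{i,j}=\ov P_{i,j}\delta_{\mu,j}/(\sqrt{\ov x_i\ov s_i}\sqrt{\ov x_j\ov s_j})$, so it only shifts the mean and plays no role in the concentration; by the projection estimate $|\sum_{j:p_j=1}\ov P_{i,j}c_j|\le\|c\|_2$ together with $\ov x\ov s=xs\approx_{0.1}t$ and $\|\delta_\mu\|_2\le\epsilon t$, the whole deterministic part has absolute value $\le 1.2\epsilon$. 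So it suffices to concentrate the random part $\sum_{j:p_j<1}Z_{i,j}$ around its own mean, and then fold the deterministic part into the mean bound $|\E[\ov s_i^{-1}\wt\delta_{s,i}]|\le 1.2\epsilon\le 0.00003/\log n$ coming from Claim~\ref{cla:bounding_l2_norm_expectation} (using $\epsilon<1/(40000\log n)$).

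For the random part I need three ingredients. The mean is handled as above; the total variance $\sum_{j:p_j<1}\Var[Z_{i,j}]=\Var[\ov s_i^{-1}\wt\delta_{s,i}]\le 1.3\epsilon^2/k$ is exactly Claim~\ref{cla:bounding_coordinate_variance}; and for the almost-sure per-term bound I use that, \emph{when $p_j<1$}, the two lower bounds $p_j\ge k\delta_{\mu,j}^2/\|\delta_\mu\|_2^2$ and $p_j\ge k/n$ built into \eqref{eq:tilde_delta_mu} give $|\wt\delta_{\mu,j}|\le\min\big(\tfrac{\|\delta_\mu\|_2^2}{k|\delta_{\mu,j}|},\tfrac{n|\delta_{\mu,j}|}{k}\big)\le\tfrac{\sqrt n}{k}\|\delta_\mu\|_2\le\tfrac{\sqrt n\,\epsilon t}{k}$ (product-then-square-root of the two bounds). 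Combined with $|\ov P_{i,j}|\le\sqrt{\ov P_{i,i}\ov P_{j,j}}\le 1$ and $\ov x_i\ov s_i=x_is_i\ge 0.9t$, each random term obeys $|Z_{i,j}|\le M:=\tfrac{\sqrt n\,\epsilon}{0.9k}$, and since $Z_{i,j}$ takes only the two values $0$ and its ``active'' value, also $|Z_{i,j}-\E Z_{i,j}|\le M$.

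Bernstein's inequality then gives, for $\tau=0.005/\log n$,
\[
\Pr\!\big[\,\big|\ov s_i^{-1}\wt\delta_{s,i}-\E[\ov s_i^{-1}\wt\delta_{s,i}]\big|\ge\tau\,\big]\le 2\exp\!\left(-\frac{\tau^2/2}{\,1.3\epsilon^2/k+M\tau/3\,}\right),
\]
and since $\epsilon=O(1/\log n)$ makes the variance term $1.3\epsilon^2/k$ negligible next to $M\tau/3=\Theta(\sqrt n\,\epsilon/(k\log n))$, the exponent is $\Theta(\tau/M)=\Theta(k/(\epsilon\sqrt n\log n))$; tracking the constants (and using Assumption~\ref{ass:x_s_mu}'s $k\ge 1000\epsilon\sqrt n\log^2 n/\epsilon_{mp}$ to confirm this is a genuine high-probability bound) yields $\exp(-0.003k/(\epsilon\sqrt n\log n))$, whence $|\ov s_i^{-1}\wt\delta_{s,i}|\le\tau+1.2\epsilon\le 0.01/\log n$ off the bad event. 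The same computation applies verbatim to $\ov x_i^{-1}\wt\delta_{x,i}$ because $(I-\ov P)$ is also an orthogonal projection, so a union bound over the $n$ coordinates and these two vectors gives the stated failure probability $2n\exp(-0.003k/(\epsilon\sqrt n\log n))$ (the extra constant factor absorbed since $k/(\epsilon\sqrt n\log n)$ is a large multiple of $\log n$). The three remaining inequalities need no further randomness: $\ov s\approx_{0.1}s$ and $\ov x\approx_{0.1}x$ give $\|s^{-1}\wt\delta_s\|_\infty,\|x^{-1}\wt\delta_x\|_\infty\le\tfrac{1}{0.9}\cdot\tfrac{0.01}{\log n}\le\tfrac{0.02}{\log n}$, and $\mu^{-1}\wt\delta_\mu=\ov x^{-1}\ov s^{-1}(\ov x\wt\delta_s+\ov s\wt\delta_x)$ with $\ov x\wt\delta_s+\ov s\wt\delta_x=\wt\delta_\mu$ gives $\|\mu^{-1}\wt\delta_\mu\|_\infty\le\|\ov s^{-1}\wt\delta_s\|_\infty+\|\ov x^{-1}\wt\delta_x\|_\infty\le\tfrac{0.02}{\log n}$. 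Finally, with resampling the repeat--until loop of Algorithm~\ref{alg:stochastic_step} returns only when all these bounds hold, so they then hold with probability $1$. I expect the only real difficulty to be the constant bookkeeping — verifying that the deterministic part plus $\tau$ genuinely stays under $0.01/\log n$, that the chosen $\tau$ produces at least the advertised $0.003$ in the exponent, and that the $\ov s\approx_{0.1}s$ transfer does not overshoot $0.02/\log n$ — rather than anything in the concentration argument itself; the one conceptual point worth stating carefully is the split of the sum into its deterministic ($p_j=1$) and random ($p_j<1$) parts, which is what makes $M$ as small as $\Theta(\sqrt n\,\epsilon/k)$.
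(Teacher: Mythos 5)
Your proof is correct and follows essentially the same route as the paper's: write $\ov s_i^{-1}\wt\delta_{s,i}$ as a sum of independent terms, control the mean via Part 1 and the variance via Part 2, obtain the per-term bound $M=\Theta(\epsilon\sqrt n/k)$ from the AM--GM structure of $p_j$, apply Bernstein with threshold $\Theta(1/\log n)$, and union bound, then transfer to the unbarred and $\mu$ quantities via $\ov s\approx_{0.1}s$, $\ov x\approx_{0.1}x$ and the identity $\wt\delta_\mu=\ov x\wt\delta_s+\ov s\wt\delta_x$. Your explicit split into the $p_j=1$ and $p_j<1$ parts is handled implicitly in the paper (the centered terms $Y_j$ vanish identically when $p_j=1$), and your slightly larger $M$ (by a factor of about $2$, from taking the minimum of the two lower bounds on $p_j$ rather than their AM--GM combination) still leaves enough slack to recover the constant $0.003$ in the exponent.
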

\begin{proof}
We can write $\ov{s}_i^{-1} \wt{\delta}_{s,i} - \E[ \ov{s}_i^{-1} \wt{\delta}_{s,i}] = \sum_j Y_j$  where $Y_j$ are independent random variables defined by
\begin{align*}
Y_j = \frac{1}{ \sqrt{ \ov{x}_i \ov{s}_i } } \ov{P}_{i,j} \frac{ \wt{\delta}_{\mu,j} }{ \sqrt{ \ov{x}_j \ov{s}_j } } - \frac{1}{ \sqrt{ \ov{x}_i \ov{s}_i } } \ov{P}_{i,j} \frac{ \delta_{\mu,j} }{ \sqrt{ \ov{x}_j \ov{s}_j } }.
\end{align*}
We bound the sum using Bernstein inequality. Note that $Y_j$ are mean $0$ and that Claim~\ref{cla:bounding_coordinate_variance} shows that
$\sum_{j=1}^n \E[ Y_j^2 ] = \Var[ \ov{s}_i^{-1} \wt{\delta}_{s,i}] \leq \frac{2 \epsilon^2}{k}.$
We also need to give an upper bound for $Y_j$
\begin{align*}
|Y_j| 
= & ~ \left| \frac{1}{ \sqrt{\ov{x}_i \ov{s}_i} } \ov{P}_{i,j} \left(\frac{ \wt{\delta}_{\mu,j} - \delta_{\mu,j} }{ \sqrt{\ov{x}_j \ov{s}_j } } \right)  \right|  \\
\leq & ~ \frac{1.2}{t} | \wt{\delta}_{\mu,j} - \delta_{\mu,j} |   & \text{~by~} |\ov{P}_{i,j}| \leq 1, x_i s_i \approx_{1/10} t \\
\leq & ~ \frac{1.2}{t} | \delta_{\mu,j} / p_j | & \text{~by~} \wt{\delta}_{\mu,j} \in [0, \delta_{\mu,j} / p_j ] \\
= & ~ \frac{1.2}{t} \frac{1}{ k } \frac{1}{ ( \frac{ \delta_{\mu,i} }{ \sum_{l=1}^n \delta_{\mu,l}^2 } + \frac{1}{ n \delta_{\mu,i} }  ) }  & \text{~by~ \eqref{eq:tilde_delta_mu}} \\
\leq & ~ \frac{0.6}{t} \frac{1}{ k } \left( n \sum_{l=1}^n \delta_{\mu,l}^2 \right)^{1/2}  & \text{~by~} a^2 + b^2 \geq 2ab \\
\leq & ~ \frac{0.6 \epsilon \sqrt{ n } }{ k } \defeq M. & \text{~by~} \| \delta_{\mu} \|_2 \leq \epsilon t
\end{align*}

Now, we can apply Bernstein inequality 
\begin{align*}
\Pr \left[ \left| \sum_{j=1}^n Y_j \right| > b \right] \leq & ~ 2\exp \left( - \frac{ b^2/2 }{ \sum_{j=1}^n \E[Y_j^2] + M b/3 } \right)  \\
\leq & ~ 2 \exp \left( - \frac{ b^2 / 2 }{ 2 \epsilon^2 / k + ( 0.6 \epsilon \sqrt{n} / k ) \cdot b /3 } \right).
\end{align*}
We choose $b = \frac{0.005}{\log{n}}$ and use $\epsilon \leq \frac{1}{400\log{n}}$ and $n\geq 10$ to get
\begin{align*}
\Pr \left[ \left| \sum_{j=1}^n Y_j \right| \geq \frac{0.05}{\log{n}} \right] \leq 2\exp \left( -\frac{0.003 k}{\epsilon\sqrt{n}\log{n}} \right).
\end{align*}

Since $\|\E[ \ov{s}_i^{-1} \wt{\delta}_{s,i}] \|_2 \leq 2 \eps \leq \frac{0.005}{\log{n}}$, we have that $|\ov{s}_i^{-1} \wt{\delta}_{s,i}|\leq \frac{0.01}{\log{n}}$ with probability $1-2\exp(-\frac{0.003 k}{\epsilon\sqrt{n}\log{n}})$.
Taking {\color{red}a} union bound, we have that $\|\ov{s}^{-1} \wt{\delta}_{s}\|_\infty\leq \frac{0.01}{\log{n}}$ with probability $1-2n \exp(-\frac{0.003 k}{\epsilon\sqrt{n}\log{n}})$. Similarly, this holds for the other 3 terms.

Now, the last term follows by the calculation
\begin{align*}
| \mu_i^{-1} \wt{\delta}_{\mu,i} | 
=  | \ov{x}_i^{-1} \ov{s}_i^{-1} ( \ov{x}_i \wt{\delta}_{s,i} + \ov{s}_i \wt{\delta}_{x,i} ) | 
= | \ov{s}_i^{-1} \wt{\delta}_{s,i} | + | \ov{x}_i^{-1} \wt{\delta}_{x,i} | 
\leq \frac{0.02}{\log{n}}.
\end{align*}
\end{proof}
\end{proof}

%%%%%%%%%%%%%%%%%%%%%%%%%%%%%%%%%%%%%%%%%%%%%%%%%%%%%%%%%%%%%%%%%%%%%%%%%%%%%%%%%%%%%%%%%%%%%%%%%%%%%%%%%%%%%%%%%%%%%%%%%%%%%%%%%%%%%%%%%%%%%%%%%%%%%%%%%%%%%%%%

\subsubsection{Bounding $\mu^{\new} - \mu$}

\begin{lemma}\label{lem:bounding_mu_new_minus_mu}
Under the Assumption~\ref{ass:x_s_mu}, the vector $\mu_{i}^{\new} \defeq ( x_i + \wt{\delta}_{x,i} ) ( s_i + \wt{\delta}_{s,i} ) $ satisfies \\
1. $\| \E[ \mu^{-1} ( \mu^{\new} - \mu - \wt{\delta}_{\mu} ) ] \|_2 \leq 10 \epsilon_{\mathrm{mp}} \cdot \epsilon$ and $ \| \E[ \mu^{-1} ( \mu^{\new} - \mu ) ] \|_2 \leq 5 \epsilon$. \\
2. $\Var[ \mu_i^{-1} \mu_i^{\new} ] \leq 50 \epsilon^2 / k$ for all $i$. \\
3. $\| \mu^{-1} ( \mu^{\new} - \mu ) \|_{\infty} \leq \frac{0.021}{\log{n}}$.
\end{lemma}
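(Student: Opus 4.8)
The plan is to reduce everything to the three bounds already established for $\wt\delta_x$ and $\wt\delta_s$ in Lemma~\ref{lem:stochastic_step}. First I would record two exact coordinate-wise identities, obtained by expanding $\mu_i^{\new}=(x_i+\wt\delta_{x,i})(s_i+\wt\delta_{s,i})$ and using the relations guaranteed by line~3 of Algorithm~\ref{alg:stochastic_step} — namely $\ov x_i\ov s_i=x_is_i=\mu_i$ — together with the first equation of \eqref{eq:tilde_delta_x_s_y_mu}, i.e.\ $\wt\delta_{\mu,i}=\ov x_i\wt\delta_{s,i}+\ov s_i\wt\delta_{x,i}$:
\[
\frac{\mu_i^{\new}-\mu_i}{\mu_i}=\frac{\wt\delta_{x,i}}{x_i}+\frac{\wt\delta_{s,i}}{s_i}+\frac{\wt\delta_{x,i}}{x_i}\cdot\frac{\wt\delta_{s,i}}{s_i},\qquad
\frac{\mu_i^{\new}-\mu_i-\wt\delta_{\mu,i}}{\mu_i}=c_i\frac{\wt\delta_{x,i}}{x_i}+d_i\frac{\wt\delta_{s,i}}{s_i}+\frac{\wt\delta_{x,i}}{x_i}\cdot\frac{\wt\delta_{s,i}}{s_i},
\]
where $c_i=1-x_i/\ov x_i$ and $d_i=1-s_i/\ov s_i$ are deterministic. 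Since $x_i/\ov x_i=\sqrt{w_i/\wt v_i}$ and $w\approx_{\epsilon_{mp}}\wt v$ by Assumption~\ref{ass:x_s_mu}, both $|c_i|$ and $|d_i|$ are $O(\epsilon_{mp})$.

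\textbf{Part 1.} I would take expectations in the second identity. Each of the first two terms is a fixed $O(\epsilon_{mp})$ scalar times $\wt\delta_{x,i}/x_i$ (resp.\ $\wt\delta_{s,i}/s_i$), so its contribution to the $\ell_2$ norm is at most $O(\epsilon_{mp})\cdot\|\E[x^{-1}\wt\delta_x]\|_2=O(\epsilon_{mp}\epsilon)$ by Part~1 of Lemma~\ref{lem:stochastic_step}. For the quadratic term I would use $\wt\delta_{x,i}/x_i=\wt\delta_{x,i}/\ov x_i$ up to the harmless factor $\ov x_i\ov s_i/(x_is_i)=1$, and split $\E[(\wt\delta_{x,i}/\ov x_i)(\wt\delta_{s,i}/\ov s_i)]$ into the covariance plus $\E[\wt\delta_{x,i}/\ov x_i]\,\E[\wt\delta_{s,i}/\ov s_i]$; the covariance is at most $\sqrt{\Var[\wt\delta_{x,i}/\ov x_i]\,\Var[\wt\delta_{s,i}/\ov s_i]}\le 2\epsilon^2/k$ by Part~2, contributing $O(\sqrt n\,\epsilon^2/k)$ which is $O(\epsilon_{mp}\epsilon)$ after $k\ge 1000\epsilon\sqrt n\log^2 n/\epsilon_{mp}$, while the product-of-means term contributes at most $\|\E[\ov x^{-1}\wt\delta_x]\|_\infty\cdot\|\E[\ov s^{-1}\wt\delta_s]\|_2=O(\epsilon^2)$ by Part~1 (using $\|\cdot\|_\infty\le\|\cdot\|_2$). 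Summing gives the first bound. The second bound $\|\E[\mu^{-1}(\mu^{\new}-\mu)]\|_2\le 5\epsilon$ then follows from the first identity, the triangle inequality, and $\|\E[\mu^{-1}\wt\delta_\mu]\|_2\le 4\epsilon$ (Part~1).

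\textbf{Parts 2 and 3.} For the variance I use the multiplicative identity and subadditivity of the standard deviation: $\sqrt{\Var[\mu_i^{-1}\mu_i^{\new}]}\le\sqrt{\Var[\wt\delta_{x,i}/x_i]}+\sqrt{\Var[\wt\delta_{s,i}/s_i]}+\sqrt{\Var[(\wt\delta_{x,i}/x_i)(\wt\delta_{s,i}/s_i)]}$. The first two are $O(\epsilon/\sqrt k)$ by Part~2. For the last, writing $A=\wt\delta_{x,i}/x_i$, $B=\wt\delta_{s,i}/s_i$ and expanding $AB$ around the means $\bar A,\bar B$, the mixed terms $\bar A B'$ and $\bar B A'$ have variance $O(\epsilon^2)\cdot O(\epsilon^2/k)$ since $|\bar A|,|\bar B|\le 2\epsilon$ (Part~1) and $\Var[A'],\Var[B']=O(\epsilon^2/k)$ (Part~2), while $A'B'$ is $O(1/\log n)$ in absolute value (Part~3), so its variance is $O(\epsilon^2/(k\log^2 n))$; all pieces are $O(\epsilon^2/k)$. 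For Part~3 I bound $|\mu_i^{-1}(\mu_i^{\new}-\mu_i)|\le|\mu_i^{-1}\wt\delta_{\mu,i}|+|\mu_i^{-1}(\mu_i^{\new}-\mu_i-\wt\delta_{\mu,i})|$, the first term being $\le 0.02/\log n$ by Part~3, and the second, by the second identity, being at most $O(\epsilon_{mp}/\log n)+O(1/\log^2 n)\le 0.001/\log n$ (again via the $\ell_\infty$ bounds of Part~3).

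\textbf{Main obstacle.} The delicate point is the quadratic term $(\wt\delta_{x,i}/x_i)(\wt\delta_{s,i}/s_i)$: it is not negligible in $\ell_\infty$ compared with the linear terms, and $\wt\delta_x,\wt\delta_s$ are correlated (both are linear images of the single random vector $\wt\delta_\mu$), so a crude triangle inequality loses too much. The uniform fix across all three parts is to combine the mean bound (Part~1), the per-coordinate variance bound (Part~2), and the deterministic $\ell_\infty$ bound (Part~3) of Lemma~\ref{lem:stochastic_step} at once, exploiting once more that $\ov x\ov s=xs=\mu$ so that $(\wt\delta_{x,i}/x_i)(\wt\delta_{s,i}/s_i)=\mu_i^{-1}\big((I-\ov P)u\big)_i\big(\ov P u\big)_i$ for $u=(\ov X\ov S)^{-1/2}\wt\delta_\mu$, which forces its expectation to collapse to a covariance plus a product of two small quantities. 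Finally, as in the remark after Lemma~\ref{lem:stochastic_step}, the resampling loop of line~\ref{line:resample} only adds polynomially bounded corrections on an inverse-polynomially small event (Claim~\ref{claim:prob}), so analyzing the non-resampled distributions suffices.
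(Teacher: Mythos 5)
Your proposal is correct and follows essentially the same route as the paper: the two identities you write down are exactly the paper's decomposition $\mu^{\new}-\mu-\wt{\delta}_{\mu}=(x-\ov{x})\wt{\delta}_s+(s-\ov{s})\wt{\delta}_x+\wt{\delta}_x\wt{\delta}_s$ after dividing by $\mu_i=x_is_i=\ov{x}_i\ov{s}_i$, and your treatment of the cross term (mean--covariance split, Cauchy--Schwarz on the variances, then $k\gtrsim\epsilon\sqrt{n}/\epsilon_{mp}$) is the paper's argument verbatim, yielding the same $4\epsilon_{mp}\epsilon+6\epsilon_{mp}\epsilon$ accounting. The only cosmetic deviation is in Part 2, where you use subadditivity of the standard deviation in place of the paper's $(a+b+c+d)^2\le 4(a^2+b^2+c^2+d^2)$ expansion together with its Lemma on $\Var[xy]$; both give the required $O(\epsilon^2/k)$ bound.
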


\begin{claim}[Part 1 of Lemma~\ref{lem:bounding_mu_new_minus_mu}]
\begin{align*}
\| \E[ \mu^{-1} ( \mu^{\new} - \mu - \wt{\delta}_{\mu} ) ] \|_2 \leq 10 \epsilon_{\mathrm{mp}} \cdot \epsilon,
\text{~and~} \| \E[ \mu^{-1} ( \mu^{\new} - \mu ) ] \|_2 \leq 5 \epsilon.
\end{align*}
\end{claim}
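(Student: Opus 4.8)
The plan is to reduce this claim to one exact algebraic identity together with the second-moment estimates already proved in Lemma~\ref{lem:stochastic_step}.

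\textbf{Step 1: the identity.} Expanding $\mu^{\new}_i=(x_i+\wt\delta_{x,i})(s_i+\wt\delta_{s,i})$, subtracting the first equation of \eqref{eq:tilde_delta_x_s_y_mu} (namely $\ov x_i\wt\delta_{s,i}+\ov s_i\wt\delta_{x,i}=\wt\delta_{\mu,i}$), and using that \textsc{StochasticStep} enforces $x_is_i=\ov x_i\ov s_i=\mu_i$, I get
\[
\mu^{\new}_i-\mu_i-\wt\delta_{\mu,i}=(x_i-\ov x_i)\wt\delta_{s,i}+(s_i-\ov s_i)\wt\delta_{x,i}+\wt\delta_{x,i}\wt\delta_{s,i}.
\]
Dividing by $\mu_i=x_is_i=\ov x_i\ov s_i$ writes $\mu_i^{-1}(\mu^{\new}_i-\mu_i-\wt\delta_{\mu,i})$ as the sum of $\tfrac{x_i-\ov x_i}{x_i}\cdot\tfrac{\wt\delta_{s,i}}{s_i}$, $\tfrac{s_i-\ov s_i}{s_i}\cdot\tfrac{\wt\delta_{x,i}}{x_i}$, and $\tfrac{\wt\delta_{x,i}}{\ov x_i}\cdot\tfrac{\wt\delta_{s,i}}{\ov s_i}$, so it suffices to bound the $\ell_2$ norm of the expectation of each of these three vectors and add up by the triangle inequality.

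\textbf{Step 2: the two linear terms.} Since \textsc{StochasticStep} sets $\ov x/\ov s=\wt v$ and $\ov x\,\ov s=xs$ while $\wt v\approx_{\epsilon_{mp}}w=x/s$ by Assumption~\ref{ass:x_s_mu}, the ratios $\ov x_i/x_i=\sqrt{\wt v_i/w_i}$ and $\ov s_i/s_i=\sqrt{w_i/\wt v_i}$ lie in $1\pm O(\epsilon_{mp})$, so $|x_i-\ov x_i|=O(\epsilon_{mp})x_i$ and $|s_i-\ov s_i|=O(\epsilon_{mp})s_i$. These coefficients are deterministic, so they pull out of the expectation, and Part~1 of Lemma~\ref{lem:stochastic_step} gives $\big\|\E[\tfrac{x-\ov x}{x}\cdot\tfrac{\wt\delta_s}{s}]\big\|_2\le O(\epsilon_{mp})\,\|\E[s^{-1}\wt\delta_s]\|_2=O(\epsilon_{mp}\epsilon)$, and symmetrically for the $(s-\ov s)\wt\delta_x$ term.

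\textbf{Step 3: the quadratic term (the crux).} Here a coordinatewise Cauchy--Schwarz bound such as $\|\E[\wt\delta_x\wt\delta_s/(xs)]\|_2\le\|\wt\delta_x/\ov x\|_\infty\,\|\wt\delta_s/\ov s\|_2$ is hopeless: Lemma~\ref{lem:stochastic_step} only controls the coordinatewise variance of $\ov s^{-1}\wt\delta_s$ by $O(\epsilon^2/k)$, so $\|\wt\delta_s/\ov s\|_2$ is typically of order $\epsilon\sqrt{n/k}$ and the bound grows with $n$. The fix is to take the expectation \emph{before} any norm, so that cancellation appears. With $z\defeq\tfrac{1}{\sqrt{\ov X\ov S}}\wt\delta_\mu$, formulas \eqref{eq:def_tilde_delta_s}--\eqref{eq:def_tilde_delta_x} give the clean identity $\wt\delta_{x,i}\wt\delta_{s,i}=((I-\ov P)z)_i(\ov Pz)_i$ (the $\sqrt{\ov x_i/\ov s_i}$ factors cancel). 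Write $z=\ov z+\xi$ with $\ov z\defeq\tfrac{1}{\sqrt{\ov X\ov S}}\delta_\mu=\E[z]$; since $\E[\wt\delta_{\mu,j}]=\delta_{\mu,j}$, the vector $\xi$ has mean zero with independent coordinates and $\E[\xi_j^2]=\Var[\wt\delta_{\mu,j}]/(\ov x_j\ov s_j)$. Expanding, the $\ov z$--$\xi$ cross terms vanish in expectation and the $\xi_j\xi_l$ with $j\ne l$ vanish by independence, leaving
\[
\E[\wt\delta_{x,i}\wt\delta_{s,i}]=((I-\ov P)\ov z)_i(\ov P\ov z)_i+\sum_{j=1}^n(I-\ov P)_{ij}\,\ov P_{ij}\,\E[\xi_j^2].
\]
The first term, divided by $\ov x_i\ov s_i$, is $\E[\ov x_i^{-1}\wt\delta_{x,i}]\cdot\E[\ov s_i^{-1}\wt\delta_{s,i}]$, whose $\ell_2$ norm over $i$ is $\le\|\E[\ov x^{-1}\wt\delta_x]\|_\infty\,\|\E[\ov s^{-1}\wt\delta_s]\|_2=O(\epsilon^2)$ by Part~1 of Lemma~\ref{lem:stochastic_step}. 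For the second, Cauchy--Schwarz with $\sum_j(I-\ov P)_{ij}^2=(I-\ov P)_{ii}\le1$ and $\sum_j\ov P_{ij}^2=\ov P_{ii}\le1$ bounds it by $\max_j\E[\xi_j^2]$; and $\Var[\wt\delta_{\mu,j}]\le\|\delta_\mu\|_2^2/k\le(\epsilon t)^2/k$ (from \eqref{eq:tilde_delta_mu} and $p_j\ge k\,\delta_{\mu,j}^2/\|\delta_\mu\|_2^2$), while $\ov x_j\ov s_j=x_js_j\approx_{0.1}t$, so this term is $O(\epsilon^2 t/k)$ per coordinate and hence $O(\sqrt n\,\epsilon^2/k)$ in $\ell_2$ norm after dividing by $\ov x_i\ov s_i\approx t$. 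Since $k\ge1000\,\epsilon\sqrt n\log^2 n/\epsilon_{mp}$, this last quantity is $O(\epsilon\epsilon_{mp}/\log^2 n)$, and $O(\epsilon^2)\le O(\epsilon_{mp}\epsilon)$ because $\epsilon\le\epsilon_{mp}$; together with Step~2 the constants then total at most $10\,\epsilon_{mp}\epsilon$.

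\textbf{Step 4: the second inequality, and the obstacle.} By the triangle inequality and Step~3, $\|\E[\mu^{-1}(\mu^{\new}-\mu)]\|_2\le\|\E[\mu^{-1}(\mu^{\new}-\mu-\wt\delta_\mu)]\|_2+\|\E[\mu^{-1}\wt\delta_\mu]\|_2\le10\,\epsilon_{mp}\epsilon+O(\epsilon)\le5\epsilon$, using $\E[\mu^{-1}\wt\delta_\mu]=\mu^{-1}\delta_\mu$ with $\|\mu^{-1}\delta_\mu\|_2\le\|\delta_\mu\|_2/(0.9t)=O(\epsilon)$ (equivalently Part~1 of Lemma~\ref{lem:stochastic_step}) and $\epsilon_{mp}\le1/40000$. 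The step I expect to be the main obstacle is Step~3: every other piece yields to crude norm inequalities, but $\wt\delta_x\wt\delta_s$ does not, and one is forced to split off the mean of $\wt\delta_\mu$ and exploit independence of its coordinates, so that the surviving ``variance'' contribution inherits the damping factor $1/k$ — which is precisely why $k$ is chosen of order $\epsilon\sqrt n\log^2 n/\epsilon_{mp}$.
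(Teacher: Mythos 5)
Your proof is correct, and its skeleton --- the exact identity $\mu^{\new}-\mu-\wt\delta_\mu=(x-\ov x)\wt\delta_s+(s-\ov s)\wt\delta_x+\wt\delta_x\wt\delta_s$, the $\ell_\infty\times\ell_2$ treatment of the two linear terms, and the concluding triangle inequality for the second bound --- matches the paper's. The one place you genuinely diverge is the quadratic term. The paper makes the same split into product-of-means plus covariance, $\E[\wt\delta_{x,i}\wt\delta_{s,i}]=\delta_{x,i}\delta_{s,i}+\E[(\wt\delta_{x,i}-\delta_{x,i})(\wt\delta_{s,i}-\delta_{s,i})]$, but then kills the covariance in one line via $|\E[UV]|\le\tfrac12(\Var[U]+\Var[V])$ combined with the per-coordinate variance bounds $\Var[x_i^{-1}\wt\delta_{x,i}],\Var[s_i^{-1}\wt\delta_{s,i}]\le 2\epsilon^2/k$ already established in Part 2 of Lemma~\ref{lem:stochastic_step}, arriving at the same $O(\sqrt n\,\epsilon^2/k)\le O(\epsilon\,\epsilon_{mp})$ you obtain. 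You instead re-derive the variance control from first principles by writing $z=\ov z+\xi$ and pushing the decomposition through $\ov P$ and $I-\ov P$, using independence of the coordinates of $\wt\delta_\mu$ to isolate the term $\sum_j(I-\ov P)_{ij}\ov P_{ij}\E[\xi_j^2]$. Your route is self-contained and makes the origin of the $1/k$ damping transparent, but it is longer than necessary given that Part 2 of Lemma~\ref{lem:stochastic_step} already packages exactly the variance information needed; your diagnosis that the realized-product bound $\|\wt\delta_x/\ov x\|_\infty\|\wt\delta_s/\ov s\|_2$ is useless is accurate and is precisely why both proofs take the expectation before the norm. The only cosmetic shortfall is that you carry $O(\cdot)$ constants and assert they total $10\,\epsilon_{mp}\epsilon$ at the end; the paper's explicit accounting ($4\epsilon_{mp}\epsilon$ from the linear terms and $4\epsilon^2+2\epsilon\,\epsilon_{mp}\le6\epsilon\,\epsilon_{mp}$ from the quadratic one, using $\epsilon\le\epsilon_{mp}$) confirms that your argument does close with those constants.
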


\begin{proof}
\begin{align*}
\mu^{\new} =  ( x + \wt{\delta}_x ) ( s + \wt{\delta}_s ) 
= \mu + x \wt{\delta}_s + s \wt{\delta}_x + \wt{\delta}_x \wt{\delta}_s 
= \mu + \underbrace{ \ov{x} \wt{\delta}_s + \ov{s} \wt{\delta}_x }_{ \wt{\delta}_{\mu} } + \underbrace{ ( x - \ov{x} ) \wt{\delta}_s + ( s - \ov{s} ) \wt{\delta}_x + \wt{\delta}_x \wt{\delta}_s }_{ \epsilon_{\mu} }.
\end{align*}
%where the last step follows by $\wt{\delta}_{\mu} = \ov{x} \wt{\delta}_s + \ov{s} \wt{\delta}_x$ ( \eqref{eq:tilde_delta_x_s_y_mu}).
Taking the expectation on both sides, we have
$$ \E[\mu^{\new} - \mu - \wt{\delta}_{\mu} ] = ( x - \ov{x} ) \E[ \wt{\delta}_s ] + (s-\ov{s}) \E[ \wt{\delta}_x ] + \E[ \wt{\delta}_x \wt{\delta}_s ] .$$
%\begin{align*}
%\E[ \mu^{\new} ] = & ~ \E[\mu] + \E[ \wt{\delta}_{\mu} ] + \E[ (x-\ov{x}) \wt{\delta}_s ] + \E[ (s-\ov{s}) \wt{\delta}_x ] + \E[ \wt{\delta}_x \wt{\delta}_s ]\\
%\E[ \mu^{\new} ] = & ~ \mu + \E[ \wt{\delta}_{\mu} ] + ( x - \ov{x} ) \E[ \wt{\delta}_s ] + (s-\ov{s}) \E[ \wt{\delta}_x] + \E[ \wt{\delta}_x \wt{\delta}_s ].
%\end{align*}
%which implies that $ \E[\mu^{\new} - \mu - \wt{\delta}_{\mu} ] = ( x - \ov{x} ) \E[ \wt{\delta}_s ] + (s-\ov{s}) \E[ \wt{\delta}_x ] + \E[ \wt{\delta}_x \wt{\delta}_s ] $.

Hence, we have that
\begin{align}\label{eq:bounding_mu_new_minus_mu_l2_norm_1}
 & ~ \| \mu^{-1}  \E [ \mu^{\new}  - \mu - \wt{\delta}_{\mu} ]  \|_2 \notag \\
\leq & ~ \| \mu^{-1} (x - \ov{x} ) s \cdot s^{-1} \E[ \wt{\delta}_s ] \|_2 + \| \mu^{-1} ( s - \ov{s} ) x \cdot x^{-1} \E[ \wt{\delta}_x ] \|_2 + \| \mu^{-1} \E[ \wt{\delta}_x \wt{\delta}_s ] \|_2 \notag \\
\leq & ~ \| \mu^{-1} ( x - \ov{x} ) s \|_{\infty} \cdot \| s^{-1} \E [ \wt{ \delta_s } ] \|_2 + \| \mu^{-1} ( s  - \ov{s} ) x \|_{\infty} \cdot \| x^{-1} \E [ \wt{ \delta_x } ] \|_2 + \| \mu^{-1} \E[ \wt{\delta}_x \wt{\delta}_s ] \|_2 \notag \\
\leq & ~ \epsilon_{\mathrm{mp}} \cdot \| s^{-1} \E[ \wt{\delta}_s ] \|_2 + \epsilon_{\mathrm{mp}} \cdot \| x^{-1} \E[ \wt{\delta}_x ] \|_2 + \| \mu^{-1} \E[ \wt{\delta}_x \wt{\delta}_s ] \|_2 \notag \\
\leq & ~ 4 \epsilon_{\mathrm{mp}} \cdot \epsilon + \| \mu^{-1} \E [ \wt{\delta}_x \wt{\delta}_s ] \|_2,
\end{align}
where we used the triangle inequality in the first step, $\| a b\|_2 \leq \| a \|_{\infty} \cdot \| b \|_2$ in the second step, $\| \mu^{-1} ( x - \ov{x} ) s \|_{\infty} \leq \epsilon_{\mathrm{mp}}$ and $\| \mu^{-1} ( s - \ov{s} ) x \|_{\infty} \leq \epsilon_{\mathrm{mp}}$ (since $\ov{x} \approx_{\epsilon_{\mathrm{mp}}} x$, $\ov{s} \approx_{\epsilon_{\mathrm{mp}}} s$) in the third step, and $\| \E [ s^{-1} \wt{\delta}_s ] \|_2 \leq 2 \epsilon$ and $\| \E [ x^{-1} \wt{\delta}_x ] \|_2 \leq 2 \epsilon$ (Part 1 of Lemma~\ref{lem:stochastic_step}) at the end. 

To bound the last term, using $\E[\wt{\delta}_s] = \delta_s$ and $\E[\wt{\delta}_x ] = \delta_x$, we note that
\begin{align*}
\E[ \wt{\delta}_{x,i} \wt{\delta}_{s,i} ] = \delta_{x,i} \delta_{s,i} + \E [ ( \wt{\delta}_{x,i} - \delta_{x,i} ) ( \wt{\delta}_{s,i} - \delta_{s,i} ) ].
\end{align*}
Hence, we have
\begin{align}\label{eq:bounding_mu_new_minus_mu_l2_norm_2}
\| \mu^{-1} \E [ \wt{\delta}_x \wt{\delta}_s ] \|_2 
\leq & ~ \| \mu^{-1} \delta_x \delta_s \|_2 + \left( \sum_{i=1}^n \left( \E \left[ x_i^{-1} ( \wt{\delta}_{x,i} - \delta_{x,i} ) \cdot s_i^{-1} ( \wt{\delta}_{s,i} - \delta_{s,i} ) \right] \right)^2 \right)^{1/2} \notag \\
\leq & ~ 4 \epsilon^2 + \frac{1}{2} \left( \sum_{i=1}^n \left( \Var[  x_i^{-1} \wt{\delta}_{x,i} ]  + \Var[ s_i^{-1} \wt{\delta}_{s,i}  ] \right)^2 \right)^{1/2} \notag \\
\leq & ~ 4 \epsilon^2 + \frac{1}{2} \left( \sum_{i=1}^n 2 ( \Var[ x_i^{-1} \wt{\delta}_{x,i} ] )^2 + 2 ( \Var[ s_i^{-1} \wt{\delta}_{s,i} ] )^2 \right)^{1/2} \notag \\
\leq & ~ 4 \epsilon^2 + 2 \sqrt{ n \cdot \epsilon^4 / k^2  } 
\leq  4 \epsilon^2 + 2 \epsilon  \cdot \epsilon_{\mathrm{mp}} \leq 6 \epsilon \cdot \epsilon_{\mathrm{mp}},
\end{align}
where we used $\| \mu^{-1} \delta_x \delta_s \|_2 \leq \| x^{-1} \delta_x \|_2 \cdot \| s^{-1} \delta_s \|_2 \leq 4 \epsilon^2$ (Part 1 of Lemma~\ref{lem:stochastic_step}) and $2ab \leq a^2 + b^2$ in the second step, $(a+b)^2 \leq 2 a^2 + 2 b^2$ in the third step, $\Var[ x_i^{-1} \wt{\delta}_{x,i} ] \leq 2 \epsilon^2 / k$ and $\Var[ s_i^{-1} \wt{\delta}_{s,i} ] \leq 2 \epsilon^2 / k$ (Part 2 of Lemma~\ref{lem:stochastic_step}) in the fourth step, and $k \geq \frac{\epsilon \sqrt{n}}{\epsilon_{\mathrm{mp}}}$ at the end.

Combining  \eqref{eq:bounding_mu_new_minus_mu_l2_norm_1} and  \eqref{eq:bounding_mu_new_minus_mu_l2_norm_2}, we have that
\begin{align*}
\| \mu^{-1} ( \E [ \mu^{\new} - \mu - \wt{\delta}_{\mu} ] ) \|_2 \leq 4 \epsilon_{\mathrm{mp}} \cdot \epsilon + \| \mu^{-1} \E[ \wt{\delta}_x \wt{\delta}_s ] \|_2 \leq 10 \epsilon_{\mathrm{mp}} \cdot \epsilon.
\end{align*}
where we used $\epsilon \leq \epsilon_{\mathrm{mp}}$.

From Part 1 of Lemma~\ref{lem:stochastic_step}, we know that $\| \mu^{-1} \E[ \wt{\delta}_{\mu} ] \|_2 \leq 4 \epsilon$. Thus using triangle inequality, we know
\begin{align*}
\| \mu^{-1} ( \E [ \mu^{\new} - \mu ] ) \|_2 \leq 10 \epsilon_{\mathrm{mp}} \cdot \epsilon + 4 \epsilon \leq 5 \epsilon.
\end{align*}
\end{proof}

\begin{claim}[Part 2 of Lemma~\ref{lem:bounding_mu_new_minus_mu}]
$\Var[ \mu_i^{-1} \mu_i^{\new} ] \leq 50 \epsilon^2 / k$ for all $i$.% $\E[ (\mu_i^{-1} \mu_i^{\new}) ] \lesssim \epsilon^2 / k$ for all $i$.
\end{claim}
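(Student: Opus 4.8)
The plan is to expand $\mu_i^{\new} = (x_i + \wt{\delta}_{x,i})(s_i + \wt{\delta}_{s,i})$ and control the variance term by term. Write $\mu_i^{-1} \mu_i^{\new} = 1 + \mu_i^{-1}(x_i \wt{\delta}_{s,i} + s_i \wt{\delta}_{x,i} + \wt{\delta}_{x,i}\wt{\delta}_{s,i})$; the constant $1$ contributes nothing to the variance, so
\[
\Var[\mu_i^{-1}\mu_i^{\new}] = \Var\left[\mu_i^{-1}\left(x_i \wt{\delta}_{s,i} + s_i \wt{\delta}_{x,i} + \wt{\delta}_{x,i}\wt{\delta}_{s,i}\right)\right].
\]
First I would recall from the proof of Lemma~\ref{lem:bounding_mu_new_minus_mu} Part 1 that $x_i\wt{\delta}_{s,i} + s_i\wt{\delta}_{x,i} = \wt{\delta}_{\mu,i} + \wh{\delta}_{\mu,i}$ where $\wh{\delta}_{\mu,i} = (x_i - \ov{x}_i)\wt{\delta}_{s,i} + (s_i - \ov{s}_i)\wt{\delta}_{x,i}$. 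So the quantity inside the variance is $\mu_i^{-1}(\wt{\delta}_{\mu,i} + \wh{\delta}_{\mu,i} + \wt{\delta}_{x,i}\wt{\delta}_{s,i})$, a sum of three pieces. Using $\Var[\sum_j Z_j] \leq (\sum_j \sqrt{\Var[Z_j]})^2$ (or the cruder $\Var[a+b+c] \leq 3(\Var a + \Var b + \Var c)$), it suffices to bound the variance of each of the three pieces separately.

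The three bounds I would establish are: (i) $\Var[\mu_i^{-1}\wt{\delta}_{\mu,i}] \leq 8\epsilon^2/k$, which is exactly Part 2 of Lemma~\ref{lem:stochastic_step}; (ii) for the $\wh{\delta}_{\mu,i}$ term, since $\ov{x}_i \approx_{\epsilon_{mp}} x_i$ and $\ov{s}_i \approx_{\epsilon_{mp}} s_i$ we have $|x_i - \ov{x}_i| \leq \epsilon_{mp} x_i$ and $|s_i - \ov{s}_i| \leq \epsilon_{mp} s_i$, so $|\mu_i^{-1}\wh{\delta}_{\mu,i}| \le \epsilon_{mp}(|s_i^{-1}\wt{\delta}_{s,i}| + |x_i^{-1}\wt{\delta}_{x,i}|)$ pointwise and hence $\Var[\mu_i^{-1}\wh{\delta}_{\mu,i}] \le 2\epsilon_{mp}^2(\Var[s_i^{-1}\wt{\delta}_{s,i}] + \Var[x_i^{-1}\wt{\delta}_{x,i}]) \le 8\epsilon_{mp}^2\epsilon^2/k$, using Part 2 of Lemma~\ref{lem:stochastic_step}; since $\epsilon_{mp}$ is a small constant this is $O(\epsilon^2/k)$; (iii) for the cross term $\mu_i^{-1}\wt{\delta}_{x,i}\wt{\delta}_{s,i} = (x_i^{-1}\wt{\delta}_{x,i})(s_i^{-1}\wt{\delta}_{s,i})$, I would use the $\ell_\infty$ bounds from Part 3 of Lemma~\ref{lem:stochastic_step}: one factor is bounded by $0.02/\log n$ almost surely, so $\Var[(x_i^{-1}\wt{\delta}_{x,i})(s_i^{-1}\wt{\delta}_{s,i})] \le (0.02/\log n)^2 \Var[\text{other factor}] + (\text{expectation correction})$. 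More carefully, writing $U = x_i^{-1}\wt{\delta}_{x,i}$, $V = s_i^{-1}\wt{\delta}_{s,i}$ with $|U|, |V| \le 0.02/\log n$ deterministically, one gets $\Var[UV] \le \E[U^2 V^2] \le (0.02/\log n)^2 \E[V^2] = (0.02/\log n)^2(\Var[V] + (\E V)^2)$, and both $\Var[V] \le 2\epsilon^2/k$ and $(\E V)^2 \le 4\epsilon^2$ are already available (Parts 1 and 2 of Lemma~\ref{lem:stochastic_step}); however $(\E V)^2 \le 4\epsilon^2$ alone is too weak to give an $O(\epsilon^2/k)$ bound after multiplying only by a constant, so I would instead bound $\E[V^2]$ by noting $\|s^{-1}\wt\delta_s\|_\infty \le 0.02/\log n$ gives $V^2 \le (0.02/\log n)|V|$ is not circular — rather, simply use $\Var[UV]\le \E[U^2V^2] \le \|U\|_\infty^2 \cdot \|V\|_\infty^2$, which is $O(1/\log^4 n)$, and observe $1/\log^4 n \le \epsilon^2/k$ fails in general; the cleanest route is $\Var[UV] \le \E[U^2 V^2] \le \|V\|_\infty^2 \E[U^2]$ and then bound $\E[U^2] = \Var[U] + (\E U)^2 \le 2\epsilon^2/k + 4\epsilon^2$, absorbing the $4\epsilon^2$ into $4\epsilon^2 \cdot (0.02/\log n)^2 \le \epsilon^2/k$ whenever $k \le (50\log n)^2$ — which holds since $k = \Theta(\epsilon\sqrt n \log^2 n)$ with $\epsilon = \Theta(1/\log n)$, so $k = \Theta(\sqrt n \log n) \le \log^2 n \cdot \sqrt n$, hmm this needs $k\le 2500\log^2 n$, which is false for large $n$ — so one must instead exploit $\E[U^2]$ more sharply or use that both $|U|,|V|$ being small forces $\E[U^2V^2]$ small relative to $\Var$. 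The honest fix: $\E[U^2 V^2] \le \|U\|_\infty \|V\|_\infty \E[|U||V|] \le \|U\|_\infty\|V\|_\infty (\E[U^2]\E[V^2])^{1/2}$; with $\E[U^2],\E[V^2] = O(\epsilon^2)$ this gives $O(\epsilon^2/\log^2 n) = O(\epsilon^2 \cdot \epsilon^2) = O(\epsilon^4)$, and since $\epsilon^2 = \Theta(1/\log^2 n)$ while $1/k = \Theta(1/(\sqrt n\log n))$, we need $\epsilon^4 \le \epsilon^2/k$, i.e. $\epsilon^2 \le 1/k$, i.e. $1/\log^2 n \le 1/(\sqrt n \log n)$ — again false. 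This strongly suggests the intended argument uses the pointwise-small structure together with the fact that each coordinate of $\wt\delta_\mu$ is itself mean $\delta_\mu$ with small $\ell_\infty$, so I would revisit: the cross term variance is genuinely the delicate step.

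Reflecting on the difficulty: I expect the cross-term $\Var[\mu_i^{-1}\wt\delta_{x,i}\wt\delta_{s,i}]$ to be the main obstacle, and the right way to handle it is probably to re-expand $\wt\delta_{x,i}$ and $\wt\delta_{s,i}$ in terms of $\wt\delta_\mu$ via the explicit formulas \eqref{eq:def_tilde_delta_s} and \eqref{eq:def_tilde_delta_x} and track which randomness (which coordinates of $\wt\delta_\mu$) appears, rather than treating $\wt\delta_{x,i},\wt\delta_{s,i}$ as black-box bounded random variables; alternatively, observe that $\wt\delta_{x,i}\wt\delta_{s,i}$ differs from its mean only through the randomness in $\wt\delta_\mu$, and bound $\Var[UV] = \Var[UV - \E[U]V] + \dots$ by splitting $U = \E[U] + (U - \E[U])$, $V = \E[V] + (V-\E[V])$, so $UV - \E[UV] = \E[U](V-\E V) + \E[V](U - \E U) + ((U-\E U)(V - \E V) - \Cov(U,V))$; the first two terms have variance $\le (\E U)^2\Var[V] + (\E V)^2\Var[U] = O(\epsilon^2)\cdot O(\epsilon^2/k) = O(\epsilon^4/k) = O(\epsilon^2/k)$ since $\epsilon < 1$, and the last term has variance $\le \E[(U - \E U)^2(V-\E V)^2] \le \|U-\E U\|_\infty^2 \Var[V]$, and $\|U - \E U\|_\infty \le \|U\|_\infty + |\E U| = O(1/\log n)$ so this is $O(1/\log^2 n)\cdot O(\epsilon^2/k) = O(\epsilon^2/(k\log^2 n)) \le O(\epsilon^2/k)$. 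That works. So the plan for (iii) is: split into mean and fluctuation, bound the two "linear" pieces by $(\E U)^2\Var[V]$-type terms using Parts 1 and 2 of Lemma~\ref{lem:stochastic_step}, and bound the "quadratic" piece using the $\ell_\infty$ bound from Part 3 on the fluctuation times the variance from Part 2. Summing all three contributions (i), (ii), (iii) with the constant $3$ from $\Var[a+b+c]\le 3(\Var a+\Var b+\Var c)$ and being generous with constants gives $\Var[\mu_i^{-1}\mu_i^{\new}] \le 50\epsilon^2/k$, matching the claim.
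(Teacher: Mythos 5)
Your final argument is correct and essentially the same as the paper's: the paper decomposes $\mu^{\new}-\mu$ into the same pieces ($\wt\delta_\mu$, the two $\epsilon_{mp}$-weighted terms, and $\wt\delta_x\wt\delta_s$), bounds the first three exactly as you do, and handles the cross term via its Lemma~\ref{lem:var_xy}, $\Var[UV]\le 2\Sup[U^2]\Var[V]+2\Sup[V^2]\Var[U]$ with $\Sup$ the almost-sure bound $0.02/\log n$ from Part 3 of Lemma~\ref{lem:stochastic_step} — which is precisely the mean-plus-fluctuation splitting you arrive at after discarding the approaches that (as you correctly diagnose) are too lossy. The only cosmetic difference is that you group the two $\epsilon_{mp}$ terms into one $\wh\delta_\mu$ and use a factor $3$ where the paper uses $4$.
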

\begin{proof}
Recall that
\begin{align*}
\mu^{\new} = \mu + \wt{\delta}_{\mu} + ( x - \ov{x} ) \wt{\delta}_s + ( s - \ov{s} ) \wt{\delta}_x + \wt{\delta}_x \wt{\delta}_s.
\end{align*}

We can upper bound the variance of $\mu_i^{-1} \mu_i^{\new}$,
\begin{align*}
 \Var[ \mu_i^{-1} \mu_i^{\new} ] 
\leq & ~ 4 \Var [ \mu_i^{-1} \wt{\delta}_{\mu,i} ] + 4 \Var [ \mu_i^{-1} (x_i -\ov{x}_i) \wt{\delta}_{s,i} ] + 4 \Var [ \mu_i^{-1} ( s_i - \ov{s}_i ) \wt{\delta}_{x,i} ] + 4 \Var [ \mu_i^{-1} \wt{\delta}_{x,i} \wt{\delta}_{s,i} ] \\
\leq & ~ 32\frac{\epsilon^2}{k} + 4\frac{\epsilon^2}{k} + 4\frac{\epsilon^2}{k} + \Var[ \mu_i^{-1} \wt{\delta}_{x,i} \wt{\delta}_{s,i} ] \\
= & ~ 40 \frac{\epsilon^2}{k} + \Var[ x_i^{-1} \wt{\delta}_{x,i} \cdot s_i^{-1} \wt{\delta}_{s,i} ] \\
\leq & ~ 40 \frac{ \epsilon^2 }{ k } + 2 \Sup[(x_i^{-1} \wt{\delta}_{x,i} )^2] \cdot \Var[s_i^{-1} \wt{\delta}_{s,i}] + 2 \Sup[(s_i^{-1} \wt{\delta}_{s,i} )^2] \cdot \Var[ x_i^{-1} \wt{\delta}_{x,i} ]  \\
\leq & ~ 40 \frac{ \epsilon^2 }{ k } + 2 \cdot (\frac{0.02}{\log{n}})^2 \cdot \frac{ \epsilon^2 }{ k } + 2 \cdot (\frac{0.02}{\log{n}})^2 \cdot \frac{\epsilon^2}{k} 
\leq 50 \frac{\epsilon^2}{k}.
\end{align*}
where the second step follows by the inequality $\Var[ \mu_i^{-1} \wt{\delta}_{\mu,i} ] \leq 8 \epsilon^2 / k$ (Part 2 of Lemma~\ref{lem:stochastic_step}),
\begin{align*}
\Var[ \mu_i^{-1} (x_i - \ov{x}_i) \wt{\delta}_{s,i} ]  
= \Var[ x_i^{-1} (x_i - \ov{x}_i) s_i^{-1} \wt{\delta}_{s,i} ] 
\leq 2 \epsilon_{\mathrm{mp}}^2 \Var[ s_i^{-1} \wt{\delta}_{s,i} ]  
\leq \epsilon^2 / k. 
\end{align*}
and a similar inequality $\Var[ \mu_i^{-1} (s_i - \ov{s}_i) \wt{\delta}_{x,i} ] \leq \epsilon^2 / k $, the third step follows by the definition $\mu = xs$, the fourth step follows by the inequality $\Var[xy]\leq 2 \Sup[x^2] \Var[y] + 2 \Sup[y^2] \Var[x]$ (Lemma~\ref{lem:var_xy}) with $\Sup$ denoting the deterministic maximum of the random variable, the fifth step follows by the inequalities $\Var[ s_i^{-1} \wt{\delta}_{s,i} ] \leq 2 \epsilon^2 / k$ and $\Var[ x_i^{-1} \wt{\delta}_{x,i} ] \leq 2 \epsilon^2 / k $ (Part 2 of Lemma~\ref{lem:stochastic_step}).
\end{proof}

\begin{claim}[Part 3 of Lemma~\ref{lem:bounding_mu_new_minus_mu}]
$\| \mu^{-1} ( \mu^{\new} - \mu ) \|_{\infty} \leq \frac{0.021}{\log{n}}$.
\end{claim}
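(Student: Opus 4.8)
The plan is to start from the exact decomposition already established in the proof of Part 1, namely
\[
\mu^{\new} - \mu = \wt{\delta}_{\mu} + (x - \ov{x})\wt{\delta}_s + (s - \ov{s})\wt{\delta}_x + \wt{\delta}_x \wt{\delta}_s,
\]
divide coordinate-wise by $\mu$, and bound the $\ell_\infty$ norm of each of the four pieces using the deterministic per-coordinate bounds proved in Part 3 of Lemma~\ref{lem:stochastic_step} (which hold always, thanks to the resampling step on Line~\ref{line:resample}). First I would handle the $\wt{\delta}_{\mu}$ term: $\|\mu^{-1}\wt{\delta}_{\mu}\|_\infty \leq \tfrac{0.02}{\log n}$ is exactly Part 3 of Lemma~\ref{lem:stochastic_step}. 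Next, for the cross terms, write $\mu_i^{-1}(x_i - \ov{x}_i)\wt{\delta}_{s,i} = x_i^{-1}(x_i - \ov{x}_i)\cdot s_i^{-1}\wt{\delta}_{s,i}$; since $\ov{x}\approx_{\epsilon_{mp}} x$ we have $|x_i^{-1}(x_i - \ov{x}_i)| \leq \epsilon_{mp} \leq 1/40000$, and $|s_i^{-1}\wt{\delta}_{s,i}| \leq \tfrac{0.02}{\log n}$, so this term is at most $\tfrac{\epsilon_{mp}\cdot 0.02}{\log n}$, which is negligible; the same bound applies to $\mu_i^{-1}(s_i - \ov{s}_i)\wt{\delta}_{x,i}$. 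Finally, for the quadratic term, $|\mu_i^{-1}\wt{\delta}_{x,i}\wt{\delta}_{s,i}| = |x_i^{-1}\wt{\delta}_{x,i}|\cdot|s_i^{-1}\wt{\delta}_{s,i}| \leq (\tfrac{0.02}{\log n})^2$, again negligible since $n\geq 10$.

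Adding the four contributions gives $\|\mu^{-1}(\mu^{\new}-\mu)\|_\infty \leq \tfrac{0.02}{\log n} + \tfrac{2\epsilon_{mp}\cdot 0.02}{\log n} + \tfrac{0.0004}{\log^2 n} \leq \tfrac{0.021}{\log n}$, using that $\epsilon_{mp}$ is tiny and $\log n$ is bounded below. There is essentially no obstacle here: this is a routine triangle-inequality calculation, and the only thing to be slightly careful about is to invoke the \emph{deterministic} (always-holds-with-resampling) $\ell_\infty$ bounds from Part 3 of Lemma~\ref{lem:stochastic_step} rather than the in-expectation or variance bounds, and to express each term in the form $(\text{ratio involving }x\text{ or }s)\times(\text{scaled step})$ so that the $\approx_{\epsilon_{mp}}$ closeness of $\ov{x},\ov{s}$ to $x,s$ can be applied cleanly.
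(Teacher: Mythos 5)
Your proposal is correct and follows essentially the same route as the paper: the same decomposition $\mu^{\new}-\mu = \wt{\delta}_{\mu} + (x-\ov{x})\wt{\delta}_s + (s-\ov{s})\wt{\delta}_x + \wt{\delta}_x\wt{\delta}_s$, the same factorizations into $(\epsilon_{mp}\text{-ratio})\times(\text{scaled step})$, and the same use of the deterministic Part-3 $\ell_\infty$ bounds guaranteed by the resampling step. The numerical bookkeeping matches the paper's (the three non-$\wt{\delta}_\mu$ terms sum to at most $\tfrac{1}{1000\log n}$, giving $\tfrac{0.021}{\log n}$ in total), so there is nothing to add.
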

\begin{proof}
We again note that
\begin{align*}
\mu^{\new} = \mu + \wt{\delta}_{\mu} + ( x - \ov{x} ) \wt{\delta}_s + ( s - \ov{s} ) \wt{\delta}_x + \wt{\delta}_x \wt{\delta}_s.
\end{align*}
Hence, we have
\begin{align*}
 & ~ | \mu_i^{-1} ( \mu_i^{\new} - \mu_i - \wt{\delta}_{\mu,i} ) | \\
\leq & ~ | ( x - \ov{x} )_i \mu_i^{-1} \wt{\delta}_{s,i} | + | (s - \ov{s})_i \mu_i^{-1} \wt{\delta}_{x,i} | + | \mu_i^{-1} \wt{\delta}_{x,i} \wt{\delta}_{s,i} | \\
= & ~ | ( x - \ov{x} )_i x_i^{-1} | \cdot | s_i^{-1} \wt{\delta}_{s,i} | + | (s - \ov{s})_i s_i^{-1} | \cdot | x_i^{-1} \wt{\delta}_{x,i} | + | x_i^{-1} \wt{\delta}_{x,i} | \cdot | s_i^{-1} \wt{\delta}_{s,i} | \\
\leq & ~ \epsilon_{\mathrm{mp}} | s_i^{-1} \wt{\delta}_{s,i} | + \epsilon_{\mathrm{mp}} | x_i^{-1} \wt{\delta}_{x,i} | + | s_i^{-1} \wt{\delta}_{s,i} | |x_i^{-1} \wt{\delta}_{x,i} | \\
\leq & ~ \epsilon_{\mathrm{mp}} \cdot \frac{0.2}{\log{n}} + \epsilon_{\mathrm{mp}} \cdot \frac{0.02}{\log{n}} + (\frac{0.02}{\log{n}})^2 \leq \frac{1}{1000 \log{n}},
\end{align*}
where the first step follows by the triangle inequality, the second step follows by the definition $\mu_i = x_i s_i$, the third step follows by the invariants $x \approx_{\epsilon_{\mathrm{mp}}} \ov{x}$ and $s \approx_{\epsilon_{\mathrm{mp}}} \ov{s}$,  the fifth step follows by the inequalities $ | s_i^{-1} \wt{\delta}_{s,i} | \leq \frac{0.02}{\log{n}}$ and $| x_i^{-1} \wt{\delta}_{x,i} | \leq \frac{0.02}{\log{n}}$ (Part 3 of Lemma~\ref{lem:stochastic_step}).

Since we know that $| \mu_i^{-1} \wt{\delta}_{\mu,i} | \leq \frac{0.02}{\log{n}}$ (Part 3 of Lemma~\ref{lem:stochastic_step}), we have
\begin{align*}
| \mu_i^{-1} ( \mu_i^{\new} - \mu_i ) | \leq \frac{1}{1000 \log{n}} + \frac{0.02}{\log{n}} \leq \frac{0.021}{\log{n}} .
\end{align*}
\end{proof}

%%%%%%%%%%%%%%%%%%%%%%%%%%%%%%%%%%%%%%%%%%%%%%%%%%%%%%%%%%%%%%%%%%%%%%%%%%%%%%%%%%%%%%%%%%%%%%%%%%%%%%%%%%%%%%%%%%%%%%%%%%%%%%%%%%%%%%%%%%%%%%%%%%%%%%%%%%%%%%%%%%%%%%%%%%%%%%%%%%%%

\subsection{Stochastic central path}
Now, we are ready to prove $x_i s_i \approx_{0.1} t$ during the whole algorithm. As explained in the proof outline (see Section~\ref{sec:stochastic_outline}), we will prove this bound by analyzing the  potential $\Phi_{\lambda}(\mu/t-1)$ where  $\Phi_{\lambda}(r) = \sum_{i=1}^n \cosh (\lambda r_i)$.

First, we give some basic properties of $\Phi_{\lambda}$.
\begin{lemma}[Basic properties of potential function]\label{lem:potential_function_Phi_cosh}
Let $\Phi_{\lambda}(r) = \sum_{i=1}^n \cosh (\lambda r_i)$ for some $\lambda > 0$. For any vector $r \in \R^n$,\\
1. For any vector $\| v \|_{\infty} \leq 1/\lambda$, we have that
\begin{align*}
\Phi_{\lambda}( r + v ) \leq \Phi_{\lambda} (r) + \langle \nabla \Phi_{\lambda} (r), v \rangle + 2 \| v \|_{\nabla^2 \Phi_{\lambda}(r)}^2.
\end{align*}
2.
$
\| \nabla \Phi_{\lambda}(r) \|_2 \geq \frac{\lambda}{ \sqrt{n} } ( \Phi_{\lambda}(r) - n ) .
$\\
3.
$
\left( \sum_{i=1}^n \lambda^2 \cosh^2 (\lambda r_i) \right)^{1/2} \leq \lambda\sqrt{n} + \| \nabla \Phi_{\lambda} (r) \|_2.
$
\end{lemma}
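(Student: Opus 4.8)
The plan is to use the fact that $\Phi_{\lambda}$ is separable: $\Phi_{\lambda}(r)=\sum_{i=1}^n\cosh(\lambda r_i)$, so $\nabla\Phi_{\lambda}(r)_i=\lambda\sinh(\lambda r_i)$ and $\nabla^2\Phi_{\lambda}(r)$ is the diagonal matrix with entries $\lambda^2\cosh(\lambda r_i)$. Each of the three claims then reduces to a single-variable inequality about $\cosh$ and $\sinh$, summed over coordinates.

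For Part 1, I would fix a coordinate $i$ and apply Taylor's theorem with Lagrange remainder to the function $t\mapsto\cosh(\lambda t)$: there is $\xi_i$ between $r_i$ and $r_i+v_i$ with
\[
\cosh(\lambda(r_i+v_i))=\cosh(\lambda r_i)+\lambda\sinh(\lambda r_i)\,v_i+\tfrac12\lambda^2\cosh(\lambda\xi_i)\,v_i^2.
\]
Then I would bound the remainder using the addition formula $\cosh(a+b)=\cosh a\cosh b+\sinh a\sinh b\le\cosh(a)e^{|b|}$ together with $\lambda|\xi_i-r_i|\le\lambda|v_i|\le1$, which gives $\cosh(\lambda\xi_i)\le e\cdot\cosh(\lambda r_i)$; since $e/2<2$, the remainder is at most $2\lambda^2\cosh(\lambda r_i)v_i^2$. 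Summing over $i$ and recognizing $\sum_i\lambda^2\cosh(\lambda r_i)v_i^2=\|v\|_{\nabla^2\Phi_{\lambda}(r)}^2$ yields the claim.

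For Part 2, observe that $\Phi_{\lambda}(r)-n=\sum_i(\cosh(\lambda r_i)-1)\ge0$, so Cauchy--Schwarz gives $(\Phi_{\lambda}(r)-n)^2\le n\sum_i(\cosh(\lambda r_i)-1)^2$. The pointwise identity $\sinh^2 x=\cosh^2 x-1=(\cosh x-1)(\cosh x+1)\ge(\cosh x-1)^2$ (valid because $\cosh x\ge1$) then gives $\sum_i\sinh^2(\lambda r_i)\ge\sum_i(\cosh(\lambda r_i)-1)^2\ge\frac1n(\Phi_{\lambda}(r)-n)^2$; multiplying by $\lambda^2$ and taking square roots finishes Part 2. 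For Part 3, I would use $\cosh^2 x=1+\sinh^2 x$ to write $\sum_i\lambda^2\cosh^2(\lambda r_i)=\lambda^2 n+\sum_i\lambda^2\sinh^2(\lambda r_i)=\lambda^2 n+\|\nabla\Phi_{\lambda}(r)\|_2^2$, and then apply $\sqrt{a^2+b^2}\le a+b$ for $a,b\ge0$ with $a=\lambda\sqrt n$ and $b=\|\nabla\Phi_{\lambda}(r)\|_2$.

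The step I expect to need the most care is Part 1, since it is the only place where a nontrivial constant must be controlled: one has to verify that the Taylor remainder coefficient $\tfrac12\cosh(\lambda\xi_i)/\cosh(\lambda r_i)$ stays below $2$ over the whole allowed range $\|v\|_{\infty}\le1/\lambda$, which is exactly what the $e^{|b|}\le e$ estimate buys. Parts 2 and 3 are then direct consequences of elementary identities for $\cosh$ and $\sinh$ together with Cauchy--Schwarz and the inequality $\sqrt{a^2+b^2}\le a+b$.
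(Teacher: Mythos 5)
Your proposal is correct and follows essentially the same route as the paper: Taylor's theorem with Lagrange remainder plus the bound $\cosh(\lambda\xi_i)\le e\cosh(\lambda r_i)$ for Part 1, the pointwise inequality $\sinh^2 x=(\cosh x-1)(\cosh x+1)\ge(\cosh x-1)^2$ combined with the $\ell_1$--$\ell_2$ comparison for Part 2, and $\cosh^2=1+\sinh^2$ with $\sqrt{a^2+b^2}\le a+b$ for Part 3. The only cosmetic differences are that you derive the remainder bound via the addition formula rather than bounding $e^{\pm\zeta_i}$ directly, and you phrase the norm comparison as Cauchy--Schwarz.
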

\begin{proof}
For each $i\in [n]$, we use $r_i$ to denote the $i$-th coordinate of vector $r$.

{\bf Proof of Part 1.} %%% reviewer asked us to add the following sentence
Using mean-value forms of Taylor's theorem, we have that
\begin{align*}
\cosh( \lambda ( r_i + v_i ) ) = \cosh( \lambda r_i ) + \lambda \sinh ( \lambda r_i ) v_i + \frac{ \lambda^2 }{ 2 } \cosh (\zeta_i) v_i^2,
\end{align*}
where $\zeta_i$ is between $\lambda r_i$ and $\lambda (r_i + v_i)$. By definition of $\cosh$ and the assumption that $\| v \|_{\infty} \leq \frac{1}{ 2 \lambda }$ , we have that
\begin{align*}
\cosh( \zeta_i ) 
=  \frac{1}{2} \exp( \zeta_i ) + \frac{1}{2} \exp( - \zeta_i ) 
\leq  \exp(1) \cdot \frac{1}{2} ( \exp( \lambda r_i ) + \exp( - \lambda r_i ) ) 
\leq  3 \cosh (\lambda r_i).
\end{align*}
Hence, we have
\begin{align*}
\cosh ( \lambda ( r_i + v_i ) ) \leq \cosh ( \lambda r_i ) + \lambda \sinh( \lambda r_i ) v_i + 2 \lambda^2 \cosh ( \lambda r_i ) v_i^2.
\end{align*}
Summing over all the coordinates gives
\begin{align*}
& ~ \sum_{i=1}^n  \cosh( \lambda (r_i+v_i)  ) \leq \sum_{i=1}^n \left[\cosh ( \lambda r_i ) + 2 \lambda \sinh( \lambda r_i ) v_i + \lambda^2 \cosh ( \lambda r_i ) v_i^2 \right] \\
\implies & ~ \Phi_{\lambda}(r+v) \leq \Phi_{\lambda}(r) + \langle \nabla \Phi_{\lambda}(r), v \rangle + 2 \| v \|_{\nabla^2 \Phi_{\lambda}(r)}^2.
\end{align*}

{\bf Proof of Part 2.}
Since $\Phi_{\lambda}(r) = \sum_{i=1}^n \cosh(\lambda r_i)$, then $$\nabla \Phi_{\lambda}(r) = \begin{bmatrix} \lambda \sinh(\lambda r_1) & \lambda \sinh(\lambda r_2) & \cdots & \lambda \sinh(\lambda r_n) \end{bmatrix}^\top.$$
Thus, we can lower bound $\| \nabla \Phi_{\lambda}(r) \|_2$ in the following way,
\begin{align*}
\| \nabla \Phi_{\lambda}(r) \|_2 
= & ~ \left( \sum_{i=1}^n \lambda^2 \sinh^2 ( \lambda r_i ) \right)^{1/2} \\
= & ~ \left( \sum_{i=1}^n \lambda^2 ( \cosh^2 ( \lambda r_i ) - 1) \right)^{1/2} & \text{~by~} \cosh^2(y) - \sinh^2 (y) = 1, \forall y \\
\geq & ~ \frac{\lambda}{ \sqrt{n} } \sum_{i=1}^n \sqrt{ \cosh^2 ( \lambda r_i ) - 1 } & \text{~by~}\| \cdot \|_2 \geq \frac{1}{\sqrt{n}} \|  \cdot \|_1 \\
\geq & ~ \frac{\lambda}{ \sqrt{n} } \sum_{i=1}^n (\cosh( \lambda r_i) - 1) & \text{~by~$\cosh( \lambda r_i) \geq 1$} \\
= & ~ \frac{\lambda}{\sqrt{n}} ( \Phi_{\lambda}(r) - n ). & \text{~by~def~of~}\Phi(r)
\end{align*}

{\bf Proof of Part 3.}

\begin{align*}
\left( \sum_{i=1}^n \lambda^2 \cosh^2 (\lambda r_i) \right)^{1/2} 
= & ~ \left( \sum_{i=1}^n \lambda^2 + \lambda^2 \sinh^2 (\lambda r_i) \right)^{1/2} & \text{~by~} \cosh^2 (y) - \sinh^2(y) = 1, \forall y \\
\leq & ~ ( n \lambda^2 )^{1/2} + \left( \sum_{i=1}^n \lambda^2 \sinh^2 (\lambda r_i) \right)^{1/2} \\
= & ~ \lambda \sqrt{n} + \| \nabla \Phi_{\lambda}(r) \|_2.
\end{align*}
\end{proof}

The following lemma shows that the potential $\Phi$ is decreasing in expectation when $\Phi$ is large.
\begin{lemma} \label{lem:potential_martingale}
Under the Assumption~\ref{ass:x_s_mu}, we have
\begin{align*}
\E \left[ \Phi_{\lambda} \left( \frac{ \mu^{\new} }{ t^{\new} } - 1 \right) \right] \leq \Phi_{\lambda} \left( \frac{ \mu }{ t } - 1 \right) - \frac{\lambda \epsilon}{15\sqrt{n}} \left( \Phi_{\lambda} \left( \frac{ \mu }{ t } - 1 \right) - 10 n \right).
\end{align*}
\end{lemma}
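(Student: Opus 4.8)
The plan is to Taylor-expand the potential. Set $r := \mu/t-1$, $r^{\new} := \mu^{\new}/t^{\new}-1$, and $v := r^{\new}-r = \mu^{\new}/t^{\new}-\mu/t$, so that the goal is to bound $\E[\Phi_\lambda(r+v)]$. The tool is Part~1 of Lemma~\ref{lem:potential_function_Phi_cosh}, whose hypothesis $\|v\|_\infty\le1/\lambda$ I would verify first: writing $v_i=\frac{\mu_i^{\new}-\mu_i}{t^{\new}}+\mu_i\big(\frac1{t^{\new}}-\frac1t\big)$ and using Part~3 of Lemma~\ref{lem:bounding_mu_new_minus_mu} ($\|\mu^{-1}(\mu^{\new}-\mu)\|_\infty\le\tfrac{0.021}{\log n}$), the assumption $\mu\approx_{0.1}t$, $t^{\new}=(1-\tfrac{\epsilon}{3\sqrt n})t$, and $\epsilon\le\tfrac1{40000\log n}$ gives $\|v\|_\infty\le\tfrac{0.024}{\log n}\le\tfrac1{40\log n}=\tfrac1\lambda$. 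Applying the expansion pointwise in the randomness and taking expectations,
\begin{align*}
\E[\Phi_\lambda(r+v)] \le \Phi_\lambda(r) + \big\langle \nabla\Phi_\lambda(r),\,\E[v]\big\rangle + 2\,\E\big[\|v\|_{\nabla^2\Phi_\lambda(r)}^2\big].
\end{align*}

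\textbf{First-order term.} As in the proof of Lemma~\ref{lem:bounding_mu_new_minus_mu}, write $\mu^{\new}=\mu+\wt\delta_\mu+\wh{\delta}_\mu$ with $\E[\wt\delta_\mu]=\delta_\mu$. Substituting the definition of $\delta_\mu$ in \textsc{Main} (Algorithm~\ref{alg:main}) gives $\mu+\delta_\mu=\tfrac{t^{\new}}{t}\mu-\tfrac\epsilon2t^{\new}\tfrac{\nabla\Phi_\lambda(r)}{\|\nabla\Phi_\lambda(r)\|_2}$, hence
\begin{align*}
\E[v]=-\frac\epsilon2\,\frac{\nabla\Phi_\lambda(r)}{\|\nabla\Phi_\lambda(r)\|_2}+g,\qquad g:=\frac{\E[\wh{\delta}_\mu]}{t^{\new}},\qquad\|g\|_2\le12\,\epsilon_{mp}\,\epsilon,
\end{align*}
the last bound following from Part~1 of Lemma~\ref{lem:bounding_mu_new_minus_mu} and $\mu\approx_{0.1}t$. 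Thus $\langle\nabla\Phi_\lambda(r),\E[v]\rangle=-\tfrac\epsilon2\|\nabla\Phi_\lambda(r)\|_2+\langle\nabla\Phi_\lambda(r),g\rangle\le-(\tfrac12-12\epsilon_{mp})\,\epsilon\,\|\nabla\Phi_\lambda(r)\|_2$, which is the source of the decrease.

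\textbf{Second-order term.} Split $\E[v_i^2]=\E[v_i]^2+\Var[v_i]$. The fluctuation obeys $\Var[v_i]=\Var[\mu_i^{\new}]/(t^{\new})^2\le 62\epsilon^2/k$ by Part~2 of Lemma~\ref{lem:bounding_mu_new_minus_mu} and $\mu\approx_{0.1}t$, so $\sum_i\lambda^2\cosh(\lambda r_i)\Var[v_i]\le\tfrac{62\epsilon^2\lambda^2}{k}\Phi_\lambda(r)$, which is tiny relative to $\tfrac{\epsilon\lambda}{\sqrt n}\Phi_\lambda(r)$ since $k\ge1000\epsilon\sqrt n\log^2 n/\epsilon_{mp}$ and $\lambda=40\log n$. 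For the mean part, $\E[v_i]^2\le1.01\cdot\tfrac{\epsilon^2}{4}\,\tfrac{\lambda^2\sinh^2(\lambda r_i)}{\|\nabla\Phi_\lambda(r)\|_2^2}+101\,g_i^2$, and the elementary bounds $\cosh x\le1+|\sinh x|$ and $\max_i\lambda|\sinh(\lambda r_i)|\le\|\nabla\Phi_\lambda(r)\|_2$ give $\sum_i\lambda^2|\sinh(\lambda r_i)|^3\le\tfrac1\lambda\|\nabla\Phi_\lambda(r)\|_2^3$ and $\max_i\cosh(\lambda r_i)\le1+\tfrac1\lambda\|\nabla\Phi_\lambda(r)\|_2$, whence
\begin{align*}
\sum_{i=1}^n\lambda^2\cosh(\lambda r_i)\,\frac{\lambda^2\sinh^2(\lambda r_i)}{\|\nabla\Phi_\lambda(r)\|_2^2}\le\lambda^2+\lambda\|\nabla\Phi_\lambda(r)\|_2,\qquad\sum_{i=1}^n\lambda^2\cosh(\lambda r_i)\,g_i^2\le\big(\lambda^2+\lambda\|\nabla\Phi_\lambda(r)\|_2\big)\|g\|_2^2.
\end{align*}
Since $\lambda\epsilon=\tfrac1{1000}$ and $\epsilon_{mp}\le\tfrac1{40000}$, each of these contributes only a tiny additive constant plus a tiny multiple of $\epsilon\|\nabla\Phi_\lambda(r)\|_2$.

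\textbf{Combining and conclusion.} Adding the three displays and using the parameter values one gets $\E[\Phi_\lambda(r+v)]-\Phi_\lambda(r)\le-\tfrac\epsilon3\|\nabla\Phi_\lambda(r)\|_2+\tfrac{\epsilon}{100\sqrt n}\Phi_\lambda(r)+\tfrac1{10^6}$. Invoking Part~2 of Lemma~\ref{lem:potential_function_Phi_cosh}, $\|\nabla\Phi_\lambda(r)\|_2\ge\tfrac\lambda{\sqrt n}(\Phi_\lambda(r)-n)$, turns $-\tfrac\epsilon3\|\nabla\Phi_\lambda(r)\|_2$ into $-\tfrac{\epsilon\lambda}{3\sqrt n}(\Phi_\lambda(r)-n)$, and a short computation using $\Phi_\lambda(r)\ge n$, $\lambda=40\log n$, $n\ge10$ shows the right-hand side is at most $-\tfrac{\lambda\epsilon}{15\sqrt n}(\Phi_\lambda(r)-10n)$; the gap between the ``ideal'' constants $(2,n)$ and the stated $(15,10n)$ is precisely the slack that absorbs the additive constant, the variance term, and the $\wh{\delta}_\mu$ (i.e.\ $g$) error. \textbf{The main obstacle} is the second-order term: the trivial bound $\|v\|_{\nabla^2\Phi_\lambda(r)}^2\le\Phi_\lambda(r)$ coming from $\|v\|_\infty\le1/\lambda$ is of the same order as the decrease we want, so it is useless; one must exploit that $\E[v]$ is (up to negligible error) a scalar multiple of $\nabla\Phi_\lambda(r)/\|\nabla\Phi_\lambda(r)\|_2$ and hence small in the $\nabla^2\Phi_\lambda(r)$-seminorm, and that the stochastic fluctuation has per-coordinate variance only $O(\epsilon^2/k)$, which is crushed by the batch size $k=\Omega(\epsilon\sqrt n\log^2 n/\epsilon_{mp})$, and then weave these together with the inequalities of Lemma~\ref{lem:potential_function_Phi_cosh} relating $\sum_i\lambda^2\cosh^2(\lambda r_i)$, $\|\nabla\Phi_\lambda(r)\|_2$, and $\Phi_\lambda(r)$ while keeping every constant inside the $\tfrac{\lambda\epsilon}{15\sqrt n}(\Phi_\lambda(r)-10n)$ budget.
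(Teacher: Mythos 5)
Your proposal is correct and follows essentially the same route as the paper: the same decomposition $\frac{\mu^{\new}}{t^{\new}}-1=r+v$, the same verification of $\|v\|_\infty\le 1/\lambda$ via Part~3 of Lemma~\ref{lem:bounding_mu_new_minus_mu}, the same application of Part~1 of Lemma~\ref{lem:potential_function_Phi_cosh}, the same first-order cancellation against $-\tfrac{\epsilon}{2}\|\nabla\Phi_\lambda(r)\|_2$ with a $10\epsilon_{mp}\epsilon$-size correction from $\wh{\delta}_\mu$, and the same mean/variance split of $\E[v_i^2]$ with the variance crushed by $k$. The only difference is bookkeeping in the second-order mean term: the paper applies Cauchy--Schwarz against $\|\E[\mu^{-1}(\mu^{\new}-\mu)]\|_4^2\le(5\epsilon)^2$ together with Part~3 of Lemma~\ref{lem:potential_function_Phi_cosh}, whereas you split $\E[v]$ into its gradient-direction component plus $g$ and use $\cosh x\le 1+|\sinh x|$ with $\max_i\lambda|\sinh(\lambda r_i)|\le\|\nabla\Phi_\lambda(r)\|_2$ — both yield the same negligible contribution.
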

\begin{proof}
Let $\epsilon_{\mu} = \mu^{\new} - \mu - \wt{\delta}_{\mu}$. From the definition, we have
\begin{align*}
 \mu^{\new} - t^{\new} 
= \mu + \wt{\delta}_{\mu} + \epsilon_{\mu} - t^{\new},
\end{align*}
which implies
\begin{align}\label{eq:rewrite_mu_t_new}
\frac{ \mu^{\new} }{ t^{\new} } - 1 = & ~ \frac{\mu}{t^{\new}} + \frac{1}{t^{\new} } ( \wt{\delta}_{\mu} + \epsilon_{\mu} )  -  1 \notag \\
= & ~ \frac{\mu}{t} \frac{t}{t^{\new}} + \frac{1}{t^{\new} } ( \wt{\delta}_{\mu} + \epsilon_{\mu} )  -  1 \notag \\
= & ~ \frac{\mu}{t} + \frac{\mu}{t} ( \frac{t}{t^{\new}} - 1 ) + \frac{1}{t^{\new} } ( \wt{\delta}_{\mu} + \epsilon_{\mu} )  -  1 \notag \\
= & ~ \frac{\mu}{t} - 1 + \underbrace{ \frac{\mu}{t} ( \frac{t}{t^{\new}} - 1 ) + \frac{1}{t^{\new} } ( \wt{\delta}_{\mu} + \epsilon_{\mu} ) }_{v}.
\end{align}

To apply Lemma~\ref{lem:potential_function_Phi_cosh} with $r = \mu / t - 1$ and $r+v=\mu^{\new}/t^{\new}-1$, we first compute the expectation of $v$
\begin{align}\label{eq:expectation_v}
\E[v] = & ~ \frac{\mu}{t} ( \frac{t}{t^{\new}} - 1 ) + \frac{1}{t^{\new} } ( \E[ \wt{\delta}_{\mu} ] + \E[ \epsilon_{\mu} ] ) \notag \\
= & ~ \frac{\mu}{t} ( \frac{t}{t^{\new}} - 1 ) + \frac{1}{t^{\new}}  ( \delta_{\mu} + \E[ \epsilon_{\mu} ] ) \notag \\
= & ~ \frac{\mu}{t} ( \frac{t}{t^{\new}} - 1 ) + \frac{1}{t^{\new}}  \left( \left( (\frac{t^{\new}}{t} - 1)   \mu - \frac{\epsilon}{2} t^{\new} \frac{ \nabla \Phi_{\lambda} (\mu/t -1 ) }{ \| \nabla \Phi_{\lambda} (\mu/t -1 ) \|_2 }  \right)    + \E[ \epsilon_{\mu} ] \right) \notag \\
= & ~ - \frac{\epsilon}{2} \frac{ \nabla \Phi_{\lambda} (\mu/t-1) }{ \| \nabla \Phi_{\lambda}(\mu/t-1) \|_2} + \frac{1}{t^{\new}} \E[ \epsilon_{\mu} ],
\end{align}
where the third step follows by the definition of $\delta_{\mu}$.

Next, we bound the $\|v\|_\infty$ as follows
\begin{align*}
\|v\|_{\infty} & \leq \left\| \frac{\mu}{t}(\frac{t}{t^{\new}}-1) \right\|_{\infty}+ \left\| \frac{1}{t^{\new}}(\wt{\delta}_{\mu}+\epsilon_{\mu}) \right\|_{\infty} \leq\frac{\epsilon}{\sqrt{n}}+\frac{\|\mu^{-1}(\mu^{\new}-\mu)\|_{\infty}}{0.9} \\
& \leq \frac{\epsilon}{\sqrt{n}}+\frac{0.021}{0.9 \log n}\leq\frac{1}{\lambda}.
\end{align*}
where we used Part 3 of Lemma~\ref{lem:bounding_mu_new_minus_mu} and $\eps \leq \frac{1}{400 \log{n}}$.

Since $\|v\|_\infty\leq \frac{1}{\lambda}$, we can apply Part 1 of Lemma~\ref{lem:potential_function_Phi_cosh} and get
\begin{align*}
 & ~ \E [ \Phi_{\lambda} ( \mu / t + v - 1 ) ] \\
\leq & ~ \Phi_{\lambda} ( \mu / t - 1 ) +  \langle \nabla \Phi_{\lambda}( \mu / t - 1 ), \E[v] \rangle + 2 \E [ \| v \|_{ \nabla^2 \Phi_{\lambda}(\mu / t + v - 1) }^2 ]  \\
= & ~ \Phi_{\lambda} (\mu / t - 1 ) - \frac{\epsilon}{2} \| \nabla \Phi_{\lambda}( \mu / t - 1 ) \|_2 + \frac{t}{t^{\new}}  \langle \nabla \Phi_{\lambda}( \mu / t - 1 ), \E[ t^{-1} \epsilon_{\mu} ] \rangle + 2 \E [ \| v \|_{ \nabla^2 \Phi_{\lambda}( \mu / t - 1 ) }^2 ] \\
\leq & ~ \Phi_{\lambda} (\mu / t - 1 ) - \frac{\epsilon}{2} \| \nabla \Phi_{\lambda}( \mu / t - 1 ) \|_2 + \frac{t}{t^{\new}} \| \nabla \Phi_{\lambda} (\mu / t - 1) \|_2 \cdot \| \E[t^{-1} \epsilon_{\mu} ] \|_2 + 2 \E [ \| [v] \|_{ \nabla^2 \Phi_{\lambda}( \mu / t - 1 ) }^2 ] \\
\leq & ~ \Phi_{\lambda} (\mu / t - 1 ) - \frac{\epsilon}{2} \| \nabla \Phi_{\lambda}( \mu / t - 1 ) \|_2 + 10 \epsilon_{\mathrm{mp}} \cdot \epsilon \| \nabla \Phi_{\lambda}( \mu / t - 1 ) \|_2 + 2 \E [ \| v \|_{ \nabla^2 \Phi_{\lambda}( \mu / t - 1 ) }^2 ] ,
\end{align*}
where we substituted $\E[v]$ by \eqref{eq:expectation_v} in the second step, we used $\langle a, b \rangle \leq \| a \|_2 \cdot \| b \|_2$ in the third step, and $\| \E[ t^{-1} \epsilon_{\mu} ] \|_2 \leq 10 \epsilon_{\mathrm{mp}} \cdot \epsilon$ (from Part 1 of Lemma~\ref{lem:bounding_mu_new_minus_mu} and $\mu \approx_{0.1} t$) at the end

We still need to bound $ \E[ \| v \|_{ \nabla^2 \Phi_{\lambda}( \mu / t - 1 ) }^2 ]$. Before bounding it, we first bound $\E[v_i^2]$,
\begin{align}\label{eq:bounding_v_i_square}
\E[v_i^2] \leq & ~ 2 \E\left[  \left( \frac{\mu_i }{t} ( \frac{t}{ t^{\new} } - 1 ) \right)^2 \right] + 2 \E \left[ \left( \frac{1}{t^{\new}} (\wt{\delta}_{\mu,i} + \wh{\delta}_{\mu,i} ) \right)^2 \right] \notag \\
\leq & ~ \epsilon^2 / n + 2.5 \E \left[ (  (\mu^{\new}_i - \mu_i) / \mu_i )^2 \right] \notag \\
= & ~ \epsilon^2 / n + 2.5 \Var[ ( \mu_i^{\new} - \mu_i ) / \mu_i ] + 2.5 ( \E[  ( \mu_i^{\new} - \mu_i ) / \mu_i ]  )^2 \notag \\
\leq & ~\epsilon^2 / n + 125 \epsilon^2 / k + 2.5 ( \E[ ( \mu_i^{\new} - \mu_i ) / \mu_i ]  )^2 \notag \\
\leq & ~ 126 \epsilon^2 / k + 3 ( \E[ ( \mu_i^{\new} - \mu_i ) / \mu_i ]  )^2 ,
\end{align}
where we used the definition of $v$ (see \eqref{eq:rewrite_mu_t_new}) in the first step, $\mu \approx_{0.1} t$ and $(t/t^{\new} -1)^2 \leq \epsilon^2/(4n)$ in the second step, $\E[x^2] = \Var[x] + (\E[x])^2$ in the third step, Part 2 of Lemma~\ref{lem:bounding_mu_new_minus_mu} in the fourth step, and $n \geq k$ at the end.

Now, we are ready to bound $\E[ \| v \|_{ \nabla^2 \Phi_{\lambda}( \mu / t - 1 ) }^2 ] $
\begin{align*}
 & ~ \E[ \| v \|_{ \nabla^2 \Phi_{\lambda}( \mu / t - 1 ) }^2 ] \\
= & ~ \lambda^2 \sum_{i=1}^n \E[ \Phi_{\lambda}(\mu / t -1 )_i v_i^2 ] \\
\leq & ~ \lambda^2 \sum_{i=1}^n \Phi_{\lambda}(\mu / t - 1 )_i \cdot ( 126 \epsilon^2 / k +3 ( \E[ ( \mu_i^{\new} - \mu_i ) /  \mu_i  ]  )^2 )\\
= & ~ 126 \frac{\lambda^2 \epsilon^2}{k} \Phi_{\lambda} ( \mu / t - 1 ) + 3 \lambda^2 \sum_{i=1}^n \Phi_{\lambda} ( \mu / t - 1 )_i \cdot ( \E[ ( \mu_i^{\new} - \mu_i ) /  \mu_i  ]  )^2  \\
\leq & ~ 126 \frac{\lambda^2 \epsilon^2}{k} \Phi_{\lambda} ( \mu / t - 1 ) + 
3 \lambda \left( \sum_{i=1}^n \lambda^2 \Phi_{\lambda} ( \mu / t - 1 )_i^2  \right)^{1/2}
\cdot \| \E[ \mu^{-1} ( \mu^{\new} - \mu ) ] \|_4^2 \\
\leq & ~ 126 \frac{\lambda^2 \epsilon^2}{k} \Phi_{\lambda} ( \mu / t - 1 ) + 
3 \lambda \left( \lambda \sqrt{n} + \| \nabla \Phi_{\lambda}(\mu /t - 1) \|_2 \right)
\cdot 
 (5\eps)^2,
\end{align*} 
where the first step follows from the fact $\Phi_{\lambda}(x)_i = \cosh( \lambda x_i )$, 
the second step follows from  \eqref{eq:bounding_v_i_square},
the fourth step follows from Cauchy-Schwarz inequality,
the fifth step follows from Part 3 of Lemma~\ref{lem:potential_function_Phi_cosh} and the fact that
$\| \E[ \mu^{-1} ( \mu^{\new} - \mu ) ] \|_4^2 \leq \| \E[ \mu^{-1} ( \mu^{\new} - \mu ) ] \|_2^2 \leq (5\eps)^2$ (Lemma~\ref{lem:bounding_mu_new_minus_mu}).

Then,
\begin{align*}
 & ~ \E[ \Phi_{\lambda} ( \mu / t + v - 1 )  ] \\
\leq & ~ \Phi_{\lambda} ( \mu / t - 1 ) - ( \frac{\epsilon}{2} - 10 \epsilon_{\mathrm{mp}} \cdot \epsilon ) \| \nabla \Phi_{\lambda}(\mu /t -1) \|_2  + 252 \frac{\lambda^2 \epsilon^2}{k} \Phi_{\lambda}(\mu / t - 1) \\
& ~ + 150 \lambda^2 \eps^2  \sqrt{n} + 150 \lambda  \eps^2 \| \Phi_{\lambda} ( \mu / t - 1 ) \|_2\\
\leq & ~ \Phi_{\lambda} ( \mu / t - 1 ) - \frac{\epsilon}{3} \| \nabla \Phi_{\lambda}(\mu /t -1) \|_2  + 252 \frac{\lambda^2 \epsilon^2}{k} \Phi_{\lambda}(\mu / t - 1)  + 150 \lambda^2 \eps^2  \sqrt{n}\\
\leq & ~ \Phi_{\lambda} ( \mu / t - 1 ) - \frac{\lambda \epsilon}{3\sqrt{n}}  (\Phi_{\lambda} ( \mu / t - 1 ) - n) + 252 \frac{\lambda^2 \epsilon^2}{k} \Phi_{\lambda}(\mu / t - 1)  + 150 \lambda^2 \eps^2  \sqrt{n}\\
\leq & ~ \Phi_{\lambda} ( \mu / t - 1 ) - \frac{\lambda \epsilon}{3\sqrt{n}}  (\Phi_{\lambda} ( \mu / t - 1 ) / 5- 2n),
\end{align*}
where the second step follows from the inequalities $1000 \lambda \eps \leq 1$ and $1000 \eps_{\mathrm{mp}} \leq 1$,
the third step follows from Part 2 of Lemma~\ref{lem:potential_function_Phi_cosh},
and the last step follows from the inequalities $1000 \lambda \eps_{\mathrm{mp}} \leq \log{n}$ and $k \geq \frac{\sqrt{n} \epsilon \log{n}}{\epsilon_{\mathrm{mp}}}$.
\end{proof}

As a corollary, we have the following:
\begin{lemma} \label{lem:asump}
During the $\textsc{Main}$ algorithm, Assumption~\ref{ass:x_s_mu}
is always satisfied. Furthermore, the $\textsc{ClassicalStep}$ happens
with probability $O(\frac{1}{n^2})$ each step.
\end{lemma}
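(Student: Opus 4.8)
The plan is to run an induction over the iterations $k=1,2,\dots$ of the \textsc{Main} while-loop, carrying two coupled invariants: (i) \emph{deterministically}, at the top of iteration $k$ the current iterate $(x,s,t)$ satisfies $\Phi_\lambda(xs/t-1)\le n^3$; and (ii) \emph{in expectation}, $\E[\Phi_\lambda(xs/t-1)]\le 11n$. The base case is Lemma~\ref{lem:feasible_LP}: the modified program is entered at $t=1$ with $xs$ (essentially) equal to $1_n$, so $\Phi_\lambda(xs/t-1)\le 2n$, which gives both invariants at $k=1$.

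First I would check that invariant (i) at iteration $k$ already yields Assumption~\ref{ass:x_s_mu} for the \textsc{StochasticStep} call of that iteration, which is the first sentence of the lemma. For Bullet~1: since every summand of $\Phi_\lambda$ is $\ge 1$, $\cosh(\lambda(x_is_i/t-1))\le n^3$ for each $i$, hence $|x_is_i/t-1|\le\mathrm{arccosh}(n^3)/\lambda\le (4\log n)/(40\log n)=0.1$, i.e.\ $xs\approx_{0.1}t$. For Bullet~3: substituting $xs\approx_{0.1}t$ into the definition of $\delta_\mu$ in \textsc{Main}, the first summand has $\ell_2$ norm $\frac{\epsilon}{3\sqrt n}\|xs\|_2\le\frac{1.1}{3}\epsilon t$ and the second has $\ell_2$ norm $\frac{\epsilon}{2}t^{\new}\le\frac{\epsilon}{2}t$, so $\|\delta_\mu\|_2\le 0.9\,\epsilon t\le\epsilon t$ (with the convention that the second term is $0$ in the trivial case $xs=t$). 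Bullet~4 is the choice $k=1000\epsilon\sqrt n\log^2 n/\epsilon_{mp}$ in \textsc{Main}, and Bullet~2 is the $\wt v\approx_{\epsilon_{mp}}w$ guarantee of the \textsc{MaintainProjection} data structure, which holds at every \textsc{Update} because that structure is (re)initialized correctly at the start and after every \textsc{ClassicalStep}.

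With Assumption~\ref{ass:x_s_mu} in force — and since the $\delta_\mu$ and $t^{\new}$ used by \textsc{Main} have exactly the form Lemma~\ref{lem:potential_martingale} presumes — I would then invoke that lemma, which with $\mathcal F$ the history up to the top of iteration $k$, $c\defeq\frac{\lambda\epsilon}{15\sqrt n}$, and $\mu^{\new}=x^{\new}s^{\new}$ the \textsc{StochasticStep} output gives
\[
\E\!\left[\Phi_\lambda(\mu^{\new}/t^{\new}-1)\mid\mathcal F\right]\ \le\ (1-c)\,\Phi_\lambda(xs/t-1)+10cn.
\]
Taking expectations and using invariant (ii) yields $\E[\Phi_\lambda(\mu^{\new}/t^{\new}-1)]\le(1-c)\cdot 11n+10cn= 11n-cn$, so by Markov $\Pr[\Phi_\lambda(\mu^{\new}/t^{\new}-1)>n^3]\le 11n/n^3=O(1/n^2)$ — this is the ``furthermore'' claim, since \textsc{Main} runs \textsc{ClassicalStep} exactly on that event. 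To propagate the invariants to iteration $k+1$: if the \textsc{ClassicalStep} branch is not taken then the new iterate is $(x^{\new},s^{\new},t^{\new})$ and, since the ``if'' test failed, $\Phi_\lambda(x^{\new}s^{\new}/t^{\new}-1)\le n^3$, giving (i); for (ii) I would combine the displayed drift with the $\le 2n\cdot\Pr[\text{classical step}]=O(1/n)$ contribution of the other branch (where Lemma~\ref{lem:classical_step} guarantees the reset iterate has $\Phi_\lambda(x^{\new}s^{\new}/t^{\new}-1)\le 2n$, being recentered onto the central-path neighbourhood at $t^{\new}$), obtaining $\E[\Phi_\lambda(x^{\new}s^{\new}/t^{\new}-1)]\le 11n-cn+O(1/n)\le 11n$ because $O(1/n)\le cn$ once $n$ exceeds an absolute constant. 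If the classical branch \emph{is} taken, the same Lemma~\ref{lem:classical_step} bound gives (i) outright. This closes the induction.

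The one genuinely delicate point I expect is the circularity: bounding the probability of a \textsc{ClassicalStep} needs $\E[\Phi_\lambda]=O(n)$, but controlling $\E[\Phi_\lambda]$ after a step needs the classical branch — which only \emph{lowers} $\Phi_\lambda$, to $O(n)$ — to be rare enough to be negligible in the expectation. Carrying invariants (i) and (ii) together, with Lemma~\ref{lem:potential_martingale} supplying the contraction toward the attracting level $10n$ and Markov turning the expectation bound into the per-step probability $O(1/n^2)$, resolves this. The rest is routine bookkeeping: checking that $t^{\new}=(1-\frac{\epsilon}{3\sqrt n})t$ perturbs everything only by lower-order amounts, and noting that the resampling loop inside \textsc{StochasticStep} does not affect any of these bounds, exactly as in the remark after Lemma~\ref{lem:stochastic_step}.
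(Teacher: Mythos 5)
Your proof is correct and follows essentially the same route as the paper's: derive $xs\approx_{0.1}t$ and $\|\delta_\mu\|_2\le\epsilon t$ from the potential bound $\Phi_\lambda\le n^3$ enforced at the end of each iteration, then combine the drift inequality of Lemma~\ref{lem:potential_martingale} with an induction on $\E[\Phi_\lambda]=O(n)$ and Markov's inequality to get the $O(1/n^2)$ probability. Your only addition is to make the induction explicit as a pair of coupled invariants and to account for the (potential-decreasing) \textsc{ClassicalStep} branch in the expectation, a point the paper's ``by induction'' glosses over but which, as you note, only helps.
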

\begin{proof}
The second and the fourth assumptions simply follow from the choice of $\epsilon_{\mathrm{mp}}$ and $k$.

Let $\Phi^{(k)}$ be the potential at the $k$-th iteration of the
$\textsc{Main}$. The $\textsc{ClassicalStep}$ ensures that $\Phi^{(k)}\leq n^{3}$
at the end of each iteration. By the definition of $\Phi$ and the
choice of $\lambda$ in $\textsc{Main}$, we have that
\begin{align*}
\left\| \frac{xs}{t}-1 \right\|_{\infty} \leq \frac{\ln(2n^{3})}{\lambda}\leq0.1.
\end{align*}
This proves the first assumption $xs \approx_{0.1} t$ with $t>0$.

For the third assumption, we note that
\begin{align*}
\|\delta_{\mu}\|_{2} & =\left\Vert \left(\frac{t^{\new}}{t}-1\right)xs-\frac{\epsilon}{2}\cdot t^{\new}\cdot\frac{\nabla\Phi_{\lambda}(\mu/t-1)}{\|\nabla\Phi_{\lambda}(\mu/t-1)\|_{2}}\right\Vert _{2}\\
 & \leq \left| \frac{t^{\new}}{t}-1 \right| \|xs\|_{2}+\frac{\eps}{2}t^{\new}\\
 & \leq\frac{\epsilon}{3\sqrt{n}}\cdot1.1\sqrt{n} t +1.01\cdot\frac{\eps}{2}t\leq\eps t,
\end{align*}
where we used $xs \approx_{0.1} t$ and the formula of $t^{\new}$. Hence, we proved all assumptions in Assumption~\ref{ass:x_s_mu}.

Now, we bound the probability that $\textsc{ClassicalStep}$ happens. 
In the beginning of the \textsc{Main}, Lemma \ref{lem:feasible_LP} is used to modify the linear program with parameter $\min(\frac{\delta}{2},\frac{1}{\lambda})$. Hence, the initial point $x$ and $s$ satisfies $x s \approx_{1/\lambda} 1$. Therefore, we have $\Phi^{(0)} \leq 10 n$.
Lemma~\ref{lem:potential_martingale} shows $\E [ \Phi^{(k+1)} ] \leq (1-\frac{\lambda\epsilon}{15\sqrt{n}})\E [ \Phi^{(k)} ] +\frac{\lambda\epsilon}{15\sqrt{n}}10 n$.
By induction, we have that $\E [ \Phi^{(k)} ] \leq 10 n$ for all $k$. Since
the potential is positive, Markov inequality shows that for any $k$,
$\Phi^{(k)}\geq n^{3}$ with probability at most $O(\frac{1}{n^2})$.
\end{proof}

\subsection{Analysis of cost per iteration}
To apply the data structure for projection maintenance (Theorem~\ref{thm:maintain_projection}), we need to first prove the input vector $w$ does not change too much for each step.

\begin{lemma} \label{ref:w_movement}
Let $x^{\new} = x + \wt{\delta}_{x}$ and $s^{\new} = s + \wt{\delta}_s$. Let $w = \frac{x}{s}$ and $w^{\new} = \frac{ x^{\new} }{ s^{\new} }$. Then we have
\begin{align*}
\sum_{i=1}^n \left( \E [ \ln w_i^{\new} ] - \ln w_i  \right)^2 \leq 64 \epsilon^2,\quad
 \sum_{i=1}^n \left( \Var [ \ln w_i^{\new} ] \right)^2 \leq 1000 \epsilon^2.
\end{align*} 
\end{lemma}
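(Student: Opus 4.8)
The plan is to exploit the multiplicative relation $w^{\new}/w = (x^{\new}/x)(s/s^{\new})$, so that control on the relative increments $\wt\delta_{x}/x$ and $\wt\delta_{s}/s$ — already supplied by Lemma~\ref{lem:stochastic_step} — transfers directly to $w$. Writing $a_i = \wt\delta_{x,i}/x_i$ and $b_i = \wt\delta_{s,i}/s_i$, we have
\[
\frac{w_i^{\new}-w_i}{w_i} = \frac{1+a_i}{1+b_i}-1 = \frac{a_i - b_i}{1+b_i}.
\]
By Part~3 of Lemma~\ref{lem:stochastic_step}, $|a_i|,|b_i|\le \frac{0.02}{\log n}$, so $|1+b_i|\ge 0.9$ and the third (deterministic, $\ell_\infty$) bound $|(w_i^{\new}-w_i)/w_i|\le 0.1$ follows immediately, in fact with a much smaller constant. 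This also lets me linearize: $\frac{w_i^{\new}-w_i}{w_i} = (a_i-b_i)(1 - b_i + O(b_i^2))$, and since the correction factor is $1+O(1/\log n)$ it only costs a constant multiplicative loss in every subsequent estimate.

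For the first bound (on the expectation), I would take expectations: $\E[(w_i^{\new}-w_i)/w_i]$ equals $\E[a_i-b_i]$ up to the $O(1/\log n)$ correction, whose contribution I bound crudely using the variance estimates (Part~2 of Lemma~\ref{lem:stochastic_step}, $\Var[a_i],\Var[b_i]\le 2\epsilon^2/k$) times the $\ell_\infty$ bound on $b_i$. Then $\sum_i (\E[a_i-b_i])^2 \le 2\sum_i (\E a_i)^2 + 2\sum_i(\E b_i)^2 = 2\|\E[x^{-1}\wt\delta_x]\|_2^2 + 2\|\E[s^{-1}\wt\delta_s]\|_2^2 \le 2(2\epsilon)^2+2(2\epsilon)^2 = 16\epsilon^2$ by Part~1 of Lemma~\ref{lem:stochastic_step}; the correction terms are lower order, and $64\epsilon^2$ absorbs them comfortably.

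For the second bound (on $\sum_i (\E[((w_i^{\new}-w_i)/w_i)^2])^2$), I would again use the linearization so that, up to constants, the quantity is controlled by $\sum_i (\E[(a_i-b_i)^2])^2$. Expanding, $\E[(a_i-b_i)^2] \le 2\E[a_i^2]+2\E[b_i^2] = 2(\Var[a_i]+(\E a_i)^2) + 2(\Var[b_i]+(\E b_i)^2)$. The variance pieces give $\sum_i(\Var[a_i]+\Var[b_i])^2 \le n\cdot O(\epsilon^2/k)^2 = O(n\epsilon^4/k^2)$, which is $O(\epsilon^2\cdot\epsilon_{mp}^2)$ hence negligible using $k\ge \epsilon\sqrt n/\epsilon_{mp}$. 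The mean-squared pieces give $\sum_i((\E a_i)^2+(\E b_i)^2)^2 \le (\sum_i (\E a_i)^2)^2 + \cdots$? — here I must be slightly careful: I want $\sum_i c_i^2 \le (\sum_i c_i)^2$ for nonnegative $c_i$, applied with $c_i = (\E a_i)^2$, giving $\sum_i (\E a_i)^4 \le (\sum_i (\E a_i)^2)^2 \le (2\epsilon)^4$. So the whole sum is $O(\epsilon^4) + O(\epsilon^2\epsilon_{mp}^2)$, and since $\epsilon<1/(40000\log n)<1$ this is $O(\epsilon^2)$; tracking the constants through the $(1+O(1/\log n))$ corrections and the various factors of $2$ should land under $1000\epsilon^2$. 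The main obstacle is bookkeeping: keeping the constants honest through the linearization correction and making sure the step $\sum_i (\E a_i)^4 \le \|\E[x^{-1}\wt\delta_x]\|_2^4$ (rather than a weaker $\ell_4$-into-$\ell_2$ bound) is what delivers the required $\epsilon^2$ rather than an $\epsilon^2\sqrt n$-type bound.
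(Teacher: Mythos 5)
Your proposal follows essentially the same route as the paper's proof: write $w_i^{\new}/w_i = (1+x_i^{-1}\wt{\delta}_{x,i})/(1+s_i^{-1}\wt{\delta}_{s,i})$, invoke Parts 1--3 of Lemma~\ref{lem:stochastic_step}, and for the second bound split $\E[(\cdot)^2]$ into a variance piece (controlled by $n\epsilon^4/k^2 \leq \epsilon^2\epsilon_{mp}^2$) plus a squared-mean piece (controlled by $\sum_i (\E[x_i^{-1}\wt{\delta}_{x,i}])^4 \leq \|\E[x^{-1}\wt{\delta}_x]\|_2^4$), exactly as in the paper. The only difference is cosmetic: the paper absorbs the random denominator $1+s_i^{-1}\wt{\delta}_{s,i}$ directly into a constant factor via the $\ell_\infty$ bound rather than through your linearization-plus-correction bookkeeping (and, for what it is worth, your literal recipe for the Part-1 correction term --- ``variance times the $\ell_\infty$ bound'' --- would need a Cauchy--Schwarz on the product $\E[(a_i-b_i)b_i]$ to avoid the non-decaying $\sum_i \Var[a_i] = O(n\epsilon^2/k)$ term, a point the paper itself glosses over).
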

\begin{proof}

From the definition, we know that
\begin{align*}
\frac{ w_i^{\new} }{w_i} = \frac{1 }{ s_i^{-1} x_i } \frac{ x_i + \wt{\delta}_{x,i} }{ s_i + \wt{\delta}_{s,i} } = \frac{ 1 + x_i^{-1} \wt{\delta}_{x,i} }{ 1 + s_i^{-1} \wt{\delta}_{s,i} }.
\end{align*}

{\bf Part 1.}
For each $i\in [n]$, we have
\begin{align*}
 \E[\ln w_i^{\new}] - \ln w_i 
= & ~ \E\left[ \ln (1 +  x_i^{-1} \wt{\delta}_{x,i})
- \ln (1 + s_i^{-1} \wt{\delta}_{s,i} ) \right] \\
\leq & ~ 2 | \E[ x_i^{-1} \wt{\delta}_{x,i} - s_i^{-1} \wt{\delta}_{s,i} ] | 
& \text{~by~} | s_i^{-1} \wt{\delta}_{s,i} |,| x_i^{-1} \wt{\delta}_{x,i} | \leq 0.2,\text{Lemma~\ref{lem:stochastic_step}} \\
\leq & ~ 2 | \E[x_i^{-1} \wt{\delta}_{x,i}] | + 2 | \E[ s_i^{-1} \wt{\delta}_{s,i} ] |. & \text{~by~triangle~inequality}
\end{align*}
Thus, summing over all the coordinates gives
\begin{align*}
\sum_{i=1}^n \left( \E[\ln w_i^{\new}] - \ln w_i  \right)^2 \leq \sum_{i=1}^n 8 (\E[ x_i^{-1} \wt{\delta}_{x,i}] )^2 + 8 ( \E[s_i^{-1} \wt{\delta}_{s,i}] )^2 \leq 64 \epsilon^2.
\end{align*}
where the first step follows by the triangle inequality, the last step follows by the inequalities $ \|\E[s^{-1} \wt{\delta}_{s}] \|_2^2, \|\E[x^{-1} \wt{\delta}_{x}] \|_2^2 \leq 4 \epsilon^2$ (Part 1 of Lemma~\ref{lem:stochastic_step}).

{\bf Part 2.} For each $i \in [n]$, we have
\begin{align*}
\Var [ w_i^{\new} ]
\leq &
\E \left[ \left( \ln w_i^{\new} - \ln w_i \right)^2 \right] \\
= & ~ \E \left[ \left( \ln \frac{ 1 + x_i^{-1} \wt{\delta}_{x,i} }{ 1 + s_i^{-1} \wt{\delta}_{s,i} } \right)^2 \right] \\
\leq & ~ 2 \E[ ( x_i^{-1} \wt{\delta}_{x,i} - s_i^{-1} \wt{\delta}_{s,i}  )^2 ] \\
\leq & ~ 2 \E[ 2 (x_i^{-1} \wt{\delta}_{x,i})^2 + 2 (s_i^{-1} \wt{\delta}_{s,i})^2 ] \\
= & ~ 4 \E [(x_i^{-1} \wt{\delta}_{x,i})^2] + 4 \E[(s_i^{-1} \wt{\delta}_{s,i})^2] \\
= & ~ 4 \Var[ x_i^{-1} \wt{\delta}_{x,i} ] + 4 ( \E[x_i^{-1} \wt{\delta}_{x,i}])^2 + 4 \Var[ s_i^{-1} \wt{\delta}_{s,i} ] + 4 ( \E[s_i^{-1} \wt{\delta}_{s,i}])^2 \\
\leq & ~ 16 \epsilon^2 / k + 4 ( \E[x_i^{-1} \wt{\delta}_{x,i}])^2 + 4 ( \E[s_i^{-1} \wt{\delta}_{s,i}])^2,
\end{align*}
where we used $\Var[ x_i^{-1} \wt{\delta}_{x,i} ], \Var[ s_i^{-1} \wt{\delta}_{s,i} ] \leq 2 \epsilon^2 / k$ (Part 2 of Lemma~\ref{lem:stochastic_step}) at the end.

Thus summing over all the coordinates
\begin{align*}
\sum_{i=1}^n \left( \Var [ w_i^{\new} ] \right)^2 
\leq & ~  \frac{512 n \epsilon^4}{k^2} + 64 \sum_{i=1}^n \left( (\E[x_i^{-1} \wt{\delta}_{x,i}])^4 + (\E[s_i^{-1} \wt{\delta}_{s,i}])^4 \right) \\
\leq & ~ \frac{512 n \epsilon^4}{k^2} + 2048 \epsilon^4 \leq 1000 \epsilon^2,
\end{align*}
where we used $\| \E [ s^{-1} \wt{\delta}_s ] \|_2^2, \| \E[x^{-1} \wt{\delta}_x ] \|_2^2 \leq 4 \epsilon^2 $ and $k \geq \sqrt{n} \epsilon$ at the end.

\end{proof}

Now, we analyze the cost per iteration in procedure \textsc{Main}. This is a direct application of our projection maintenance result.

\begin{lemma}\label{lem:main_cost}
For $\epsilon \geq \frac{1}{\sqrt{n}}$, each iteration of \textsc{Main} (Algorithm~\ref{alg:main}) takes
$$n^{1+a+o(1)} + \eps \cdot (n^{\omega-1/2 + o(1)} + n^{2-a/2 + o(1)})$$ expected time per iteration in amortized where $0 \leq a \leq \alpha$ controls the batch size in the data structure and $\alpha$ is the dual exponent of matrix multiplication. 
\end{lemma}
\begin{proof}
Lemma \ref{lem:asump} shows that \textsc{ClassicalStep} happens with only $O(1/n^2)$ probability each step. Since the cost of each step only takes $\tilde{O}(n^{2.5})$, the expected cost is only $\tilde{O}(n^{0.5})$.

Lemma~\ref{ref:w_movement} shows that the conditions in Theorem \ref{thm:maintain_projection} holds with the parameter
$C_1 = O(\eps), C_2 = O(\eps), \epsilon_{\mathrm{mp}} = \Theta(1)$.

In the procedure \textsc{StochasticStep}, Theorem \ref{thm:maintain_projection} shows that the amortized time per iteration is mainly dominated by two steps:

1. $\mathrm{mp}.\textsc{Update}(w)$: $O(\eps \cdot (n^{\omega-1/2 + o(1)} + n^{2-a/2 + o(1)})) .$

2. $\mathrm{mp}.\textsc{Query}( \frac{1}{ \sqrt{ \ov{X} \ov{S} } } \wt{\delta}_{\mu} )$: $O(n \cdot \| \wt{\delta}_{\mu} \|_0 + n^{1+a + o(1)} )$.

Combining both running time and using $\E [ \| \wt{\delta}_{\mu} \|_0 ] = O(1+k) = O( \epsilon \sqrt{n} \log^2 n)$ (according to the probability of success in Claim~\ref{claim:prob} and matching Assumption~\ref{ass:x_s_mu}), we have the result.

\end{proof}

\subsection{Main result}

\begin{proof}[Proof of Theorem~\ref{thm:main}]

In the beginning of the $\textsc{Main}$ algorithm, Lemma~\ref{lem:feasible_LP}
is called to modify the linear program. Then, we run
the stochastic central path method on this modified linear program.

When the algorithm stops, we obtain a vector $x$ and $s$ such that
$xs\approx_{0.1}t$ with $t\leq\frac{\delta^{2}}{32n^3}$. Hence, the
duality gap is bounded by $\sum_{i}x_{i}s_{i}\leq(\delta/4n)^{2}$.
Lemma~\ref{lem:feasible_LP} shows how
to obtain an approximate solution of the original linear program with the guarantee
needed using the $x$ and $s$ we just found.

Since $t$ is decreased by $1-\frac{\epsilon}{3\sqrt{n}}$ factor
each iteration, it takes $O(\frac{\sqrt{n}}{\epsilon}\cdot\log(\frac{n}{\delta}))$
iterations in total. In Lemma~\ref{lem:main_cost},
we proved that each iteration takes
\begin{align*}
n^{1+a+o(1)}+\eps\cdot(n^{\omega-1/2+o(1)}+n^{2-a/2+o(1)}).
\end{align*}
and hence the total runtime is
\begin{align*}
O(n^{2.5-a/2+o(1)}+n^{\omega+o(1)} + \frac{n^{1.5+a+o(1)}}{\epsilon}) \cdot \log(\frac{n}{\delta}).
\end{align*}

Since $\epsilon=\Theta(\frac{1}{\log{n}})$, the total runtime is
\begin{align*}
O(n^{2.5-a/2+o(1)}+n^{\omega+o(1)} + n^{1.5+a+o(1)}) \cdot \log(\frac{n}{\delta}).
\end{align*}
Finally, we note that the optimal choice of $a$ is $\min(\frac{2}{3},\alpha)$, which gives the promised runtime.
\end{proof}

Using the same proof, but different choice of the parameters, we can analyze the ultra short step stochastic central path method, where each step involves sampling only polylogarithmic coordinates. As we mentioned before, the runtime is still around $n^\omega$.
\begin{corollary}\label{cor:short_step}
Under the same assumption as Theorem~\ref{thm:main}, if we choose $\epsilon = \Theta(1/\sqrt{n})$ and $a = \min(\frac{1}{3},\alpha)$, the expected time of \textsc{Main} (Algorithm~\ref{alg:main}) is 
\begin{align*}
\left(n^{\omega+o(1)}+n^{2.5-\alpha/2+o(1)}+n^{2+1/3+o(1)}\right)\cdot\log(\frac{n}{\delta}).
\end{align*}
\end{corollary}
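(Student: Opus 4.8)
The plan is to reuse the entire analysis developed for Theorem~\ref{thm:main} essentially verbatim, changing only the two free parameters $\epsilon$ and $a$ and redoing the final arithmetic. First I would check that the new choice $\epsilon=\Theta(1/\sqrt{n})$ is still admissible: it satisfies $0<\epsilon<1/(40000\log n)$ for $n$ large, it satisfies the hypothesis $\epsilon\ge 1/\sqrt{n}$ of Lemma~\ref{lem:main_cost}, and with $k=1000\epsilon\sqrt{n}\log^2 n/\epsilon_{mp}=\Theta(\log^2 n)$ (polylogarithmically many sampled coordinates, matching the ``ultra short step'' description) Assumption~\ref{ass:x_s_mu} holds. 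Consequently Lemmas~\ref{lem:stochastic_step},~\ref{lem:bounding_mu_new_minus_mu},~\ref{lem:potential_martingale} and~\ref{lem:asump} go through unchanged, since their proofs use only upper bounds on $\epsilon,\epsilon_{mp}$ and the lower bound $k\ge\sqrt{n}\epsilon\log n/\epsilon_{mp}$, all of which still hold. Hence the correctness guarantee of \textsc{Main} (duality gap $\le\delta^2$ at termination, then convert via Lemma~\ref{lem:feasible_LP}) is identical to the proof of Theorem~\ref{thm:main}, and only the running time needs a fresh count.

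Next I would invoke Lemma~\ref{lem:main_cost} with the parameter $a$, giving an amortized expected cost per iteration of $n^{1+a+o(1)}+\epsilon\cdot(n^{\omega-1/2+o(1)}+n^{2-a/2+o(1)})$, and multiply by the iteration count $O(\tfrac{\sqrt{n}}{\epsilon}\log(n/\delta))$ coming from the $1-\tfrac{\epsilon}{3\sqrt{n}}$ decrease of $t$. This yields a total of
\[
O\!\left(\frac{n^{1.5+a+o(1)}}{\epsilon}+n^{\omega+o(1)}+n^{2.5-a/2+o(1)}\right)\cdot\log(n/\delta),
\]
and substituting $\epsilon=\Theta(1/\sqrt{n})$ turns the first term into $n^{2+a+o(1)}$, so the bound becomes $O(n^{2+a+o(1)}+n^{\omega+o(1)}+n^{2.5-a/2+o(1)})\cdot\log(n/\delta)$.

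Finally I would optimize over $a\in[0,\alpha]$. The term $n^{2+a}$ is increasing in $a$ while $n^{2.5-a/2}$ is decreasing in $a$, and they balance at $a=1/3$, so the optimal choice is $a=\min(1/3,\alpha)$. When $\alpha\ge 1/3$ this makes both $a$-dependent terms equal to $n^{2+1/3}$ and forces $n^{2.5-\alpha/2}\le n^{2+1/3}$; when $\alpha<1/3$ it gives $n^{2+\alpha}\le n^{2.5-\alpha/2}$, so in either case the runtime is $O(n^{\omega+o(1)}+n^{2.5-\alpha/2+o(1)}+n^{2+1/3+o(1)})\cdot\log(n/\delta)$, as claimed. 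I expect no real obstacle beyond the bookkeeping; the one point that deserves care is confirming that shrinking $\epsilon$—hence the potential-perturbation bounds and the projection-maintenance parameters $C_1,C_2=\Theta(\epsilon)$ fed into Theorem~\ref{thm:maintain_projection}—violates no hypothesis, which it does not, since every such hypothesis is an upper bound that is only made slacker, and the one lower bound ($k\ge\sqrt{n}\epsilon\log n/\epsilon_{mp}$) is satisfied by the very definition of $k$ in \textsc{Main}.
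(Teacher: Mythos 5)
Your proposal is correct and follows exactly the route the paper intends (the paper itself gives no explicit proof, only the remark ``using the same proof, but different choice of the parameters''): you re-verify that all hypotheses of Lemmas~\ref{lem:stochastic_step}--\ref{lem:asump} and~\ref{lem:main_cost} are upper bounds on $\epsilon$ or lower bounds on $k$ that remain satisfied, then redo the arithmetic $\frac{\sqrt{n}}{\epsilon}\cdot\bigl(n^{1+a}+\epsilon(n^{\omega-1/2}+n^{2-a/2})\bigr)$ with $\epsilon=\Theta(1/\sqrt{n})$ and balance $n^{2+a}$ against $n^{2.5-a/2}$ at $a=1/3$. The bookkeeping is right, including the observation that $k=\Theta(\log^2 n)$ makes this the ``ultra short step'' regime.
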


\newpage
\section{Projection Maintenance}
\label{sec:projection_maintenance_datastructure}

The goal of this section is to prove the following theorem:

\begin{theorem}[Projection maintenance]\label{thm:maintain_projection}
Given a full rank matrix $A \in \R^{d \times n}$ with $n \geq d$ and a tolerance parameter $0<\epsilon_{\mathrm{mp}}<1/4$.
Given any positive number $a$ such that $a \leq \alpha$ where $\alpha$ is the dual exponent of matrix multiplication.
There is a deterministic data structure (Algorithm~\ref{alg:maintain_projection}) that approximately maintains the projection matrices
$$\sqrt{W} A^\top (A W A^\top)^{-1} A \sqrt{W}$$ for positive diagonal matrices $W$ through the following two operations:
\begin{enumerate}
\item $\textsc{Update}(w)$: Output a vector $\tilde{v}$ such that for all $i$, $$(1-\epsilon_{\mathrm{mp}}) \tilde{v_i} \leq w_i \leq (1+\epsilon_{\mathrm{mp}}) \tilde{v_i}.$$
\item $\textsc{Query}(h)$: Output $\sqrt{\tilde{V}} A^\top (A \tilde{V} A^\top)^{-1} A \sqrt{\tilde{V}} h$ for the $\tilde{v}$ outputted by the last call to $\textsc{Update}$.
\end{enumerate}

The data structure takes $n^2 d^{\omega -2}$ time to initialize and each call of $\textsc{Query}(h)$ takes time 
\begin{align*}
n \cdot \| h \|_0 + n^{1+a+o(1)}.
\end{align*}
Furthermore, if the initial vector $w^{(0)}$ and the (random) update sequence $w^{(1)},\cdots, w^{(T)} \in \R^{n}$ satisfies
\begin{align*}
 \sum_{i=1}^n \left( \E[ \ln w^{(k+1)}_{i} ] - \ln w^{(k)}_{i} \right)^2 \leq C_1^2 \qquad\text{and}\qquad
 \sum_{i=1}^n  
 ( \Var[\ln w^{(k+1)}_{i}] )^2 \leq  C_2^2
\end{align*}
with the expectation and variance is conditional on $w^{(k)}_{i}$ for all $k=0,1,\cdots,T-1$.
Then, the amortized expected time\footnote{If the input is deterministic, so is the output and the runtime.} per call of \textsc{Update}$(w)$ is
$$(C_1  / \epsilon_{\mathrm{mp}} + C_2 / \epsilon_{\mathrm{mp}}^2) \cdot  (n^{\omega-1/2 + o(1)} + n^{2-a/2 + o(1)}) .$$
\end{theorem}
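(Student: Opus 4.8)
The plan is to build the data structure around three stored objects: the vector $\wt{v}$ that $\textsc{Update}$ returns, for which we maintain the invariant $\wt{v} \approx_{\epsilon_{mp}} w$ at all times; a \emph{base} vector $\bar{v}$ together with the dense $n\times n$ matrix $M = \sqrt{\bar V}A^\top(A\bar V A^\top)^{-1}A\sqrt{\bar V}$; and the ``dirty'' set $S = \{i : \wt{v}_i \neq \bar{v}_i\}$, which the update schedule below keeps of size at most $n^{a}$. $\textsc{Initialize}$ sets $\bar v = \wt v = w^{(0)}$ and computes $M$ by forming $A\bar V A^\top$ (a $d\times n$ by $n\times d$ product, $O(nd^{\omega-1})$ time), inverting it ($O(d^\omega)$), forming $A^\top(A\bar V A^\top)^{-1}A$ (an $n\times d$ by $d\times n$ product, $O(n^2 d^{\omega-2})$ by blocking into $d\times d$ pieces), and rescaling in $O(n^2)$; this yields the claimed $n^2 d^{\omega-2}$ initialization bound.

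For $\textsc{Query}(h)$ we must output the \emph{exact} projection for $\wt v$. Since $\wt V$ and $\bar V$ agree off $S$, Fact~\ref{fact:woodbury} applied to the symmetrically scaled matrix, exactly as in \eqref{eq:mw_update}, writes $M_{\wt v} = M + M_S\, C\, M_S^\top$ where $M_S \in \R^{n\times |S|}$ consists of $|S|$ columns of $M$ and $C \in \R^{|S|\times|S|}$ (involving $\Delta_{S,S}^{-1}$ and $M_{S,S}$) can be precomputed once per $\textsc{Update}$ in $O(|S|^\omega) = n^{o(1)}\cdot n^{a}$ time. Hence $M_{\wt v}h = M(\sqrt{\wt V}h) + M_S\big(C\,(M_S^\top \sqrt{\wt V}h)\big)$, where the first term costs $O(n\cdot\|h\|_0)$ because $\sqrt{\wt V}h$ has only $\|h\|_0$ nonzeros and $M$ is stored densely, and the rank-$|S|$ correction costs $O(n|S|)=O(n^{1+a})$. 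This gives the $\textsc{Query}$ bound $n\|h\|_0 + n^{1+a+o(1)}$, and exactness follows from exactness of the Woodbury identity.

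Inside $\textsc{Update}(w)$, after copying in the new $w$ every coordinate with $|\ln(w_i/\wt v_i)| > \epsilon_{mp}$ is immediately repaired by $\wt v_i \leftarrow w_i$ and added to $S$; this is $O(n)$ per call and restores the invariant $\wt v \approx_{\epsilon_{mp}} w$. Two flush rules then fold dirty coordinates back into $M$. Rule (i): whenever $|S|$ reaches $n^a$, fold all of $S$ into $M$ by one Woodbury update whose cost is that of an $n\times n$ times $n\times n^{a}$ product, i.e.\ $n^{2+o(1)}$ by Theorem~\ref{lem:rmw} since $a\le\alpha$, then reset $\bar v \leftarrow \wt v$, $S\leftarrow\emptyset$; this keeps $\textsc{Query}$'s correction rank $\le n^a$. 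Rule (ii): maintain a potential $\Phi_{mp} = \sum_i \cosh\!\big(\lambda(\ln w_i - \ln \wt v_i)\big)$ with $\lambda = \Theta(\log n/\epsilon_{mp})$, and when $\Phi_{mp}$ exceeds a threshold, sort coordinates by $|\ln(w_i/\wt v_i)|$, use the doubling trick $|y_{\pi(1.5r)}| \ge (1 - 1/\log n)|y_{\pi(r)}|$ to pick a near-cost-optimal block size $r$, set $\wt v_i \leftarrow w_i$ on the largest $r$ of them, and rebuild $M$ on those $r$ coordinates by a single rank-$r$ update costing $n^{2+o(1)} + r^{(\omega-2)/(1-\alpha)}n^{2 - \alpha(\omega-2)/(1-\alpha) + o(1)}$. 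Rule (ii) is what stops the ``uniform drift'' input from forcing too many small flushes; correctness of both operations is then immediate.

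The amortized running time is the crux and I expect the main obstacle. Mirroring the computation in Lemma~\ref{lem:potential_martingale}, I would show that one input step increases $\E[\Phi_{mp}]$ by at most $O(1)\cdot(C_1/\epsilon_{mp} + C_2/\epsilon_{mp}^2)\cdot \lambda n^{-1/2}\,\Phi_{mp}$ plus a polynomially bounded additive term: the first-order piece $\langle \nabla\Phi_{mp}, \E[\Delta\ln w]\rangle$ is bounded by Cauchy--Schwarz against $\|\nabla\Phi_{mp}\|_2$ using $\sum_i(\E[\Delta w_i]/w_i)^2 \le C_1^2$, and the Hessian piece is bounded using $\sum_i(\E[(\Delta w_i/w_i)^2])^2 \le C_2^2$, exactly as the $\E[\|v\|_{\nabla^2\Phi}^2]$ estimate there; the per-coordinate repairs only decrease $\Phi_{mp}$. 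Dually, a rule-(ii) flush of block size $r$ decreases $\Phi_{mp}$ by $\Omega(r)$, because the $r$ largest $\cosh$ terms are reset to $1$ and by the doubling choice they are within a constant factor of the true largest ones; charging each flush to the potential it releases, the amortized cost per unit of potential is $O(f(r)/r)$ for the rectangular cost $f$ above, and the rule-(i) flushes contribute a parallel $O(n^{2+o(1)}/n^a) = n^{2-a+o(1)}$ term. Finally one observes the adversary is effectively choosing a concentration level $r \in [n^a,n]$, and that $\sup_r f(r)/\!\sqrt{r}$, after substituting Theorem~\ref{lem:rmw}, is attained at the two endpoints, equalling $n^{2-a/2+o(1)}$ at $r=n^a$ and $n^{\omega-1/2+o(1)}$ at $r=n$; multiplying by the potential-generation rate $O(C_1/\epsilon_{mp} + C_2/\epsilon_{mp}^2)$ gives the stated amortized bound. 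The two delicate points are getting the sub-martingale constants to emerge precisely as $C_1/\epsilon_{mp} + C_2/\epsilon_{mp}^2$ rather than something weaker, and proving that the doubling-based choice of $r$ always lands within a constant factor of the cost-optimal block size, so that no schedule-independent input beats the two endpoint rates.
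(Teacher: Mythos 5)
Your data-structure layout, the query-time analysis via a lazy rank-$\le n^a$ Woodbury correction, the initialization bound, and the batch-update-with-doubling idea all match the paper. But the amortized analysis --- which you correctly identify as the crux --- has a genuine gap, in fact two. First, your potential $\Phi_{mp}=\sum_i\cosh(\lambda(\ln w_i-\ln\wt v_i))$ tracks the discrepancy between $w$ and the \emph{returned} vector $\wt v$, whereas the expensive operation is rebuilding $M$, whose cost is governed by the discrepancy between $w$ and the vector $v$ \emph{baked into} $M$. Since your per-coordinate ``immediate repairs'' set $\wt v_i\leftarrow w_i$ for free, they discharge all of $\Phi_{mp}$ before any matrix work is done; the subsequent rule-(i) flushes (cost $n^{2+o(1)}$ each) then release no potential at all, so nothing in the accounting prevents the uniform-drift input from forcing $\Theta(\sqrt n/\epsilon_{mp})$ threshold crossings per step and hence $n^{2.5-a}$ amortized cost, and rule (ii) never fires because the potential has already been reset. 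The paper's potential is defined on $x^{(k)}_i=w^{(k)}_i/v^{(k)}_i-1$ with $v$ the matrix's vector, precisely to avoid this.

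Second, and more fundamentally, an \emph{unweighted} coordinate-wise potential cannot certify the stated bound, because the cost of a rank-$r$ rectangular update is not linear in $r$: per coordinate it ranges from $n^{2-a+o(1)}$ at $r=n^a$ to $n^{\omega-1+o(1)}$ at $r=n$. A potential whose decrease under a rank-$r$ flush is just ``$r$ times a constant'' charges every batch size the same, so it cannot simultaneously certify the $n^{2-a/2}$ and $n^{\omega-1/2}$ rates against an adversary who chooses an intermediate concentration level or mixes strategies. The paper's key device is the weight sequence $g_i$ of \eqref{eq:g}, chosen so that $\sum_{i\le r}g_i\asymp rg_r\asymp(\text{cost of a rank-}r\text{ update})/n^{2+o(1)}$; then the $v$-move lemma shows each flush releases potential proportional to its cost, while the $w$-move lemma bounds the generation rate by $(C_1+C_2/\epsilon_{mp})\sqrt{\log n}\,\|g\|_2$ with $\|g\|_2=\tilde O(n^{-a/2}+n^{\omega-5/2})$, and the clipped-quadratic $\psi$ of \eqref{eq:def_psi} (rather than $\cosh$) is what makes the Hessian term scale as $C_2/\epsilon_{mp}$ and saturate above $2\epsilon_{mp}$. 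Relatedly, your claimed drift bound $O(C_1/\epsilon_{mp}+C_2/\epsilon_{mp}^2)\lambda n^{-1/2}\Phi_{mp}$ does not follow from Cauchy--Schwarz: the gradient bound goes the wrong way ($\|\nabla\Phi\|_2\le\lambda\Phi$, not $\le\lambda\Phi/\sqrt n$), so you lose a $\sqrt n$ factor; and your closing observation that $\sup_r f(r)/\sqrt r$ is attained at the endpoints only rules out \emph{pure} concentration strategies, which is exactly the gap the weighted potential is designed to close.
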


\begin{remark}
For our linear program algorithm, we have $C_1 = O(1/\log n)$, $C_2 = O(1/\log n)$ and $\epsilon_{\mathrm{mp}} = \Theta(1)$. See Lemma~\ref{ref:w_movement}.
\end{remark}

\subsection{Proof outline}\label{sec:projection_outline}

For intuition, we consider the case $C_1 = \Theta(1)$, $C_2 = \Theta(1)$, and $\eps_{\mathrm{mp}} = \Theta(1)$ in this explanation. The correctness of the data structure (Algorithm~\ref{alg:maintain_projection}) directly follows from Woodbury matrix identity. (The update rule in Line~\ref{lin:offlinewoodburry} correctly maintains $M = A^\top ( A V A^\top )^{-1} A$). The amortized time analysis is based on a potential function that measures the distance of the approximate vector $v$ and the target vector $w$. We will show that 
\begin{itemize}
\item The cost to update the projection $M$ is proportional to the decrease of the potential.
\item Each call to query increase the potential by a fixed amount. 
\end{itemize}
Combining both together gives the amortized runtime bound of our data structure.

\begin{algorithm}[H]\caption{\small Projection Maintenance Data Structure}\label{alg:maintain_projection} {\small
\begin{algorithmic}[1]
\State {\bf datastructure} \textsc{MaintainProjection} \Comment{Theorem~\ref{thm:maintain_projection}}
\\
\State {\bf members}
  \State \hspace{4mm} $w \in \R^{n}$  \Comment{Target vector}
  \State \hspace{4mm} $v, \wt{v} \in \R^{n}$   \Comment{Approximate vectors $v \approx_{\epsilon_{\mathrm{mp}}} w$ and $\wt{v} \approx_{\epsilon_{\mathrm{mp}}} w$}
  \State \hspace{4mm} $A \in \R^{d \times n}$
  \State \hspace{4mm} $M \in \R^{n \times n}$  \Comment{Matrix $M = A^\top ( A V A^\top)^{-1} A $}
  \State \hspace{4mm} $\epsilon_{\mathrm{mp}} \in (0,1/4)$ \Comment{Tolerance}
  \State \hspace{4mm} $a \in (0,\alpha]$ \Comment{Batch Size $n^a$ for Update}
\State {\bf end members}
\\
  \Procedure{\textsc{Initialize}}{$A,w,\epsilon_{\mathrm{mp}}, a$} \Comment{Lemma~\ref{lem:maintain_projection_initialization}}
  \State $w \leftarrow w$, $v \leftarrow w$, $\epsilon_{\mathrm{mp}} \leftarrow \epsilon_{\mathrm{mp}}$, $A \leftarrow A$, $a \leftarrow a$
    \State $M \leftarrow A^\top ( A V A^\top)^{-1} A $
  \EndProcedure
  \\
  \Procedure{\textsc{Update}}{$w^{\new}$} \Comment{Lemma~\ref{lem:maintain_projection_update}}
    \State $y_i \leftarrow \ln w^{\new}_i - \ln v_i$, $\forall i \in [n]$
    \State $r \leftarrow$ the number of indices $i$ such that $|y_i| \geq \epsilon_{\mathrm{mp}} / 2$.
    \If {$r < n^a$}
      \State $v^{\new} \leftarrow v$
      \State $M^{\new} \leftarrow M$
    \Else
      \State Let $\pi : [n] \rightarrow [n]$ be a sorting permutation such that $|y_{\pi(i)}| \geq |y_{\pi(i+1)}|$
      \While{$1.5 \cdot r < n$ and $|y_{\pi(\lceil 1.5 \cdot r  \rceil)}| \geq (1-1/\log n) |y_{\pi(r)}|$}
        \State $r \leftarrow \min(\lceil 1.5 \cdot r  \rceil, n)$
      \EndWhile
      \State $v^{\new}_{\pi(i)} \leftarrow \begin{cases} w^{\new}_{\pi(i)} & i \in \{1,2,\cdots,r\} \\ v_{\pi(i)} & i \in \{r+1, \cdots, n\} \end{cases}$
      \\
      \Comment{Compute $M^{\new} = A^\top ( A V^{\new} A^\top )^{-1} A$ via Woodbury matrix identity}
      \State $\Delta \leftarrow \mathrm{diag}( v^{\new} - v )$ \Comment{$\Delta \in \R^{n \times n}$ and $\| \Delta \|_0 = r$}
      \State Let $S \leftarrow \pi( [r] )$ be the first $r$ indices in the permutation.
      \State Let $M_S\in \R^{n \times r}$ be the $r$ columns from $S$ of $M$.
      \State Let $M_{S,S},\Delta_{S,S} \in \R^{r \times r}$ be the $r$ rows and columns from $S$ of $M$ and $\Delta$.
      \State $M^{\new} \leftarrow M - M_S \cdot ( \Delta^{-1}_{S,S} + M_{S,S} )^{-1} \cdot (M_S)^\top$ \label{lin:offlinewoodburry}
    \EndIf
    \State $w \leftarrow w^{\new}$, $v \leftarrow v^{\new}$, $M \leftarrow M^{\new}$
    \State $\tilde{v}_{i} \leftarrow \begin{cases} v_{i} & \text{if } 
|\ln w_i - \ln v_i| < \epsilon_{\mathrm{mp}}/2\\ 
w_{i} & \text{otherwise} \end{cases}$\label{lin:vtilde}
\State \Return $\tilde{v}$
  \EndProcedure
  \\
  \Procedure{\textsc{Query}}{$h$} \Comment{Lemma~\ref{lem:maintain_projection_query}}
    \State Let $\wt{S}$ be the indices $i$ such that $|\ln w_i - \ln v_i| \geq \epsilon_{\mathrm{mp}}/2$.
    \State \Return $\sqrt{\wt{V}} \cdot ( M \cdot ( \sqrt{\wt{V}} \cdot h ) )  - \sqrt{ \wt{V}} \cdot ( M_{\wt{S}}  \cdot ( ( \wt{\Delta}_{\wt{S},\wt{S}}^{-1} + M_{\wt{S},\wt{S}} )^{-1}  \cdot ( M_{\wt{S}}^\top \sqrt{\wt{V}} h )  ) )$ \label{lin:onlinewoodburry}
  \EndProcedure
  \\
\State {\bf end datastructure}
\end{algorithmic}}
\end{algorithm}

Now, we explain the definition of the potential. Consider the $k$-th round of the algorithm. For all $i \in [n]$, we define $x^{(k)}_{i} = \ln w^{(k)}_{i} - \ln  v^{(k)}_{i}$. Note that $|x^{(k)}_{i}|$ measures the relative distance between $w^{(k)}_i$ and $v^{(k)}_i$. Our algorithm fixes the indices with largest error $x^{(k)}_{i}$. To capture the fact that updating in a larger batch is more efficient, we define the potential as a weighted combination of the error where we put more weight to higher $x^{(k)}_{i}$. Formally, we sort the coordinates of $x^{(k)}$ such that $|x^{(k)}_{i}| \geq |x^{(k)}_{i+1}|$ and define the potential by
\begin{align*}
\Psi_k = \sum_{i = 1}^n g_i \cdot \psi (x^{(k)}_{i}).
\end{align*}
where $g_i$ are positive decreasing numbers to be chosen and $\psi$ is a symmetric ($\psi(x) = \psi(-x)$) positive function that increases on both sides. For intuition, one can think $\psi(x)$ behaves roughly like $|x|$.

Each iteration we update the projection matrix such that the error of $|x_1|, \cdots, |x_r|$ drops from roughly $\epsilon_{\mathrm{mp}}$ to 0. This decreases the potential of $\psi (x^{(k)}_{i})$ by $\Omega(\epsilon_{\mathrm{mp}})$ from $i=1,\cdots,r$. Therefore, the whole potential decreases by $\Omega(\epsilon_{\mathrm{mp}} \sum_{i = 1}^r g_i)$. To make the term $\sum_{i = 1}^r g_i$ proportional to the time to update a rank $r$ part of the projection matrix, we set
\begin{align}\label{eq:g}
g_i =  \begin{cases} n^{-a}, & \text{if~} i < n^a; \\ i^{ \frac{\omega-2}{1-a} - 1 } n^{- \frac{a(\omega-2)}{1-a}}, & \text{otherwise}. \end{cases} 
\end{align}
where $\omega$ is the exponent of matrix multiplication and $a$ is any positive number less than or equals to the dual exponent of matrix multiplication. Lemma \ref{lem:omega_leq_3_minus_a} shows that $g$ is indeed non-increasing and Lemma~\ref{lem:maintain_projection_update} shows that the update time of data-structure is indeed $O(r g_r n^{2+o(1)}) = O(\sum_{i = 1}^r g_i n^{2+o(1)})$ for any $r \geq n^a$.

Each call to $\textsc{Update}$, the expectation of the error vector $x^{(k)}$ moves roughly in an unit $\ell_2$ ball. Therefore, the changes of the potential is roughly upper bounded $(\sum_{i = 1}^n g_i^2)^{1/2} \approx n^{\omega - 5/2}$. Since it takes us $n^{2+o(1)}$ time to decrease the potential by roughly $1$ in the update step, the total time is roughly $n^{\omega-1/2}$.

For the case of stochastic central path, we note that the variance of the vector $x$ is quite small. By choosing a smooth potential function $\psi$ (see (\ref{eq:def_psi})), we can essentially give the same result as if there is no variance.

\subsection{Proof of Theorem~\ref{thm:maintain_projection}}\label{sec:correct}
Now, we give the proof of Theorem~\ref{thm:maintain_projection}. We will defer some simple calculations into later sections.
\begin{proof}[Proof of Theorem~\ref{thm:maintain_projection}]
\ 

{\bf Proof of Correctness.}
The definition of $\tilde{v}$ in Line~\ref{lin:vtilde} ensures that $(1 - \epsilon_{\mathrm{mp}}) \tilde{v_i} \leq w_i \leq (1 + \epsilon_{\mathrm{mp}}) \tilde{v_i}$.

Using the Woodbury matrix identity, one can verify that the update rule in Line~\ref{lin:offlinewoodburry} correctly maintains $M = A^\top ( A V A^\top )^{-1} A$. See the deviation of the formula in Lemma~\ref{lem:maintain_projection_initialization}.
By the same reasoning, the Line~\ref{lin:onlinewoodburry} outputs the vector $\sqrt{\tilde{V}}A^\top ( A \tilde{V} A^\top )^{-1} A \sqrt{\tilde{V}} h$.
This completes the proof of correctness.

{\bf Definition of $x$ and $y$.}
Consider the $k$-th round of the algorithm. For all $i \in [n]$, we define $x^{(k)}_{i}$, $x^{(k+1)}_{i}$ and $y^{(k)}_{i}$ as follows:
\begin{align*}
x^{(k)}_{i} = \ln w^{(k)}_{i} - \ln v^{(k)}_{i}  , y^{(k)}_{i} = \ln w^{(k+1)}_{i} - \ln v^{(k)}_{i} ,  x^{(k+1)}_{i} = \ln w^{(k+1)}_{i} - \ln v^{(k+1)}_{i} .
\end{align*}
Note that the difference between $x^{(k)}_i$ and $y^{(k)}_i$ is that $w$ is changing. The difference between $y^{(k)}_i$ and $x^{(k+1)}_i$ is that $v$ is changing. 

{\bf Assume sorting.} Assume the coordinates of vector $x^{(k)} \in \R^n$ are sorted such that $|x^{(k)}_{i}| \geq |x^{(k)}_{i+1}|$. Let $\tau$ and $\pi$ are permutations such that $|x^{(k+1)}_{\tau(i)}| \geq |x^{(k+1)}_{\tau(i+1)}|$ and $|y^{(k)}_{\pi(i)}| \geq |y^{(k)}_{\pi(i+1)}|$.

{\bf Definition of Potential function.}
Let $g$ be defined in  \eqref{eq:g}. Let $\psi : \R \rightarrow \R$ be defined by
\begin{align}\label{eq:def_psi}
\psi(x) = \begin{cases}
\frac{|x|^2}{\epsilon_{\mathrm{mp}}}, & |x| \in [0,\epsilon_{\mathrm{mp}}/2]\\
\epsilon_{\mathrm{mp}}/2 - \frac{ (\epsilon_{\mathrm{mp}} - |x|)^2 }{\epsilon_{\mathrm{mp}}}, & |x| \in (\epsilon_{\mathrm{mp}}/2,\epsilon_{\mathrm{mp}}] \\
\epsilon_{\mathrm{mp}} /2. & |x| \in (\epsilon_{\mathrm{mp}}, +\infty)
\end{cases}
\end{align}
We define the potential at the $k$-th round by
\begin{align*}
\Psi_k = \sum_{i = 1}^n g_i \cdot \psi (x^{(k)}_{\tau_k(i)}).
\end{align*}
where $\tau_k(i)$ is the permutation such that $|x^{(k)}_{\tau_k(i)}| \geq |x^{(k)}_{\tau_k(i+1)}|$.

{\bf Bounding the potential.}

We can express $\Psi_{k+1} - \Psi_k$ as follows:
\begin{align}\label{eq:potentialchange}
\Psi_{k + 1} - \Psi_{k}
%= & ~ \sum_{i=1}^n g_i \cdot \psi( x^{(k+1)}_{\tau(i)} ) - \sum_{i=1}^n g_i \cdot \psi( x^{(k)}_{i} )  \notag \\
= & ~ \sum_{i=1}^n g_i \cdot \left( \psi (  x^{(k+1)}_{\tau(i)} ) - \psi ( x^{(k)}_{i} ) \right) \notag \\
= & ~ \sum_{i = 1}^n g_i \cdot \underbrace{ \left( \psi ( y^{(k)}_{\pi(i)} ) - \psi ( x^{(k)}_{i} ) \right) }_{w\text{~move}} - \sum_{i = 1}^n g_i \cdot \underbrace{ \left( \psi ( y^{(k)}_{\pi(i)} ) - \psi ( x^{(k+1)}_{\tau(i)} ) \right) }_{v\text{~move}}.
\end{align}
Now, using Lemma~\ref{lem:w_move} and \ref{lem:v_move}, and the fact that $\Psi_0 = 0$ and $\Psi_T \geq 0$, with \eqref{eq:potentialchange}, we get
\begin{align*}
0 
\leq & ~  \Psi_T - \Psi_0  
= \sum_{k = 0}^{T-1} \left( \Psi_{k+1} - \Psi_{k} \right) \\
\leq & ~ \sum_{k = 0}^{T-1}  \left( O(C_1 + C_2 / \epsilon_{\mathrm{mp}}) \cdot \sqrt{\log n} \cdot (n^{-a/2} + n^{\omega-5/2} ) - \Omega (\epsilon_{\mathrm{mp}} r_k g_{r_k} / \log n ) \right) \\
= & ~ T \cdot O(C_1 + C_2 / \epsilon_{\mathrm{mp}}) \cdot \sqrt{\log n} \cdot (n^{-a/2} + n^{\omega-5/2} ) - \sum_{k=1}^T \Omega (\epsilon_{\mathrm{mp}} r_k g_{r_k} / \log n),
\end{align*}
where the third step follows by Lemma~\ref{lem:w_move} and Lemma~\ref{lem:v_move} and $r_k$ is the number of coordinates we update during that iteration.

Therefore, we get, 
\begin{align*}
\sum_{k=1}^T r_k g_{r_k} = O \left( T \cdot (C_1  / \epsilon_{\mathrm{mp}} + C_2 / \epsilon_{\mathrm{mp}}^2)\cdot \log^{3/2} n \cdot (n^{\omega-5/2} + n^{-a/2}) \right).
\end{align*}

{\bf Proof of running time.}
See the Section \ref{sec:time}.
\end{proof}

\subsection{Initialization time, update time, query time}\label{sec:time}

To formalize the amortized runtime proof, we first analyze the initialization time (Lemma~\ref{lem:maintain_projection_initialization}), update time (Lemma~\ref{lem:maintain_projection_update}), and query time (Lemma~\ref{lem:maintain_projection_query}) of our projection maintenance data-structure.

\begin{lemma}[Initialization time]\label{lem:maintain_projection_initialization}
The initialization time of data-structure \textsc{MaintainProjection} (Algorithm~\ref{alg:maintain_projection}) is $O( n^2 d^{\omega - 2} )$.
\end{lemma}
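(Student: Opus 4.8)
The plan is straightforward: the \textsc{Initialize} procedure does nothing beyond a few $O(n)$-time assignments and the one nontrivial line $M \leftarrow A^\top (AVA^\top)^{-1} A$, so I only need to bound the cost of forming this $n\times n$ matrix. Since the proof of Theorem~\ref{thm:maintain_projection} also points here for the derivation of the Woodbury update rule, I would additionally record that derivation in this proof; see the last paragraph.

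First I would compute $M$ in the association order that keeps every intermediate matrix as small as possible: (i) form $AV$ by scaling the columns of $A$, costing $O(nd)$; (ii) form $B \defeq (AV)A^\top \in \R^{d\times d}$; (iii) invert $B$ in $O(d^\omega)$ time; (iv) form $B^{-1}A \in \R^{d\times n}$; (v) form $M \defeq A^\top (B^{-1}A) \in \R^{n\times n}$. For steps (ii), (iv), (v) I would tile the dimension(s) of size $n$ into blocks of size $d$ and multiply block-by-block, using the bound $O(d^\omega)$ for one $d\times d$ by $d\times d$ product. In (ii) the $d\times d$ output is the sum of $n/d$ such block products, so the cost is $O((n/d)\, d^\omega)=O(nd^{\omega-1})$; step (iv) is the same; in step (v) the $n\times n$ output is tiled into $(n/d)^2$ blocks, each a $d\times d$ by $d\times d$ product, for a total of $O((n/d)^2 d^\omega)=O(n^2 d^{\omega-2})$. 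Because $n\ge d$ and $2\le\omega\le 3$, one checks that $O(nd)$, $O(d^\omega)$ and $O(nd^{\omega-1})$ are all dominated by $O(n^2 d^{\omega-2})$, so the total is $O(n^2 d^{\omega-2})$ as claimed.

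I do not expect a real obstacle here; the only point requiring care is the association order in forming $M$ — one must compute $(AVA^\top)^{-1}$ as a $d\times d$ matrix first and only then contract it against $A$ on both sides, so that no multiplication has two dimensions of size $\Theta(n)$ until the unavoidable $n\times n$ output is produced (the $n^2 d^{\omega-2}$ term is forced already by the size of $M$). It is worth remarking that $\omega\le 3$ gives $n^2 d^{\omega-2}\le n^2 d$, consistent with the trivial bound.

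Finally, for the Woodbury derivation that the main proof defers here: writing $A_S\in\R^{d\times r}$ for the submatrix of columns of $A$ indexed by $S=\pi([r])$ and noting that $\Delta=\diag(v^{\new}-v)$ is supported on $S$, we have $AV^{\new}A^\top = AVA^\top + A_S\Delta_{S,S}A_S^\top$. Applying the Woodbury identity (Fact~\ref{fact:woodbury}) to this rank-$r$ correction gives
\begin{align*}
(AV^{\new}A^\top)^{-1} = (AVA^\top)^{-1} - (AVA^\top)^{-1}A_S\left(\Delta_{S,S}^{-1} + A_S^\top (AVA^\top)^{-1}A_S\right)^{-1}A_S^\top (AVA^\top)^{-1}.
\end{align*}
Left-multiplying by $A^\top$ and right-multiplying by $A$, and observing that $A^\top (AVA^\top)^{-1}A_S$ is precisely the block $M_S$ of $M$ consisting of the columns indexed by $S$, while $A_S^\top (AVA^\top)^{-1}A_S = M_{S,S}$, we obtain $M^{\new} = M - M_S(\Delta_{S,S}^{-1} + M_{S,S})^{-1}M_S^\top$, matching Line~\ref{lin:offlinewoodburry}; the identical computation for the approximate vector and index set $\wt S$ used in \textsc{Query} justifies Line~\ref{lin:onlinewoodburry}.
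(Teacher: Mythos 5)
Your proposal is correct and follows the same route as the paper, which simply asserts that computing $A^\top (AVA^\top)^{-1}A$ costs $O(n^2 d^{\omega-2})$; you supply the standard details (associating so that the $d\times d$ inverse is formed first, then tiling the rectangular products into $d\times d$ blocks) that the paper leaves implicit. Your appended Woodbury derivation matches the one the paper actually carries out in the update-time lemma (Lemma~\ref{lem:maintain_projection_update}), so nothing is missing.
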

\begin{proof}
Given matrix $A \in \R^{d \times n}$ and diagonal matrix $V \in \R^{n \times n}$, computing $A^\top ( A V A^\top)^{-1} A$ takes $O(n^2 d^{\omega -2})$.
\end{proof}

\begin{lemma}[Update time]\label{lem:maintain_projection_update}
The update time of data-structure \textsc{MaintainProjection} (Algorithm~\ref{alg:maintain_projection}) is $O(r g_{r} n^{2+o(1)})$ where $r$ is the number of indices we updated in $v$. 
\end{lemma}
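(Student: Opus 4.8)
The plan is to walk line by line through \textsc{Update} and charge each operation against $r g_r n^{2+o(1)}$, where $r$ is the \emph{final} number of updated coordinates (after the \textbf{while} loop). First I would clear away the cheap pieces: computing $y$ and counting the indices with $|y_i|\ge\epsilon_{mp}$ is $O(n)$; in the branch $r<n^a$ the procedure only copies $v$ and $M$, costing $O(n)$ with nothing updated, so from here on assume the final $r$ satisfies $r\ge n^a$. Sorting $y$ is $O(n\log n)$. The \textbf{while} loop multiplies $r$ by $1.5$ each pass and stops once $1.5r\ge n$, so it runs $O(\log n)$ times with $O(1)$ work per pass. Forming $v^{\new}$, $\Delta=\diag(v^{\new}-v)$, the set $S=\pi([r])$, and extracting $M_S\in\R^{n\times r}$ and $M_{S,S},\Delta_{S,S}\in\R^{r\times r}$ cost $O(nr)$.

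Next I would account for Line~\ref{lin:offlinewoodburry}, computing $M^{\new}=M-M_S(\Delta_{S,S}^{-1}+M_{S,S})^{-1}(M_S)^\top$ in the order: (i) $\Delta_{S,S}^{-1}$ in $O(r)$ time since $\Delta$ is diagonal, and $B=(\Delta_{S,S}^{-1}+M_{S,S})^{-1}$ in $O(r^\omega)$; (ii) $C=M_SB$, an $n\times r$ by $r\times r$ product, done by cutting $M_S$ into $\lceil n/r\rceil$ blocks of size $r\times r$, costing $O((n/r)r^\omega)=O(nr^{\omega-1})$; (iii) $C(M_S)^\top$, an $n\times r$ by $r\times n$ product, which by Theorem~\ref{lem:rmw} takes $n^{2+o(1)}+r^{\frac{\omega-2}{1-\alpha}}n^{2-\frac{\alpha(\omega-2)}{1-\alpha}+o(1)}$ time; (iv) the $n\times n$ subtraction $M-C(M_S)^\top$ in $O(n^2)$; and Line~\ref{lin:vtilde} in $O(n)$.

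The remaining work is to check that each of these terms is $O(r g_r n^{2+o(1)})$. Since $g_r=r^{\frac{\omega-2}{1-a}-1}n^{-\frac{a(\omega-2)}{1-a}}$ for $r\ge n^a$, writing $r=n^b$ with $a\le b\le1$ gives $r g_r n^2=n^{e(a)}$ with $e(x)=2+(\omega-2)\tfrac{b-x}{1-x}$. The map $x\mapsto\tfrac{b-x}{1-x}$ has derivative $\tfrac{b-1}{(1-x)^2}<0$, so it is decreasing on $[0,1)$; since $a\le\alpha$ this yields $e(a)\ge e(\alpha)$, i.e.\ the rectangular-multiplication term $r^{\frac{\omega-2}{1-\alpha}}n^{2-\frac{\alpha(\omega-2)}{1-\alpha}}=n^{e(\alpha)}$ is at most $r g_r n^2$ (and $n^{2+o(1)}\le n^{e(a)+o(1)}$ since $e(a)\ge2$). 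For the side costs, note $e(\cdot)$ is linear in $b$ with $e(a)=2$ and $e(1)=\omega$; comparing it at $b=a$ and $b=1$ with the (also linear) exponents $b\omega$ of $r^\omega$, $1+b(\omega-1)$ of $nr^{\omega-1}$, the constant $2$ of $n^2$, and $1+b$ of $nr$, and using $a\le 3-\omega$ (the hypothesis behind $g$ being non-increasing, Lemma~\ref{lem:omega_leq_3_minus_a}), which on $[2,3]$ forces $a\omega\le(3-\omega)\omega\le2$ and $a(\omega-1)\le(3-\omega)(\omega-1)\le1$, one checks each side term is dominated throughout $n^a\le r\le n$; in particular $r g_r\ge1$ on that range (with $r g_r=1$ at $r=n^a$). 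Summing the $O(\log n)$ terms then gives update time $O(r g_r n^{2+o(1)})$.

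I expect the main obstacle to be precisely this last bookkeeping: the potential weights $g$ are defined through the chosen parameter $a$, whereas Theorem~\ref{lem:rmw} is phrased with the true dual exponent $\alpha\ge a$, so one must verify both that substituting $a$ for $\alpha$ only enlarges the rectangular-multiplication bound (the monotonicity of $\tfrac{b-x}{1-x}$) and that every auxiliary $O(r^\omega)$- and $O(nr^{\omega-1})$-type cost stays under $r g_r n^2$ across the whole regime $n^a\le r\le n$. Everything else is routine.
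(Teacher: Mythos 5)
Your proposal is correct and follows essentially the same route as the paper: decompose \textsc{Update} into the inverse, the rank-$r$ rectangular products, and the $O(n^2)$ bookkeeping, then dominate everything by $r g_r n^{2+o(1)}$ using Theorem~\ref{lem:rmw} and $r g_r \ge 1$ for $r \ge n^a$. You are in fact more careful than the paper, which simply asserts that the $n\times r$ by $r\times n$ product costs $O(r g_r n^{2+o(1)})$; your explicit check that replacing $\alpha$ by $a\le\alpha$ only enlarges the rectangular-multiplication exponent, and that the $r^\omega$ and $nr^{\omega-1}$ terms stay dominated via $a\le 3-\omega$, fills in details the paper leaves implicit.
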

\begin{proof}

Let $A_S\in \R^{d \times r}$ be the $r$ columns from $S$ of $A$.
From $k$-th query to $(k+1)$-th query, we have
\begin{align*}
 & ~ A^\top (A V^{(k+1)} A^\top)^{-1} A \\
= & ~ A^\top (A (  V^{(k)} + \Delta) A^\top)^{-1} A  \\
= & ~ A^\top \left( (A V^{(k)} A^\top)^{-1} - (A V^{(k)} A^\top)^{-1} A_S ( \Delta^{-1}_{S,S}+ A_S^\top (A V^{(k)} A^\top)^{-1} A_S )^{-1} A_S^\top (A V^{(k)} A^\top)^{-1} \right) A \\
= & ~ A^\top (A V^{(k)} A^\top)^{-1} A - A^\top (A V^{(k)} A^\top)^{-1} A_S ( \Delta^{-1}_{S,S} + A_S^\top (A V^{(k)} A^\top)^{-1} A_S )^{-1} A_S^\top (A V^{(k)} A^\top)^{-1} A \\
= & ~ M^{(k)} - M^{(k)}_S ( \Delta^{-1}_{S,S} + M^{(k)}_{S,S} )^{-1} (M^{(k)}_S)^\top,
\end{align*}
where the second step follows by Woodbury matrix identity and the last step follows by the definition of $M^{(k)} \in \R^{n \times n}$.

Thus the update rule of matrix $M^{(k+1)} \in \R^{n \times n}$ can be written as 
\begin{align*}
M^{(k+1)} = M^{(k)} - M^{(k)}_S ( \Delta^{-1}_{S,S} + (M^{(k)})_{S,S} )^{-1} (M^{(k)}_S)^\top.
\end{align*}
%where $\Delta$ can be thought of as a $r \times r$ diagonal matrix, $(M^{(k)})_{S,S}$ is a $r \times r$ matrix, and $S \subset [n]$ has size $r$. 

The updates in round $k$ can be splitted into four parts:
\begin{enumerate}
\item Adding two $r \times r$ matrices takes $O(r^2)$ time.
\item Computing the inverse of an $r \times r$ matrix takes $O( r^{\omega + o(1)} )$ time.
\item Computing the matrix multiplication of a $n \times r$ and $r \times n$ matrix takes $O(r g_{r} \cdot n^{2+o(1)})$ time where we used that $r \geq n^a$ (Lemma~\ref{lem:rmw}).
\item Adding two $n \times n$ matrices together takes $O(n^2)$ time.
\end{enumerate}
Hence, the total cost is
\begin{align*}
O( r^2 + r^{\omega + o(1)}  + r g_{r}\cdot n^{2+o(1)} + n^2) = O(r^2 + r^{\omega + o(1)}  + r g_{r}\cdot n^{2+o(1)}) = O(  r g_{r}\cdot n^{2+o(1)}).
\end{align*} 
where we used $r g_{r} \geq 1$ for all $r \geq n^a$ in the first step.
\end{proof}

\begin{lemma}[Query time]\label{lem:maintain_projection_query}
The query time of data-structure \textsc{MaintainProjection} (Algorithm~\ref{alg:maintain_projection}) is $O(n \cdot \| h \|_0 + n^{1+a + o(1)})$.
\end{lemma}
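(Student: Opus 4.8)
The plan is to walk through \textsc{Query}$(h)$ (Line~\ref{lin:onlinewoodburry} of Algorithm~\ref{alg:maintain_projection}) operation by operation, charging each arithmetic step, and to exploit two structural facts: the sparsity of the input $h$, and the bound $|\widetilde S| \le n^{a}$ on the ``bad'' coordinate set $\widetilde S$ (the set of $i$ for which $(1-\epsilon_{mp}) v_i \le w_i \le (1+\epsilon_{mp}) v_i$ fails). The first step is to establish this bound, as an invariant maintained by \textsc{Update} and hence valid at every \textsc{Query}. In the branch $r < n^{a}$ of \textsc{Update}, $v$ is left unchanged and $\widetilde S$ equals $\{ i : |w_i/v_i - 1| > \epsilon_{mp} \}$, whose size is at most the $r$ just computed, so $|\widetilde S| < n^{a}$. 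In the branch $r \ge n^{a}$, after the doubling \texttt{while}-loop the final $r$ is at least the number $r_0$ of coordinates with $|y_i| \ge \epsilon_{mp}$; since those $r_0$ coordinates are exactly the top entries in the $|y|$-ordering and $r_0 \le r$, setting $v_{\pi(i)} \leftarrow w_{\pi(i)}$ for $i \in [r]$ forces $x^{\new}_i = 0$ for every coordinate that was out of range, while every remaining coordinate has $x^{\new}_i = y_i$ with $|y_i| < \epsilon_{mp}$; hence $\widetilde S = \emptyset$ right after such a step. Because only \textsc{Update} modifies the members $w$ and $v$, $\widetilde S$ cannot grow between consecutive \textsc{Update} calls, so $|\widetilde S| \le n^{a}$ always.

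Next I would account the cost of the two terms of Line~\ref{lin:onlinewoodburry}. Write $h' = \sqrt{\widetilde V}\, h$; since $\widetilde V$ is diagonal, $h'$ is computed in $O(\|h\|_0)$ time and has at most $\|h\|_0$ nonzeros. The first term $\sqrt{\widetilde V}\,(M h')$ then costs $O(n \|h\|_0)$ for the dense-matrix-times-sparse-vector product $M h'$ (touch one column of $M$ per nonzero of $h'$), plus $O(n)$ for the diagonal scaling. For the second term, $M_{\widetilde S}^\top h'$ is $|\widetilde S|$ inner products against a $\|h\|_0$-sparse vector, costing $O(|\widetilde S|\cdot\|h\|_0)$; forming $\widetilde\Delta_{\widetilde S,\widetilde S}^{-1} + M_{\widetilde S,\widetilde S}$ costs $O(|\widetilde S|^2)$ (the diagonal inverse is trivial, the submatrix of $M$ is read off, since $\widetilde\Delta_{\widetilde S,\widetilde S}$ has nonzero diagonal $w_i-v_i$ on $\widetilde S$); inverting this $|\widetilde S|\times|\widetilde S|$ matrix costs $O(|\widetilde S|^{\omega+o(1)})$; applying the inverse to a length-$|\widetilde S|$ vector costs $O(|\widetilde S|^2)$; and finally $M_{\widetilde S}$ times a length-$|\widetilde S|$ vector, followed by the diagonal scaling and the subtraction of the two $n$-vectors, costs $O(n|\widetilde S| + n)$. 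Identifying $\widetilde S$ itself is an $O(n)$ scan.

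Summing and substituting $|\widetilde S| \le n^{a}$ gives a bound of the form $O\big( n\|h\|_0 + n^{a}\|h\|_0 + n^{a\omega+o(1)} + n^{2a} + n^{1+a} + n \big)$. Since $a < 1$, one has $n^{a}\|h\|_0 \le n\|h\|_0$, $n^{2a}\le n^{1+a}$ and $n \le n^{1+a}$; and $n^{a\omega+o(1)} = O(n^{1+a+o(1)})$ because $a \le \alpha \le 3-\omega$ (partition a square product into $n^{1-\alpha}$ rectangular $n\times n^{\alpha}$ products, each $n^{2+o(1)}$ by definition of $\alpha$) together with $(3-\omega)(\omega-1) \le 1$, which is just $(\omega-2)^2 \ge 0$; hence $a(\omega-1)\le 1$, i.e.\ $a\omega \le 1+a$. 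This yields the claimed $O(n\|h\|_0 + n^{1+a+o(1)})$. I expect the only real subtlety to be the invariant $|\widetilde S| \le n^{a}$ — in particular, checking that a batch update in the $r \ge n^{a}$ branch zeroes out \emph{every} previously out-of-range coordinate, and that no stale, large $\widetilde S$ can carry over into a \textsc{Query}; the rest is routine arithmetic accounting, the only external ingredient being the matrix-exponent inequality above.
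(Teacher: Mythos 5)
Your proposal is correct and follows essentially the same route as the paper's proof: decompose Line~\ref{lin:onlinewoodburry} via the Woodbury identity, charge $O(n\|h\|_0)$ for the dense-times-sparse product, charge $O(\wt{r}^{\omega+o(1)}+n\wt{r})$ for the low-rank correction with $\wt{r}=|\wt{S}|\le n^{a}$, and absorb $n^{a\omega}$ into $n^{1+a+o(1)}$ using $\omega\le 3-\alpha$ (your derivation via $(3-\omega)(\omega-1)\le 1$ is equivalent to the paper's $a(3-a)\le 1+a$). The one thing you do beyond the paper is to actually verify the invariant $|\wt{S}|\le n^{a}$ by a case analysis of the two branches of \textsc{Update} (the paper simply asserts it); your argument for that invariant is correct and is a worthwhile addition.
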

\begin{proof}
Let $\wt{\Delta}$ satisfy $\wt{V} = V + \wt{\Delta}$. Let $\wt{S} \subset [n]$ denote the support of $\wt{\Delta}$ and then $| \wt{S} | \leq n^a$. Let $\wt{r}$ denote $|\wt{S}|$. We abuse the notation here, $\wt{\Delta}$ denotes both $n \times n$ diagonal matrix and a length $n$ vector.

Using Woodbury matrix identity and definition of $M$, the same proof as Update time (Lemma~\ref{lem:maintain_projection_update}) shows 
\begin{align*}
 A^\top ( A \wt{V} A^\top )^{-1} A 
%= & ~ A^\top ( A ( V + \wt{\Delta} ) A^\top  )^{-1} A \\
%= & ~ A^\top \left( (A V A^\top)^{-1} - (A V A^\top)^{-1} A_{\wt{S}} \cdot \left( \wt{\Delta}_{\wt{S},\wt{S}}^{-1} + A_{\wt{S}}^\top ( A V A^\top )^{-1} A_{\wt{S}} \right)^{-1} \cdot A_{\wt{S}}^\top (A V A^\top)^{-1} \right) A \\
%= & ~ A^\top (A V A^\top)^{-1} A - A^\top (A V A^\top)^{-1} A_{\wt{S}} \cdot \left( \wt{\Delta}_{\wt{S},\wt{S}}^{-1} + A_{\wt{S}}^\top ( A V A^\top )^{-1} A_{\wt{S}} \right)^{-1} \cdot A_{\wt{S}}^\top (A V A^\top)^{-1} A \\
=  M + M_{\wt{S}}  \left( \wt{\Delta}_{\wt{S},\wt{S}}^{-1} + M_{\wt{S},\wt{S}} \right)^{-1}  M_{\wt{S}}^\top,
\end{align*}
where $\wt{\Delta}_{\wt{S}\times \wt{S}}$ has size $\wt{r} \times \wt{r}$, $M_{\wt{S},\wt{S}}$ has size $\wt{r} \times \wt{r}$ and $M_{\wt{S}}$ has size $n \times \wt{r}$.

To compute $\sqrt{\wt{V}} A^\top ( A \wt{V} A^\top )^{-1} A \sqrt{\wt{V}} h$, we just need to compute
\begin{align*}
\sqrt{\wt{V}} M \sqrt{\wt{V}} h + \sqrt{\wt{V}} M_{\wt{S}}  ( \wt{\Delta}_{\wt{S},\wt{S}}^{-1} + M_{\wt{S},\wt{S}} )^{-1} M_{\wt{S}}^\top \sqrt{\wt{V}} h.
\end{align*}
Note the running time of computing the first term of the above equation only takes $O(n \cdot \| h \|_0)$ time.

Next, we analyze the cost of computing the second term of the above equation. It contains several parts:
\begin{enumerate}
\item Computing $\wt{M}_{\wt{S}}^\top \cdot( \sqrt{ \wt{V} } \cdot h ) \in \R^{\wt{r}}$ takes $\wt{r} \| h \|_0$ time.
\item Computing $( \wt{\Delta}_{\wt{S},\wt{S}}^{-1} + M_{\wt{S},\wt{S}} )^{-1} \in \R^{\wt{r} \times \wt{r}}$ that is the inverse of a $\wt{r} \times \wt{r}$ matrix takes $\wt{r}^{\omega + o(1)}$ time.
\item Computing matrix-vector multiplication between $\wt{r} \times \wt{r}$ matrix ($( \wt{\Delta}_{\wt{S},\wt{S}}^{-1} + M_{\wt{S},\wt{S}} )^{-1}$) and $\wt{r} \times 1$ vector ($\wt{M}_{\wt{S}}^\top \sqrt{ \wt{V} } h$) takes $O(\wt{r}^2)$ time.
\item Computing matrix-vector multiplication between $n \times \wt{r}$ matrix ($ M_{\wt{S}} $) and $\wt{r} \times 1$ vector ($( \wt{\Delta}_{\wt{S},\wt{S}}^{-1} + M_{\wt{S},\wt{S}} )^{-1} M_{\wt{S}}^\top \sqrt{\wt{V}} h$) takes $O(n \wt{r})$ time.
\item Computing the entry-wise product of two $n$ vectors takes $O(n)$ time
\end{enumerate}
Thus, overall the running time is
\begin{align*}
O( \wt{r} \| h\|_0 + \wt{r}^{\omega + o(1)} + \wt{r}^2 +  n \wt{r} + n ) = O( \wt{r}^{\omega + o(1)} + n \wt{r} ) = O(n^{a \cdot \omega + o(1)} + n^{1+a}).
\end{align*}

Finally, we note that $\omega \leq 3 -\alpha \leq 3 - a$ (Lemma \ref{lem:omega_leq_3_minus_a}) and hence $a \cdot \omega \leq a(3-a) \leq 1+a$. Therefore, the runtime is $n^{1+a+o(1)}$.
\end{proof}

\subsection{Bounding $w$ move}\label{sec:w_move}
The goal of this section is to prove Lemma~\ref{lem:w_move}.
\begin{lemma}[$w$ move]\label{lem:w_move}
We have 
\begin{align*}
 \sum_{i=1}^n g_i \cdot \E \left[ \psi ( y^{(k)}_{\pi(i)} ) - \psi ( x^{(k)}_{i} ) \right] \leq O(C_1 + C_2 / \epsilon_{\mathrm{mp}}) \cdot \sqrt{\log n} \cdot (n^{-a/2} + n^{\omega-5/2} ).
\end{align*}
\end{lemma}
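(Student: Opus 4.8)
The plan is to control the change in potential caused only by moving $w$ (while $v$ is frozen at $v^{(k)}$), i.e.\ the quantity $\sum_{i=1}^n g_i\cdot\E[\psi(y^{(k)}_{\pi(i)})-\psi(x^{(k)}_i)]$. The key structural point is that, because $g$ is non-increasing (Lemma~\ref{lem:omega_leq_3_minus_a}) and $y^{(k)}_{\pi(\cdot)}$ is sorted in decreasing order of magnitude, the sum $\sum_i g_i\psi(y^{(k)}_{\pi(i)})$ is the maximum of $\sum_i g_i\psi(y^{(k)}_{\sigma(i)})$ over all permutations $\sigma$ — in particular it is at least $\sum_i g_i\psi(y^{(k)}_i)$, but more importantly the rearrangement can only help. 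However, for an \emph{upper} bound on the increase we need the reverse direction. So first I would compare $\sum_i g_i\psi(y^{(k)}_{\pi(i)})$ against $\sum_i g_i\psi(y^{(k)}_{\rho(i)})$ where $\rho$ sorts the $y$'s so that the index originally carrying $x^{(k)}_i$ still sits in slot $i$ when possible; the clean way is the standard fact that for a non-increasing weight vector $g$, $\sum_i g_i\psi(y_{\pi(i)})-\sum_i g_i\psi(x_i)\le \sum_i g_i(\psi(y_i)-\psi(x_i))$ fails in general, so instead I would bound the left side by first replacing $\pi$ by the identity at a cost controlled by how much the \emph{sorted} profile of $|y|$ dominates that of $|x|$, and then bound $\sum_i g_i(\psi(y^{(k)}_i)-\psi(x^{(k)}_i))$ coordinatewise. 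Concretely: since $x^{(k)}_i$ and $y^{(k)}_i$ share the same denominator $v^{(k)}_i$, we have $y^{(k)}_i-x^{(k)}_i = (w^{(k+1)}_i-w^{(k)}_i)/v^{(k)}_i = (1+x^{(k)}_i)\cdot(w^{(k+1)}_i-w^{(k)}_i)/w^{(k)}_i$, and the hypotheses of Theorem~\ref{thm:maintain_projection} bound the relevant moments of $(w^{(k+1)}_i-w^{(k)}_i)/w^{(k)}_i$ by $C_1,C_2$ with the pointwise bound $1/4$.

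Next I would Taylor-expand $\psi$ to second order. Because $\psi$ defined in \eqref{eq:def_psi} is $C^1$ with $|\psi'|\le 1$ and $|\psi''|\le 1/\epsilon_{mp}$ (it is piecewise quadratic, and the pieces match in value and first derivative), we get
\begin{align*}
\psi(y^{(k)}_i)-\psi(x^{(k)}_i) \le \psi'(x^{(k)}_i)(y^{(k)}_i-x^{(k)}_i) + \frac{1}{2\epsilon_{mp}}(y^{(k)}_i-x^{(k)}_i)^2.
\end{align*}
Taking expectations, the first-order term contributes $\sum_i g_i\,\psi'(x^{(k)}_i)\,\E[y^{(k)}_i-x^{(k)}_i]$, which I bound by Cauchy--Schwarz as $(\sum_i g_i^2)^{1/2}\cdot(\sum_i \psi'(x^{(k)}_i)^2\,\E[y^{(k)}_i-x^{(k)}_i]^2)^{1/2}$; since $|\psi'|\le 1$, $|1+x^{(k)}_i|\le 2$, and $\E[(w^{(k+1)}_i-w^{(k)}_i)/w^{(k)}_i]=\beta_i$ with $\sum_i\beta_i^2\le C_1^2$, this is $O(C_1)\cdot(\sum_i g_i^2)^{1/2}$. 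The second-order term contributes $\frac{1}{2\epsilon_{mp}}\sum_i g_i\,\E[(y^{(k)}_i-x^{(k)}_i)^2]$, and again by Cauchy--Schwarz this is $\le \frac{1}{2\epsilon_{mp}}(\sum_i g_i^2)^{1/2}(\sum_i\E[(y^{(k)}_i-x^{(k)}_i)^2]^2)^{1/2} = O(C_2/\epsilon_{mp})\cdot(\sum_i g_i^2)^{1/2}$, using $|1+x^{(k)}_i|\le 2$ and the $C_2$ hypothesis. Then I invoke the arithmetic fact (to be verified in a later section, cf.\ the statement of Lemma~\ref{lem:maintain_projection_update} and the choice of $g$) that $(\sum_{i=1}^n g_i^2)^{1/2} = O(\sqrt{\log n}\,(n^{-a/2}+n^{\omega-5/2}))$: the $i<n^a$ block gives $n^a\cdot n^{-2a}=n^{-a}$, and the tail block sums a power of $i$ whose exponent is near $-1$ when $\omega\approx 2$, producing the $\sqrt{\log n}$ factor and the $n^{\omega-5/2}$ term. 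Combining, $\sum_i g_i\,\E[\psi(y^{(k)}_i)-\psi(x^{(k)}_i)] \le O(C_1+C_2/\epsilon_{mp})\sqrt{\log n}(n^{-a/2}+n^{\omega-5/2})$.

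The remaining — and I expect the most delicate — step is to pass from the identity-permutation bound to the sorted-permutation $\pi$, i.e.\ to show that rearranging the $y$-coordinates into decreasing order does not increase the potential by more than the same order. The clean argument: let $x^\downarrow$ and $y^\downarrow$ denote the decreasing rearrangements of $(|x^{(k)}_i|)$ and $(|y^{(k)}_i|)$. Since $\psi$ is monotone in $|x|$ and $g$ is non-increasing, $\sum_i g_i\psi(y^{(k)}_{\pi(i)}) = \sum_i g_i\psi(y^\downarrow_i)$ and we want to compare with $\sum_i g_i\psi(x^\downarrow_i)$, because $\sum_i g_i\psi(x^{(k)}_i) = \sum_i g_i\psi(x^\downarrow_i)$ already (the $x$'s are assumed sorted). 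Now $|y^\downarrow_i - x^\downarrow_i| \le $ the decreasing rearrangement of $|y^{(k)}-x^{(k)}|$ is false in general, but the weaker bound $\sum_i (y^\downarrow_i - x^\downarrow_i)_+ \le$ controlled by $\|y^{(k)}-x^{(k)}\|$ in the relevant norms does hold via the standard fact $\||u|^\downarrow - |v|^\downarrow\|_p \le \|u-v\|_p$ (nonexpansiveness of sorting in every $\ell_p$). Applying this with the $\psi$-Taylor expansion as above but now to the sorted sequences, and using $\|y^{(k)}-x^{(k)}\|_2^2 = \sum_i\E[\dots]$-type bounds from the hypotheses, gives exactly the same final estimate. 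So the whole lemma reduces to: (i) nonexpansiveness of sorting in $\ell_p$, (ii) the $C^1$/Lipschitz-gradient properties of $\psi$, (iii) Cauchy--Schwarz against $g$, and (iv) the closed-form bound $\|g\|_2 = O(\sqrt{\log n}(n^{-a/2}+n^{\omega-5/2}))$. The main obstacle is making step (i) interact correctly with the \emph{expectation} — one must be careful that the rearrangement is applied pathwise (to the realized $y^{(k)}$) before taking $\E$, and that the second-moment hypothesis $\sum_i(\E[((w^{(k+1)}_i-w^{(k)}_i)/w^{(k)}_i)^2])^2\le C_2^2$ is strong enough to survive Jensen when we push the expectation through $\psi''$; this is exactly why the hypothesis is stated as a bound on $\sum_i(\E[\cdot^2])^2$ rather than on $\E[\cdot^4]$ or $(\sum_i\E[\cdot^2])^2$.
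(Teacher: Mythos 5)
Your ``diagonal'' part of the argument (Taylor expansion of $\psi$ using $L_1=\max|\psi'|=1$ and $L_2=\max|\psi''|=1/\epsilon_{mp}$, Cauchy--Schwarz against $\|g\|_2$, and the computation $\|g\|_2=O(\sqrt{\log n}\,(n^{-a/2}+n^{\omega-5/2}))$) matches the paper. But the step you yourself flag as most delicate --- handling the sorting permutation $\pi$ --- is where the proposal breaks. Your plan is to compare $\sum_i g_i\psi(y^{\downarrow}_i)$ with $\sum_i g_i\psi(x^{\downarrow}_i)$ and control $y^{\downarrow}-x^{\downarrow}$ via nonexpansiveness of decreasing rearrangement in $\ell_p$. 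This works for the second-order term but kills the first-order term: the bound $O(C_1)\|g\|_2$ for $\sum_i g_i\psi'(\cdot)(y_i-x_i)$ relies entirely on the \emph{cancellation in expectation}, $\E[y^{(k)}_i-x^{(k)}_i]=(w^{(k)}_i/v^{(k)}_i)\beta_i$ with $\sum_i\beta_i^2\le C_1^2$. The rearrangement is a random permutation (it depends on the realization of $w^{(k+1)}$), so $\E[y^{\downarrow}_i-x^{\downarrow}_i]$ has no such representation; the sorted coordinates are biased toward large realized fluctuations. If you instead bound the first-order term pathwise by $\|g\|_2\|y^{\downarrow}-x^{\downarrow}\|_2\le\|g\|_2\|y-x\|_2$ and then take expectations, the hypotheses only give $\E\|y-x\|_2\le(\sum_i O(\gamma_i))^{1/2}\le O(n^{1/4}C_2^{1/2})$, losing a polynomial factor. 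The paper sidesteps all of this with a one-line rearrangement-inequality trick applied to the \emph{subtracted} term: since $x^{(k)}$ is sorted against the non-increasing $g$, $\sum_i g_i\psi(x^{(k)}_{\pi(i)})\le\sum_i g_i\psi(x^{(k)}_i)$ for \emph{any} permutation $\pi$, hence the whole quantity is at most $\sum_i g_i(\psi(y^{(k)}_{\pi(i)})-\psi(x^{(k)}_{\pi(i)}))$, a coordinatewise difference (both evaluated at index $\pi(i)$) with permuted but $\ell_2$-invariant weights. That reduction is the missing idea.

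A second, smaller gap: you assert $|1+x^{(k)}_i|\le 2$ when converting $(w^{(k+1)}_i-w^{(k)}_i)/v^{(k)}_i$ into $(w^{(k+1)}_i-w^{(k)}_i)/w^{(k)}_i$. Nothing in the data structure guarantees $|x^{(k)}_i|\le 1$; since small batches of large errors are deliberately left unfixed, $v_i$ can lag $w_i$ by more than a factor of $2$ over many rounds. The paper splits off the set $I^c=\{i:|x^{(k)}_i|>1\}$ and shows those terms contribute exactly zero, because $\psi$ is constant beyond $2\epsilon_{mp}\le 1/2$ and the per-step relative change of $w$ is at most $1/4$, so both $|x^{(k)}_i|$ and $|y^{(k)}_i|$ stay in the saturated region. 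Your proposal needs this case analysis (or an equivalent) to be complete.
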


\begin{proof}

Observe that since the errors $|x^{(k)}_{i}|$ are sorted in descending order, and $\psi(x)$ is symmetric and non-decreasing function for $x\geq 0$, thus $\psi( x^{(k)}_i )$ is also in decreasing order. In addition, note that $g$ is decreasing, we have
\begin{align}\label{eq:sum_g_i_x_pi_i_leq_sum_g_i_x_i}
\sum_{i = 1}^n g_i \psi ( x^{(k)}_{\pi(i)} ) \leq \sum_{i = 1}^n g_i \psi( x^{(k)}_{i} ) .
\end{align}
Hence the first term in \eqref{eq:potentialchange} can be upper bounded as follows:

\begin{align}
 \E \left[ \sum_{i = 1}^n g_i \cdot \left( \psi ( y^{(k)}_{\pi(i)} ) - \psi ( x^{(k)}_{i} ) \right) \right]  
\leq & ~ \E \left[ \sum_{i = 1}^n g_i \cdot \left( \psi ( y^{(k)}_{\pi(i)} ) - \psi (  x^{(k)}_{\pi(i)} ) \right) \right] &  \text{~by~ \eqref{eq:sum_g_i_x_pi_i_leq_sum_g_i_x_i}} \notag \\
= & ~ \sum_{i=1}^n g_i \cdot \E[ \psi ( y^{(k)}_{\pi(i)} ) - \psi (  x^{(k)}_{\pi(i)} ) ]  \notag \\
= & ~ O(C_1 + C_2 / \epsilon_{\mathrm{mp}}) \cdot \sqrt{\log n} \cdot (n^{-a/2} + n^{\omega-5/2} ). & \text{~by~Lemma~\ref{lem:bounding_E_psi_y_minus_psi_x}} \notag
\end{align}
Thus, we complete the proof of $w$ move Lemma.
\end{proof}

It remains to prove the following Lemma,
\begin{lemma}\label{lem:bounding_E_psi_y_minus_psi_x}
\begin{align*}
  \sum_{i=1}^n g_i \cdot \E[ \psi ( y^{(k)}_{\pi(i)} ) - \psi (  x^{(k)}_{\pi(i)} ) ] = O(C_1 + C_2 / \epsilon_{\mathrm{mp}}) \cdot \sqrt{\log n} \cdot (n^{-a/2} + n^{\omega-5/2} ).
\end{align*}
\end{lemma}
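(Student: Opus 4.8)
The plan is to compare the potential contributions coordinate-by-coordinate under the optimal coupling given by the permutation $\pi$, and to exploit the smoothness of $\psi$ together with the assumed control on the movement of $w$. The key observation is that, although $\pi$ sorts $y^{(k)}$ rather than $x^{(k)}$, the quantity $\sum_i g_i \psi(x^{(k)}_{\pi(i)})$ is sandwiched between $\sum_i g_i \psi(x^{(k)}_i)$ (which we already used in Lemma~\ref{lem:w_move}) and we may freely re-index, so it suffices to bound $\sum_i g_i \E[\psi(y^{(k)}_{\pi(i)}) - \psi(x^{(k)}_{\pi(i)})]$ where now the two arguments share the same index. For a fixed index $j$ the difference $x^{(k)}_j \to y^{(k)}_j$ comes purely from the change in $w$, namely $1 + y^{(k)}_j = (1+x^{(k)}_j)\cdot \frac{w^{(k+1)}_j}{w^{(k)}_j}$, so writing $\theta_j = (w^{(k+1)}_j - w^{(k)}_j)/w^{(k)}_j$ we have $y^{(k)}_j - x^{(k)}_j = (1 + x^{(k)}_j)\theta_j$, with $|\theta_j| \le 1/4$ and $|x^{(k)}_j|$ bounded by a constant along the way.

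The first step is a second-order Taylor expansion of $\psi$: since $\psi$ is $C^1$ with $\psi'$ Lipschitz with constant $1/\epsilon_{mp}$ and $|\psi'| \le 1$, we get
\begin{align*}
\psi(y^{(k)}_j) - \psi(x^{(k)}_j) \le \psi'(x^{(k)}_j)\,(y^{(k)}_j - x^{(k)}_j) + \frac{1}{2\epsilon_{mp}}\,(y^{(k)}_j - x^{(k)}_j)^2 .
\end{align*}
Taking expectations and using $y^{(k)}_j - x^{(k)}_j = (1+x^{(k)}_j)\theta_j$ together with $|1 + x^{(k)}_j| = O(1)$, the linear term is bounded by $O(1)\cdot |\E[\theta_j]| = O(1)\cdot|\beta_j|$ and the quadratic term by $O(1/\epsilon_{mp})\cdot\E[\theta_j^2]$. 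Now multiply by $g_j$, sum over $j$, and apply Cauchy–Schwarz to each piece: $\sum_j g_j |\beta_j| \le (\sum_j g_j^2)^{1/2}(\sum_j \beta_j^2)^{1/2} \le C_1 (\sum_j g_j^2)^{1/2}$, and similarly $\sum_j g_j \E[\theta_j^2] \le (\sum_j g_j^2)^{1/2}(\sum_j (\E[\theta_j^2])^2)^{1/2} \le C_2 (\sum_j g_j^2)^{1/2}$. This produces the bound $O(C_1 + C_2/\epsilon_{mp})\cdot(\sum_j g_j^2)^{1/2}$.

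The remaining step is to estimate $(\sum_{i=1}^n g_i^2)^{1/2}$ using the explicit formula \eqref{eq:g}: the first $n^a$ terms contribute $n^a \cdot n^{-2a} = n^{-a}$, and the tail contributes $\sum_{i \ge n^a} i^{2(\frac{\omega-2}{1-a}-1)} n^{-\frac{2a(\omega-2)}{1-a}}$, which, since the exponent $2(\frac{\omega-2}{1-a}-1)$ works out to make the sum dominated by its largest term $i = n$ (using $\omega \le 3 - a$ from Lemma~\ref{lem:omega_leq_3_minus_a}), is $O(n^{2\omega - 5})$ up to a $\log n$ factor from the boundary case where the exponent is near $-1$. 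Hence $(\sum_i g_i^2)^{1/2} = O(\sqrt{\log n})\cdot(n^{-a/2} + n^{\omega - 5/2})$, which gives the claimed bound. The main obstacle is the bookkeeping around the re-indexing by $\pi$ — one must be careful that the inequality \eqref{eq:sum_g_i_x_pi_i_leq_sum_g_i_x_i} is applied in the correct direction and that $\psi(y^{(k)}_{\pi(i)}) - \psi(x^{(k)}_{\pi(i)})$ really is the right quantity to Taylor-expand — and the slightly delicate evaluation of $\sum g_i^2$ in the regime where the tail exponent straddles $-1$, which is exactly where the $\sqrt{\log n}$ enters.
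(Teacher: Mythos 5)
Your Taylor-expansion-plus-Cauchy--Schwarz skeleton matches the paper's argument for the ``good'' coordinates, and your evaluation of $\|g\|_2$ is exactly Lemma~\ref{lem:sum_g_i}. But there is a genuine gap in the step where you assert that $|1+x^{(k)}_j| = O(1)$, i.e.\ that $w^{(k)}_j/v^{(k)}_j$ is bounded by a constant ``along the way.'' This is not guaranteed by the data structure: the internal vector $v$ is only refreshed on a coordinate when enough coordinates ($\geq n^a$) simultaneously exceed the $\epsilon_{mp}$ threshold, so a small set of coordinates can have their relative error $|x^{(k)}_j|$ compound by a factor of up to $5/4$ per step for many steps and become arbitrarily large. (The guarantee $w \approx_{\epsilon_{mp}} \tilde v$ concerns the \emph{output} $\tilde v$ of \textsc{Update}, which patches bad coordinates in Line~\ref{lin:vtilde}, not the internal $v$ entering the definition of $x^{(k)}$.) On such coordinates your bound ``linear term $\leq O(1)\cdot|\beta_j|$, quadratic term $\leq O(1/\epsilon_{mp})\cdot\E[\theta_j^2]$'' fails, because the prefactor $(1+x^{(k)}_j)$ multiplying $\theta_j$ is unbounded.

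The paper closes this hole by splitting the sum into $I=\{i: |x^{(k)}_i|\leq 1\}$ and $I^c$. On $I$ your argument goes through verbatim (with $|w^{(k)}_i/v^{(k)}_i| \leq 2$). On $I^c$ one shows the contribution is exactly zero: since $\psi$ is constant for $|x|\geq 2\epsilon_{mp}$ and $2\epsilon_{mp}\leq 1/2$, we have $\psi(x^{(k)}_i)=\epsilon_{mp}$, and if $\psi(y^{(k)}_i)\neq\epsilon_{mp}$ then $|y^{(k)}_i|<1/2$, which forces $|y^{(k)}_i - x^{(k)}_i|>1/2$ and hence $|w^{(k+1)}_i - w^{(k)}_i|/|w^{(k)}_i| > 1/4$, contradicting the movement assumption. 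You need this second case (or some equivalent use of the saturation of $\psi$) to make the argument complete; without it, the lemma as you have written it does not follow. The re-indexing by $\pi$ that you worry about is in fact harmless here, since the Cauchy--Schwarz step only uses $\sum_j g_{\pi^{-1}(j)}^2 = \|g\|_2^2$, and the inequality \eqref{eq:sum_g_i_x_pi_i_leq_sum_g_i_x_i} belongs to the reduction in Lemma~\ref{lem:w_move}, not to this lemma.
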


\begin{proof}
We separate the term into two:
\begin{align*}
\sum_{i=1}^n g_i \cdot \E[ \psi ( y^{(k)}_{\pi(i)} ) - \psi (  x^{(k)}_{\pi(i)} ) ] 
= \sum_{i=1}^n g_{\pi^{-1}(i)} \cdot \E[ \psi ( y^{(k)}_i ) - \psi (  \E [y^{(k)}_i] ) ] 
+\sum_{i=1}^n g_{\pi^{-1}(i)} \cdot (\psi ( \E [y^{(k)}_i] ) - \psi (  x^{(k)}_i )).
\end{align*}

For the first term, Mean value theorem shows that
\begin{align*}
\psi( y^{(k)}_{i} ) - \psi ( \E [ y^{(k)}_{i} ] )
= & ~ \psi'( \E [ y^{(k)}_{i} ] ) ( y^{(k)}_{i} - \E [ y^{(k)}_{i} ] ) + \frac{1}{2} \psi''(\zeta) (y^{(k)}_{i} - \E [ y^{(k)}_{i} ] )^2 \\
\leq & ~ \psi' ( \E [ y^{(k)}_{i} ] ) (w^{(k+1)}_{i} - \E [ w^{(k+1)}_{i} ] )  + \frac{L_2}{2} \left( w^{(k+1)}_{i} - \E [ w^{(k+1)}_{i} ] \right)^2,
\end{align*}
where $L_2 = \max_x \psi''(x)$.
Let $\gamma_i = \Var[\ln w^{(k+1)}_i]$. Summing over $i$ and taking conditional expectation given $w^{(k)}$ on both sides, we get
\begin{align*}
\sum_{i=1}^n g_{\pi^{-1}(i)}\E[ \psi(y_{i}^{(k)})-\psi(\E [ y_{i}^{(k)} ] )]
\leq & ~ \sum_{i=1}^n g_{\pi^{-1}(i)}\psi'(\E [y_{i}^{(k)}]) \E [ w_{i}^{(k+1)}-\E [ w_{i}^{(k+1)} ] ] +\frac{L_{2}}{2}\sum_{i=1}^n g_{\pi^{-1}(i)}\gamma_{i}\\
= & ~ \frac{L_{2}}{2}\cdot \sum_{i=1}^n g_{\pi^{-1}(i)}\gamma_{i} \\
\leq & ~ \frac{L_{2}}{2}\cdot \|g\|_{2}\cdot \left( \sum_{i=1}^n \gamma_{i}^{2} \right)^{1/2} \\
\leq & ~ \frac{L_{2}}{2}\cdot C_{2}\cdot \|g\|_{2}
\end{align*}

For the second term, we define $\beta_i = \E[ \ln w^{(k+1)}_i] - \ln w^{(k)}_i$. Lipschitz constant of $\psi$ shows that
\begin{align*}
\sum_{i=1}^n g_{\pi^{-1}(i)}(\psi(\E [ y_{i}^{(k)} ] )-\psi(x_{i}^{(k)})) 
\leq & ~ L_{1}\cdot\sum_{i=1}^n g_{\pi^{-1}(i)}|\E [y_{i}^{(k)}]-x_{i}^{(k)}|\\
= & ~ L_{1}\cdot\sum_{i=1}^n g_{\pi^{-1}(i)}|\beta_{i}|\\
\leq & ~ L_{1}\cdot C_{1}\cdot\|g\|_{2}
\end{align*}
where we used that $\sum_{i=1}^n \beta_i^2 \leq C_1^2$.

Now, combining both terms and using that $L_1 = O(1)$, $L_2 = O(1/\epsilon_{\mathrm{mp}})$ (from part 4 of Lemma~\ref{lem:def_psi}) and $\| g\|_2 \leq \sqrt{\log n} \cdot O(n^{-a/2} + n^{\omega-5/2} )$ (from Lemma~\ref{lem:sum_g_i}), we have that
$$  \sum_{i=1}^{n} g_i \cdot \E[ \psi ( y^{(k)}_{\pi(i)} ) - \psi (  x^{(k)}_{\pi(i)} ) ]  \leq O(C_1 + C_2 / \epsilon_{\mathrm{mp}}) \cdot \sqrt{\log n} \cdot (n^{-a/2} + n^{\omega-5/2} ).$$
\end{proof}

\begin{lemma}\label{lem:sum_g_i}
\begin{align*}
\left( \sum_{i=1}^n g_i^2 \right)^{1/2} \leq \sqrt{\log n} \cdot O( n^{-a/2} + n^{\omega-5/2} ) .
\end{align*}
\end{lemma}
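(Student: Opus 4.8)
The plan is to split the sum at the threshold $i = n^a$ at which the definition~\eqref{eq:g} of $g$ changes. For the flat part $i < n^a$ there are at most $n^a$ indices, each with $g_i = n^{-a}$, so $\sum_{i<n^a} g_i^2 \le n^a\cdot n^{-2a}=n^{-a}$, which after taking a square root already produces the $n^{-a/2}$ term. All the content is therefore in the tail $n^a\le i\le n$, where $g_i = i^{c-1}n^{-ac}$ with $c\defeq\frac{\omega-2}{1-a}$. Hence $\sum_{i\ge n^a} g_i^2 = n^{-2ac}\sum_{n^a\le i\le n} i^{2c-2}$, and the whole lemma reduces to estimating a single power sum $\sum_{i=m}^N i^{\gamma-1}$ with $m=n^a$, $N=n$, and $\gamma = 2c-1$.

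The key auxiliary estimate I would establish is that, for \emph{every} real $\gamma$,
\[
\sum_{i=m}^N i^{\gamma-1} = O\!\left(\log n\cdot(m^\gamma+N^\gamma)\right)
\]
with an \emph{absolute} implied constant, where $1\le m\le N\le n$. This follows by comparing the sum with $\int x^{\gamma-1}\,dx = \frac{N^\gamma-m^\gamma}{\gamma}$ and adding the boundary term: when $|\gamma|\ge 1/\log n$ the integral is at most $\frac{1}{|\gamma|}\max(m^\gamma,N^\gamma) \le \log n\cdot\max(m^\gamma,N^\gamma)$; when $|\gamma|<1/\log n$ one has $i^{\gamma}\le n^{1/\log n}=e$ for all $i\le n$, hence $i^{\gamma-1}\le e/i$ and $\sum_{i=m}^N i^{\gamma-1}\le e\sum_{i=1}^n 1/i = O(\log n)$, while simultaneously $m^\gamma,N^\gamma=\Theta(1)$, so the bound again holds. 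The $\log n$ factor in the statement is exactly the price of this argument in the borderline regime $\gamma\approx 0$, i.e.\ $c\approx 1/2$.

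Plugging $\gamma = 2c-1$ into this estimate and multiplying by $n^{-2ac}$, it remains to verify the two algebraic identities
\[
n^{-2ac}\cdot N^{\gamma} = n^{2c(1-a)-1} = n^{2\omega-5},\qquad n^{-2ac}\cdot m^{\gamma} = n^{a(2c-1)-2ac} = n^{-a},
\]
both using $c(1-a)=\omega-2$. This gives $\sum_{i\ge n^a} g_i^2 = O(\log n)(n^{2\omega-5}+n^{-a})$; combining with the flat part yields $\sum_{i=1}^n g_i^2 = O(\log n)(n^{-a}+n^{2\omega-5})$, and taking square roots with $\sqrt{x+y}\le\sqrt{x}+\sqrt{y}$ gives the claim. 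The only genuinely delicate point is the uniform-in-$\gamma$ power-sum bound above; the rest is routine exponent bookkeeping, which I have sketched but would carry out in full.
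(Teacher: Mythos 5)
Your proposal is correct and follows essentially the same route as the paper: split at $i=n^a$, bound the flat part by $n^{-a}$, and extract a harmonic-sum factor of $O(\log n)$ times the endpoint values $n^{-a}$ and $n^{2\omega-5}$ of the monotone power in the tail. The paper's tail argument is slightly more direct --- it writes $g_i^2=\frac{1}{i}\cdot i^{2c-1}n^{-2ac}$, bounds the monotone factor by its maximum over $[n^a,n]$ (attained at an endpoint), and sums $\sum 1/i=O(\log n)$, which avoids your integral comparison and the case analysis on $\gamma$ --- but the two computations are interchangeable.
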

\begin{proof}
Since function $g$ behaves differently when $i\leq n^a$ and $i > n^a$. We split the sum into two parts.

For the first part, we have
\begin{align*}
\sum_{i=1}^{n^a} g_i^2  = \sum_{i=1}^{n^a} n^{-2a} = n^{-a}.
\end{align*}

For the second part, we have
\begin{align*}
\sum_{i=n^a}^{n} g_i^2 = \sum_{i=n^a}^n i^{\frac{2(\omega-2)}{1-a} -2} n^{-\frac{2a(\omega-2)}{1-a}} = \sum_{i=n^a}^n \frac{1}{i} \cdot i^{\frac{2(\omega-2)}{1-a} -1} n^{-\frac{2a(\omega-2)}{1-a}} .
\end{align*}
Note that 
\begin{align*}
\max_{i\in [n^a,n]} i^{\frac{2(\omega-2)}{1-a} -1} n^{-\frac{2a(\omega-2)}{1-a}} = \max ( n^{a \frac{2(\omega-2)}{1-a} -a} n^{-\frac{2a(\omega-2)}{1-a}} , n^{\frac{2(\omega-2)}{1-a} -1} n^{-\frac{2a(\omega-2)}{1-a}} ) = \max (n^{-a}, n^{2\omega-5}).
\end{align*}
Thus, the second part is
\begin{align*}
\sum_{i=n^a}^n g_i^2 \leq \sum_{i=n^a}^n \frac{1}{i} \cdot \max (n^{-a}, n^{2\omega-5}) = O(\log n) \cdot  \max (n^{-a}, n^{2\omega-5}).
\end{align*}
Combining the first part and the second part completes the proof.
\end{proof}

\subsection{Bounding $v$ move}\label{sec:v_move}

\begin{lemma}[$v$ move]\label{lem:v_move}
We have,
 \begin{align*}
 \sum_{i = 1}^n g_i \cdot \left( \psi(  y^{(k)}_{\pi(i)} ) - \psi ( x^{(k+1)}_{\tau(i)} ) \right) \geq \Omega( \epsilon_{\mathrm{mp}} r_k g_{r_k} / \log n).
\end{align*}
\end{lemma}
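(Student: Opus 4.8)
\emph{Reducing to a sum of nonnegative terms.}
Since the \textsc{Update} step sets $v^{\new}_{\pi(i)}=w^{\new}_{\pi(i)}$ (hence $x^{(k+1)}_{\pi(i)}=0$) for the first $r:=r_k$ indices of the permutation $\pi$ sorting $|y^{(k)}|$, and leaves $v$ unchanged on the rest (hence $x^{(k+1)}_{\pi(i)}=y^{(k)}_{\pi(i)}$ there), the multiset of $\psi$-values of $x^{(k+1)}$ is that of $y^{(k)}$ with its $r$ largest entries replaced by $0$. Writing $a_i:=\psi(y^{(k)}_{\pi(i)})$, which is nonincreasing because $\psi$ is symmetric and nondecreasing, sorting gives
\begin{align*}
\sum_{i=1}^n g_i\Big(\psi(y^{(k)}_{\pi(i)})-\psi(x^{(k+1)}_{\tau(i)})\Big)=\sum_{i=1}^n g_i a_i-\sum_{i=1}^{n-r} g_i a_{r+i}.
\end{align*}
Abel summation with $G_j:=\sum_{i\le j}g_i$ and $d_j:=a_j-a_{j+1}\ge 0$ (and $a_{n+1}:=0$) rewrites the right-hand side as $\sum_{i=1}^r G_i d_i+\sum_{j=r+1}^n (G_j-G_{j-r})d_j$, a sum of manifestly nonnegative terms; the whole task is to lower bound this by $\Omega(\epsilon_{mp}\,r\,g_r/\log n)$. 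If $r<n^a$ there is no update, $x^{(k+1)}=y^{(k)}$, the left side is $0$, and $r_k=0$ makes the claim vacuous, so from now on $r\ge n^a$.

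\emph{Two structural facts.}
(i) The doubling loop only increases $r$ from the initial count $r_0\ge n^a$ of coordinates with $|y^{(k)}_i|\ge\epsilon_{mp}$, and each of its at most $\log_{1.5}n$ steps shrinks $|y^{(k)}_{\pi(\cdot)}|$ by a factor at least $1-1/\log n$; hence $|y^{(k)}_{\pi(r)}|\ge(1-1/\log n)^{O(\log n)}\epsilon_{mp}=\Omega(\epsilon_{mp})$, which by the shape of $\psi$ gives $a_i\ge c_2\epsilon_{mp}$ for all $i\le r$ and an absolute constant $c_2$. (ii) By \eqref{eq:g} (and Lemma~\ref{lem:omega_leq_3_minus_a}, so $g$ is nonincreasing) one has $G_j\ge j g_j$, and in the power-law regime $g_i\propto i^{\beta}$ with $\beta\in(-1,0)$, so $g_{\lceil r/2\rceil}\ge g_r$, $g_{\lceil 1.5r\rceil}=\Omega(g_r)$, and $g_n=\Omega(g_r)$ whenever $r\ge 2n/3$. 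Consequently $G_j-G_{j-r}=\sum_{i=j-r+1}^j g_i\ge r g_j$ for $j>r$.

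\emph{Case analysis.}
If $a_{r+1}\le\tfrac12 c_2\epsilon_{mp}$ (in particular when $r=n$), then, a $g$-weighted average of $a_1,\dots,a_r$ being $\ge a_r\ge c_2\epsilon_{mp}$,
\[
\textstyle\sum_{i=1}^r G_i d_i=G_r\big(\tfrac1{G_r}\sum_{i\le r}g_i a_i-a_{r+1}\big)\ge G_r\cdot\tfrac12 c_2\epsilon_{mp}\ge \tfrac12 c_2\epsilon_{mp}\,r g_r,
\]
and we are done. Otherwise $a_{r+1}>\tfrac12 c_2\epsilon_{mp}$ and $r<n$. If moreover $1.5r\ge n$ (so $r\ge 2n/3$), then $\sum_{j>r}(G_j-G_{j-r})d_j\ge r g_n\sum_{j>r}d_j=r g_n a_{r+1}=\Omega(\epsilon_{mp}\,r g_r)$. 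The remaining case is $1.5r<n$, where the loop stopped because $|y^{(k)}_{\pi(\lceil 1.5r\rceil)}|<(1-1/\log n)|y^{(k)}_{\pi(r)}|$; assuming the claim $a_r-a_{\lceil 1.5r\rceil}=\Omega(\epsilon_{mp}/\log n)$, bound $\sum_{i\le r}G_i d_i\ge \tfrac12 r g_r\,(a_{\lceil r/2\rceil}-a_{r+1})$ and $\sum_{j>r}(G_j-G_{j-r})d_j\ge\Omega(r g_r)(a_{r+1}-a_{\lceil 1.5r\rceil+1})$, and add to get $\Omega(r g_r)(a_{\lceil r/2\rceil}-a_{\lceil 1.5r\rceil+1})\ge\Omega(r g_r)(a_r-a_{\lceil 1.5r\rceil})=\Omega(\epsilon_{mp}\,r g_r/\log n)$.

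\emph{The main obstacle.}
The only delicate step is $a_r-a_{\lceil 1.5r\rceil}=\Omega(\epsilon_{mp}/\log n)$: since $\psi$ is flat above $2\epsilon_{mp}$ and $\psi'$ degenerates there, a bare mean-value estimate on $\psi(|y^{(k)}_{\pi(r)}|)-\psi((1-1/\log n)|y^{(k)}_{\pi(r)}|)$ fails when $|y^{(k)}_{\pi(r)}|\approx 2\epsilon_{mp}$. The fix: if $|y^{(k)}_{\pi(r)}|\le\epsilon_{mp}$ both points lie in the quadratic piece and the gap is $\ge\big(1-(1-1/\log n)^2\big)|y^{(k)}_{\pi(r)}|^2/(2\epsilon_{mp})=\Omega(\epsilon_{mp}/\log n)$ using $|y^{(k)}_{\pi(r)}|=\Omega(\epsilon_{mp})$; if $|y^{(k)}_{\pi(r)}|>\epsilon_{mp}$ then $r_0=r$, which forces $|y^{(k)}_{\pi(\lceil 1.5r\rceil)}|<\epsilon_{mp}$, so $a_{\lceil 1.5r\rceil}<\psi(\epsilon_{mp})=\epsilon_{mp}/2$, while $a_r=\psi(|y^{(k)}_{\pi(r)}|)\ge\psi(1.1\epsilon_{mp})>\epsilon_{mp}/2$ by a constant when $|y^{(k)}_{\pi(r)}|\ge 1.1\epsilon_{mp}$, and otherwise $\psi'$ is bounded below by a constant on $[(1-1/\log n)|y^{(k)}_{\pi(r)}|,|y^{(k)}_{\pi(r)}|]\subset[(1-1/\log n)\epsilon_{mp},1.1\epsilon_{mp}]$, again giving an $\Omega(\epsilon_{mp}/\log n)$ gap. (For small $n$ one uses the standing assumption $n\ge 10$ to keep $\log n\ge 2$, hence $1-1/\log n\ge 1/2$, throughout.)
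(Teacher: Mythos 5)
Your proof is correct and rests on the same ingredients as the paper's: the two cases driven by how the while loop exits, the bound $|y^{(k)}_{\pi(r_k)}| \geq (1-1/\log n)^{O(\log n)}\epsilon_{mp} = \Omega(\epsilon_{mp})$, and the drop $\psi(y^{(k)}_{\pi(r_k)}) - \psi(y^{(k)}_{\pi(1.5 r_k)}) = \Omega(\epsilon_{mp}/\log n)$ when the loop stops due to the gap condition. The Abel-summation reorganization and the explicit case analysis near the flat region of $\psi$ are just a cleaner bookkeeping of what the paper does by restricting the sum to the index windows $[n/3,2n/3]$ and $[r_k/2,r_k]$, and your treatment of the degenerate regime $|y^{(k)}_{\pi(r_k)}| \approx 2\epsilon_{mp}$ is in fact more careful than the paper's terse appeal to Part 3 of Lemma~\ref{lem:def_psi}.
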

\begin{proof}
We first understand some simple facts which are useful in the later proof. 
Note that from the definition of $x^{(k+1)}_i$, we know that $x^{(k+1)}$ has $r_k$ coordinates are $0$ and hence $\| y^{(k)} - x^{(k+1)} \|_0 = r_k$. The difference between those vectors is, for the largest $r_k$ coordinates in $y^{(k)}$, we erase them in $x^{(k+1)}$. Then for each $i \in [n-r_k]$, $x^{(k+1)}_{\tau(i)} = y^{(k)}_{\pi(i+r_k)}$. For convenience, we define $y^{(k)}_{\pi(n+i)} = 0$, $\forall i \in [r_k]$.

We split the proof into two cases.
\paragraph{Case 1.} We exit the while loop when $1.5 r_k \geq n$.

Let $u^*$ denote the largest $u$ s.t.  $| y^{(k)}_{\pi(u)} | \geq \epsilon_{\mathrm{mp}}/2$. 
If $u^* = r_k$, we have that $ |y^{(k)}_{\pi(r_k)} | \geq \epsilon_{\mathrm{mp}}/2 \geq \epsilon_{\mathrm{mp}}/100$.
Otherwise, the condition of the loop shows that
\begin{align*}
 | y^{(k)}_{\pi(r_k)} | \geq ( 1 - 1 / \log n)^{\log_{1.5} r_k - \log_{1.5} u^*} | y^{(k)}_{\pi(u^*)} | \geq ( 1 - 1 / \log n )^{\log_{1.5} n} \epsilon_{\mathrm{mp}}/2 \geq \epsilon_{\mathrm{mp}}/100 .
\end{align*}
where we used that $n \geq 4$. 

According to definition of $x^{(k+1)}_{\tau(i)}$, we have
 \begin{align*}
\sum_{i = 1}^n g_i ( \psi ( y^{(k)}_{\pi(i)} ) - \psi (  x^{(k+1)}_{\tau(i)} ) ) 
= & ~ \sum_{i = 1}^n g_i (\psi ( y^{(k)}_{\pi(i)} ) - \psi( y^{(k)}_{\pi(i+r_k)} ))
%= & ~ \sum_{i = 1}^{n-r_k} g_i \cdot (\psi ( y^{(k)}_{\pi(i)} ) - \psi( y^{(k)}_{\pi(i+r_k)} )) + \sum_{i = n-r_k+1}^{n} g_i \cdot (\psi ( y^{(k)}_{\pi(i)} ) - \psi( y^{(k)}_{\pi(i+r_k)} ) )\\
\geq \sum_{i = n/3+1}^{n} g_i (\psi ( y^{(k)}_{\pi(i)} ) - \psi( y^{(k)}_{\pi(i+r_k)} )) \\
\geq & ~ \sum_{i = n/3+1}^{n} g_i (\psi ( y^{(k)}_{\pi(i)} ) )
\geq \sum_{i = n/3+1}^{2n/3} g_i \psi ( \epsilon_{\mathrm{mp}} / 100 ) 
\geq \Omega( r_k g_{r_k} \epsilon_{\mathrm{mp}} ), 
\end{align*}
where the first step follows from $x^{(k+1)}_{\tau(i)} = y^{(k)}_{\pi(i+r_k)}$,
the second step follows from the facts that $\psi(|x|)$ is non-decreasing (part 2 of Lemma~\ref{lem:def_psi}) and $|y^{(k)}_{\pi(i)}|$ is non-increasing,
the third step follows from $1.5 r_k > n$ and hence $\psi( y^{(k)}_{\pi(i+r_k)} ) = 0$ for $i \geq n/3+1$,
the fourth step follows from the facts $\psi$ is non-decreasing and $| y^{(k)}_{\pi(i)} | \geq | y^{(k)}_{\pi(r_k)} | \geq \epsilon_{\mathrm{mp}}/100$ for all $i<2n/3$, and the last step follows by the fact $g$ is decreasing and part 3 of Lemma~\ref{lem:def_psi}.

\paragraph{Case 2.}
We exit the while loop when $1.5 r_k < n$ and  $ | y^{(k)}_{\pi(1.5 r_k)} | < (1- 1 / \log n) | y^{(k)}_{\pi(r_k)} | $.

By the same argument as Case 1, we have that $| y^{(k)}_{\pi(r_k)} | \geq \epsilon_{\mathrm{mp}} / 100$.
Part 3 of Lemma~\ref{lem:def_psi} together with the fact
\begin{align*}
| y^{(k)}_{\pi (1.5 r) } | < \min( \epsilon_{\mathrm{mp}}/2, | y^{(k)}_{\pi(r)} | \cdot (1 - 1/\log n)),
\end{align*}
shows that 
\begin{align}\label{eq:r1.5_at_most_r}
\psi( | y^{(k)}_{\pi (1.5 r) } | )  - \psi( | y^{(k)}_{\pi (r) } | ) = \Omega(\epsilon_{\mathrm{mp}} / \log n).
\end{align}

Putting it all together, we have
 \begin{align*}
& ~ \sum_{i = 1}^n g_i \cdot ( \psi( y^{(k)}_{\pi(i)} ) - \psi ( x^{(k+1)}_{\tau(i)} ) ) \\
= & ~ \sum_{i = 1}^n g_i \cdot ( \psi( y^{(k)}_{\pi(i)} ) - \psi ( y^{(k)}_{\pi(i+r_k)} ) ) & \text{~by~} x^{(k+1)}_{\tau(i)} =y^{(k)}_{\pi(i+r_k)} \\
\geq & ~ \sum_{i = r_k/2 }^{ r_k }  g_i \cdot ( \psi ( y^{(k)}_{\pi(i)} ) - \psi ( y^{(k)}_{\pi(i+r_k)} ) ) & \text{~by~}  \psi(y^{(k)}_{\pi(i)}) - \psi(y^{(k)}_{\pi(i+r_k)}) \geq 0 \\
\geq & ~ \sum_{i = r_k/2 }^{ r_k }  g_i \cdot ( \psi ( y^{(k)}_{\pi(r_k)} ) - \psi ( y^{(k)}_{\pi(1.5r_k)} ) )  \\
\geq & ~ \sum_{i= r_k/2 }^{ r_k } g_i \cdot \Omega(\frac{\epsilon_{\mathrm{mp}}}{\log n}) & \text{~by~ \eqref{eq:r1.5_at_most_r}}  \\
\geq & ~ \sum_{i= r_k/2}^{ r_k } g_{r_k} \cdot \Omega(\frac{\epsilon_{\mathrm{mp}}}{\log n}) & \text{~by~}g_i\text{~is~decreasing} \\
= & ~ \Omega\left( \epsilon_{\mathrm{mp}} r_k  g_{r_k} / \log n \right) ,
\end{align*}
where the third step follows by the facts $|y^{(k)}_{\pi(i)}|$ is decreasing and $\psi$ is non-decreasing (from part 2 of Lemma~\ref{lem:def_psi}).
\end{proof}

\subsection{Potential function $\psi$}\label{sec:potential_function_psi}
%Here we collect the facts we used about $\psi$. 

\begin{figure}[!t]
  \centering
    \includegraphics[width=0.99\textwidth]{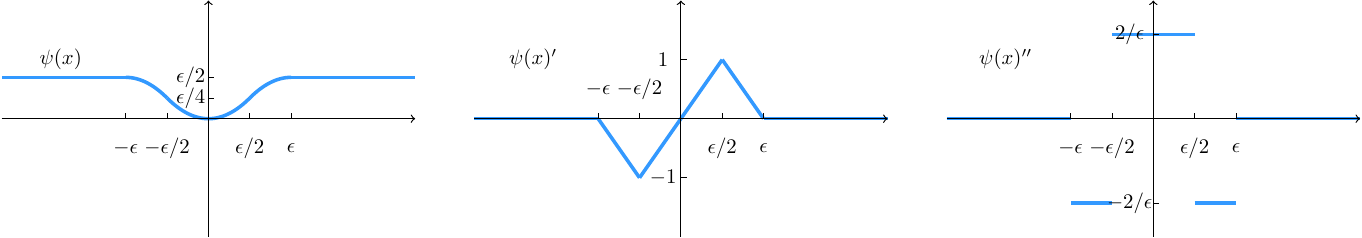}
    \caption{$\psi(x)$, $\psi(x)'$ and $\psi(x)''$. For $\epsilon_{\mathrm{mp}} \in (0,1)$.}
\end{figure}

\begin{lemma}[Properties of function $\psi$]\label{lem:def_psi}
Let function $\psi$ be defined in  \eqref{eq:def_psi}. Then function $\psi$ satisfies the following properties: \\
1. Symmetric $(\psi(-x)=\psi(x))$ and $\psi(0)=0$; \\
2. $\psi(|x|)$ is non-decreasing; \\
3. $|\psi'(x)| = \Omega(1), \forall |x| \in [0.01 \epsilon_{\mathrm{mp}}, \epsilon_{\mathrm{mp}} ]$; \\
4. $L_1 \defeq \max_x \psi'(x) = 1$ and $L_2 \defeq \max_x \psi''(x) = 1 / \epsilon_{\mathrm{mp}}$. \\
%5. $\psi(x)$ is a constant for $|x| \geq 2 \epsilon_{\mathrm{mp}}$.
%6. $\psi(x) = O(\epsilon_{\mathrm{mp}})$, $\forall x$.
\end{lemma}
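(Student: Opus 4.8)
The plan is to verify the five properties directly from the piecewise formula \eqref{eq:def_psi}, exploiting that $\psi$ depends on $x$ only through $|x|$ and on each region is a quadratic. Properties 1 and 5 are immediate: every branch of \eqref{eq:def_psi} is a function of $|x|$, so $\psi(-x)=\psi(x)$; $\psi(0)=0^2/(2\epsilon_{mp})=0$; and for $|x|>2\epsilon_{mp}$ the formula is the constant $\epsilon_{mp}$. So the real work is to compute $\psi'$ and $\psi''$ on the three regions $[0,\epsilon_{mp}]$, $(\epsilon_{mp},2\epsilon_{mp}]$, $(2\epsilon_{mp},\infty)$ (using symmetry for $x<0$), and to check the behavior at the two breakpoints $\epsilon_{mp}$ and $2\epsilon_{mp}$.

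For $x\ge 0$: on $[0,\epsilon_{mp}]$, $\psi(x)=x^2/(2\epsilon_{mp})$, so $\psi'(x)=x/\epsilon_{mp}\in[0,1]$ and $\psi''(x)=1/\epsilon_{mp}$; on $(\epsilon_{mp},2\epsilon_{mp}]$, $\psi(x)=\epsilon_{mp}-(2\epsilon_{mp}-x)^2/(2\epsilon_{mp})$, so $\psi'(x)=(2\epsilon_{mp}-x)/\epsilon_{mp}\in[0,1]$ and $\psi''(x)=-1/\epsilon_{mp}$; on $(2\epsilon_{mp},\infty)$, $\psi'=\psi''=0$. At $x=\epsilon_{mp}$ the two branches agree in value ($\epsilon_{mp}/2$) and both one-sided first derivatives equal $1$, so $\psi$ is $C^1$ there with $\psi'(\epsilon_{mp})=1$; at $x=2\epsilon_{mp}$ value and first derivative match ($\epsilon_{mp}$ and $0$). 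Hence $\psi\in C^1(\R)$, with $\psi'\ge 0$ for $x\ge 0$ and (by symmetry $\psi'(-x)=-\psi'(x)$) $\psi'\le 0$ for $x\le 0$, giving Property 2. For Property 4, the computation above shows $\max_x\psi'(x)=1$, attained at $x=\epsilon_{mp}$, and $\psi''(x)\in\{1/\epsilon_{mp},-1/\epsilon_{mp},0\}$ wherever it exists, so $\max_x\psi''(x)=1/\epsilon_{mp}$. I would add a remark that $\psi$ is $C^1$ but not $C^2$ (the second derivative jumps at $\pm\epsilon_{mp},\pm2\epsilon_{mp}$); the fact actually used in Lemma~\ref{lem:bounding_E_psi_y_minus_psi_x} is that $\psi'$ is $(1/\epsilon_{mp})$-Lipschitz, so the second-order bound $\psi(y)\le\psi(x)+\psi'(x)(y-x)+\tfrac{1}{2\epsilon_{mp}}(y-x)^2$ is still valid.

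For Property 3, on $[\epsilon_{mp},1.5\epsilon_{mp}]$ we have $\psi'(x)=(2\epsilon_{mp}-x)/\epsilon_{mp}\ge(2\epsilon_{mp}-1.5\epsilon_{mp})/\epsilon_{mp}=1/2$, and by symmetry the same bound holds on $[-1.5\epsilon_{mp},-\epsilon_{mp}]$. The one delicate point to flag — and what I expect to be the only real obstacle — is that on the inner region $|x|<\epsilon_{mp}$ one has $\psi'(x)=|x|/\epsilon_{mp}$, which vanishes at $x=0$; so the assertion "$|\psi'(x)|=\Omega(1)$ for all $|x|\le 1.5\epsilon_{mp}$" cannot hold with a uniform constant and should be read (and restated) as holding for $|x|$ bounded below by a constant multiple of $\epsilon_{mp}$, e.g.\ for $|x|\in[\epsilon_{mp},1.5\epsilon_{mp}]$. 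Combined with $\psi(0)=0$ and monotonicity this also yields $\psi(c\,\epsilon_{mp})\ge c^2\epsilon_{mp}/2=\Omega(\epsilon_{mp})$ for any fixed $c\in(0,1]$, which is in fact the form of Property 3 invoked downstream (in the $v$-move analysis and in \eqref{eq:r1.5_at_most_r}); I would state the lemma in that corrected form and then check that every citation of Property 3 uses only this version. Apart from this bookkeeping, the lemma is a finite case check on a piecewise quadratic and presents no further difficulty.
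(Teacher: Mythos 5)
Your proof is correct and follows the same route as the paper, which simply writes out $\psi'$ and $\psi''$ piecewise and reads off all five properties. Your observation about Property 3 is a genuine catch: since $\psi'(0)=0$, the claim $|\psi'(x)|=\Omega(1)$ for all $|x|\le 1.5\epsilon_{mp}$ is false near the origin, and the paper indeed only ever invokes this property (in the $v$-move analysis and in \eqref{eq:r1.5_at_most_r}) for arguments bounded below by $\Omega(\epsilon_{mp})$, exactly as in your corrected restatement.
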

\begin{proof}
We can see that
\begin{align*}
\psi(x)' = \begin{cases}
\frac{2|x|}{\epsilon_{\mathrm{mp}}}, & |x| \in [0,\epsilon_{\mathrm{mp}}/2] \\
 \frac{ 2(\epsilon_{\mathrm{mp}} - |x|) }{\epsilon_{\mathrm{mp}}}, & |x| \in (\epsilon_{\mathrm{mp}}/2,\epsilon_{\mathrm{mp}}] \\
0, & |x| \in (\epsilon_{\mathrm{mp}}, +\infty)
\end{cases}
\quad\text{and}\quad
\psi(x)'' = \begin{cases}
\frac{2}{\epsilon_{\mathrm{mp}}}, & x \in [0,\epsilon_{\mathrm{mp}}/2] \cup [-\epsilon_{\mathrm{mp}}, -\epsilon_{\mathrm{mp}}/2]\\
-\frac{2}{\epsilon_{\mathrm{mp}}}, & x \in  (\epsilon_{\mathrm{mp}}/2,\epsilon_{\mathrm{mp}}] \cup [-\epsilon_{\mathrm{mp}}/2,0] \\
0. & |x| \in (\epsilon_{\mathrm{mp}}, +\infty)
\end{cases}
\end{align*}
From the $\psi(x)'$ and $\psi(x)''$, it is not hard to see that $\psi$ satisfies the properties needed.
\end{proof}

\subsubsection*{Acknowledgement}

This work was supported in part by NSF Awards CCF-1740551, CCF-1749609, and DMS-1839116. 
We thank S\'{e}bastien Bubeck and Aaron Sidford for helpful discussions. We thank Rasmus Kyng for bringing up the question and providing some fixes in the proof of projection maintenance. We thank Josh Alman for some useful discussions about matrix multiplication. We thank Swati Padmanabhan for writing suggestions. We thank Eric Price for the suggestion of the title of this paper. We thank Shunhua Jiang and Hengjie Zhang for drawing several beautiful pictures. Finally, we thank anonymous STOC and JACM reviewers for their detailed feedback.

\appendix

\addcontentsline{toc}{section}{References}
\bibliographystyle{alpha}
\bibliography{ref}

\section{Appendix}
\begin{lemma}\label{lem:var_xy}
Let $x$ and $y$ denote (possibly dependent) random variables such that $|x| \leq c_x$ and $|y| \leq c_y$ almost surely. Then, we have
$$\Var[xy] \leq 2 c_x^2 \cdot \Var[y] + 2 c_y^2 \cdot \Var[x].$$
\end{lemma}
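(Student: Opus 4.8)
The plan is to exploit the fact that variance is the minimizer of the second moment about a point, i.e. $\Var[Z] \le \E[(Z-c)^2]$ for every constant $c$, and apply this with $c = \E[x]\,\E[y]$. So the first step is to write the centering decomposition
\begin{align*}
xy - \E[x]\,\E[y] = x\,(y - \E[y]) + \E[y]\,(x - \E[x]),
\end{align*}
which one checks by expanding the right-hand side.

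Next I would square both sides, take expectations, and use $(a+b)^2 \le 2a^2 + 2b^2$ to get
\begin{align*}
\Var[xy] \le \E\!\left[(xy - \E[x]\E[y])^2\right] \le 2\,\E\!\left[x^2 (y - \E[y])^2\right] + 2\,\E[y]^2\,\E\!\left[(x - \E[x])^2\right].
\end{align*}
Then I would bound the factor $x^2$ pointwise by $c_x^2$ in the first term (using $|x| \le c_x$ almost surely), and bound $\E[y]^2 \le c_y^2$ in the second term (using $|y| \le c_y$ almost surely, hence $|\E[y]| \le c_y$). This yields $2c_x^2\,\E[(y-\E[y])^2] + 2c_y^2\,\E[(x-\E[x])^2] = 2c_x^2\,\Var[y] + 2c_y^2\,\Var[x]$, as desired.

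There is essentially no obstacle here: the only mild subtlety is choosing the right centering constant ($\E[x]\E[y]$ rather than $\E[xy]$), which is exactly what makes the cross term disappear and lets the dependence between $x$ and $y$ be irrelevant — the inequality $\Var[xy]\le\E[(xy-c)^2]$ absorbs everything, so no covariance estimates are needed.
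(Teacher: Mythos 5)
Your proposal is correct and is essentially identical to the paper's proof: both use $\Var[xy]\le\E[(xy-\E[x]\E[y])^2]$, the same decomposition $xy-\E[x]\E[y]=x(y-\E[y])+\E[y](x-\E[x])$, the bound $(a+b)^2\le 2a^2+2b^2$, and the same pointwise bounds $|x|\le c_x$, $|\E[y]|\le c_y$. No differences worth noting.
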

\begin{proof}

Recall that $\Var[xy]\leq\E[(xy-t)^{2}]$ for any scalar $t$. Hence,
\begin{align*}
\Var[xy] & \leq\E[(xy-\E[x]\E[y])^{2}] =\E[(xy-x\E[y]+x\E[y]-\E[x]\E[y])^{2}]\\
 & \leq2\E[(xy-x\E[y])^{2}]+2\E[(x\E[y]-\E[x]\E[y])^{2}]\\
 & \leq2c_{x}^{2}\cdot\Var[y]+2c_{y}^{2}\cdot\Var[x].
\end{align*}
\end{proof}

\begin{lemma}[\cite{v89b}]\label{lem:classical_step}
Given a matrix $A \in \R^{d \times n}$, vectors $b \in \R^d, c \in \R^n$.
Suppose $x,s,y \in \R^n$ satisfy that $x s \approx_{0.1} t$, $A x = b$ and $A^\top y + s = c$ for some $t>0$. For any $\epsilon \in (0,1/2]$, in $\wt{O}(n^{2.5} \log (n/\epsilon))$ time, we can find vectors $x^{\new}, s^{\new} \in \R^n$ and $y^{\new} \in \R^d$ such that
\begin{align*}
\| x^{\new} s^{\new} - t \|_2 \leq & ~ \epsilon, \\
A x^{\new} = & ~ b, \\
A^\top y^{\new} + s = & ~ c.
\end{align*}
\end{lemma}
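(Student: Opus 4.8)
The plan is to run the classical short-step centering iteration at the \emph{fixed} parameter $t$: starting from $(x,s,y)$, repeatedly take a damped Newton step toward $xs=t\mathbf 1$ while keeping $Ax=b$ and $A^\top y+s=c$, until the $\ell_2$ proximity $\|xs-t\mathbf 1\|_2$ falls below $\epsilon$. First I would record the step. Given the current iterate and a step size $\beta\in(0,1]$, set $\delta_\mu:=\beta(t\mathbf 1-xs)$ and let $(\delta_x,\delta_s,\delta_y)$ solve the linearized system $X\delta_s+S\delta_x=\delta_\mu$, $A\delta_x=0$, $A^\top\delta_y+\delta_s=0$; exactly as in the derivation of \eqref{eq:d_step} this has the closed form $\delta_x=\frac{X}{\sqrt{XS}}(I-P)\frac{1}{\sqrt{XS}}\delta_\mu$ and $\delta_s=\frac{S}{\sqrt{XS}}P\frac{1}{\sqrt{XS}}\delta_\mu$ with $P=\sqrt{\frac{X}{S}}A^\top(A\frac{X}{S}A^\top)^{-1}A\sqrt{\frac{X}{S}}$, and we update $x\leftarrow x+\delta_x$, $s\leftarrow s+\delta_s$, $y\leftarrow y+\delta_y$; the two affine constraints are preserved exactly, so it suffices to control the multiplicative error $xs/t$.

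Next, the convergence analysis. Setting $p=\sqrt{S/X}\,\delta_x$ and $q=\sqrt{X/S}\,\delta_s$, the constraints $A\delta_x=0$ and $\delta_s\in\mathrm{range}(A^\top)$ force $p\perp q$, while the first equation gives $p+q=\delta_\mu/\sqrt{XS}$; hence $\|p\|_2^2+\|q\|_2^2=\|\delta_\mu/\sqrt{XS}\|_2^2$, and since $(\delta_x\delta_s)_i=p_iq_i$ we get $\|\delta_x\delta_s\|_2\le\frac12\|\delta_\mu/\sqrt{XS}\|_2^2$. Because $x^{\new}s^{\new}=xs+\delta_\mu+\delta_x\delta_s$, writing $\psi:=\|xs/t-\mathbf 1\|_2$ and using $xs\approx_{0.1}t$ yields the recursion $\psi^{\new}\le(1-\beta)\psi+\frac{\beta^2}{1.8}\psi^2$, and likewise $\|\delta_x/x\|_\infty,\|\delta_s/s\|_\infty=O(\beta\psi)$, so positivity of $x$ and $s$ is preserved as long as $\beta\psi$ is a small constant. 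The hypothesis $xs\approx_{0.1}t$ gives $\psi_0\le0.1\sqrt n$; choosing $\beta=\Theta(1/\sqrt n)$ makes $\psi$ contract by a factor $1-\Omega(1/\sqrt n)$ each step (one may switch to $\beta=1$ once $\psi$ is a small constant to get quadratic convergence), so after $O(\sqrt n\log(n/\epsilon))$ steps we have $\|xs-t\mathbf 1\|_2=t\psi\le\epsilon$.

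Finally, the running time. Each step needs one solve of a system in $A\frac{X}{S}A^\top$, i.e. one application of $P$ to a vector. Recomputing $(A\frac{X}{S}A^\top)^{-1}$ from scratch costs $n^\omega$ and is too slow across $\sqrt n$ steps, so I would invoke Vaidya's inverse-maintenance scheme \cite{v89b}: since the per-step $\ell_2$ change of $w=x/s$ is $O(\beta\psi)$ and these sum to $O(\sqrt n)$ over the whole run, the inverse (hence $M=A^\top(AWA^\top)^{-1}A$) can be maintained in amortized $\widetilde O(n^2)$ time per step via low-rank updates, giving total time $\widetilde O(n^{2.5}\log(n/\epsilon))$. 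The main obstacle is precisely this runtime bound: the proximity analysis is the textbook short-step argument, but obtaining $\widetilde O(n^{2.5})$ rather than $n^{\omega+1/2}$ relies on the amortized maintenance cost, whose validity rests on the cumulative (not per-step) movement of $w$ being only $O(\sqrt n)$ — the bookkeeping of \cite{v89b}, which one should verify transfers verbatim under our damped step sizes.
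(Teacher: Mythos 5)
The paper does not actually prove this lemma --- it is imported wholesale from \cite{v89b}, and the only in-paper commentary is the remark following it (run the deterministic $k=n$ version of \textsc{StochasticStep}, letting the $\cosh$ potential $\Phi_\lambda$ decrease deterministically). So your proposal is an attempt to reconstruct a proof the authors never give, and it contains a genuine gap in the convergence analysis. The textbook short-step argument you invoke is valid only inside a small $\ell_2$ neighborhood of the central path, but the hypothesis here is only $xs\approx_{0.1}t$, i.e.\ an $\ell_\infty$ neighborhood, so $\psi_0=\|xs/t-\mathbf 1\|_2$ can be as large as $0.1\sqrt n$. Your recursion $\psi^{\new}\le(1-\beta)\psi+\tfrac{\beta^2}{1.8}\psi^2$ and your bounds $\|\delta_x/x\|_\infty=O(\beta\psi)$ both rely on the invariant $x_is_i\ge 0.9t$ (or at least $x_is_i=\Omega(t)$) holding at \emph{every} iteration, and you never verify that it does. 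The only per-coordinate control available on the second-order error is $\|\delta_x\delta_s\|_\infty\le\tfrac12\|\delta_\mu/\sqrt{XS}\|_2^2=O(\beta^2\psi^2 t)$; with $\beta=\Theta(1/\sqrt n)$ and $\psi=\Theta(\sqrt n)$ this is $\Theta(t)$ \emph{per step}, concentrated possibly on a single coordinate, while the linear part of the update only pulls the worst coordinate back toward $t$ at rate $\beta=\Theta(1/\sqrt n)$. The resulting worst-case drift of $\|xs/t-\mathbf 1\|_\infty$ over the $\Theta(\sqrt n\log n)$ iterations needed to shrink $\psi$ is $\Theta(\beta\psi^2)=\Theta(\sqrt n)$, so the bounds you use do not rule out some $x_is_i$ collapsing to $0$, at which point $\|\delta_\mu/\sqrt{XS}\|_2$ is uncontrolled and the whole recursion breaks. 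Choosing $\beta$ small enough to preserve the $\ell_\infty$ invariant forces $\beta=O(1/n)$ and hence $O(n^3)$ total time, losing the claimed $\wt O(n^{2.5})$.

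This is exactly the difficulty that motivates the soft-max potential in this paper and in \cite{ls14,lsw15}: to recenter from an $\ell_\infty$ neighborhood one must bias the step toward the most-deviated coordinates (e.g.\ via $-\nabla\Phi_\lambda(\mu/t-1)/\|\nabla\Phi_\lambda(\mu/t-1)\|_2$) rather than move uniformly along $t\mathbf 1-xs$, so that the $\ell_\infty$ (equivalently, the potential) bound is provably maintained while the deviation shrinks. Your orthogonality computation ($p\perp q$, $\|pq\|_2\le\tfrac12\|p+q\|_2^2$) and the reduction of the runtime to inverse maintenance are fine as far as they go --- and deferring the amortized $\wt O(n^2)$-per-step maintenance to \cite{v89b} is reasonable for a cited lemma --- but the centering argument needs to be replaced by a potential-reduction argument (or by a correct account of how \cite{v89b} itself handles the wide-neighborhood starting point) before the proof closes.
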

\begin{remark}
Instead of using the algorithm in \cite{v89b}, one can also run our algorithm with $k=n$ for $O(\sqrt{n} \log n)$ iterations. Since $k=n$, there is no randomness involved and hence $\Phi$ will decrease deterministically to $O(n)$.
\end{remark}
\begin{lemma}\label{lem:omega_leq_3_minus_a}
$\omega \leq 3 - \alpha$.
\end{lemma}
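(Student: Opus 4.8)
The plan is to bound the cost of multiplying two $n \times n$ matrices by reducing it to a batch of rectangular multiplications, each of the shape for which the dual exponent $\alpha$ guarantees a near-optimal $n^{2+o(1)}$ running time. Concretely, I would fix an arbitrary $a < \alpha$. By the definition of the dual exponent of matrix multiplication, multiplying an $n \times n$ matrix by an $n \times n^{a}$ matrix can be done in $n^{2+o(1)}$ time.

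Given $A, B \in \R^{n \times n}$, I would partition the columns of $B$ into $\lceil n^{1-a} \rceil$ consecutive blocks $B_{1}, \dots, B_{\lceil n^{1-a}\rceil}$, each of width at most $n^{a}$ (padding the final block with zero columns if necessary). Then $AB$ is just the horizontal concatenation of $AB_{1}, \dots, AB_{\lceil n^{1-a}\rceil}$, and each factor $AB_{j}$ is a product of an $n \times n$ matrix by an $n \times n^{a}$ matrix, which costs $n^{2+o(1)}$ time. Summing over the $\lceil n^{1-a}\rceil$ blocks, the total cost of computing $AB$ is $\lceil n^{1-a} \rceil \cdot n^{2+o(1)} = n^{3-a+o(1)}$. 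Since an algorithm for $n\times n$ by $n\times n$ multiplication with this running time exists, we conclude $\omega \le 3 - a$. As this holds for every $a < \alpha$, taking $a \to \alpha$ gives $\omega \le 3 - \alpha$.

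I do not expect any genuine obstacle here; the argument is immediate. The only two points requiring minor care are that one must argue with $a < \alpha$ rather than $a = \alpha$, since $\alpha$ is defined as a supremum that need not be attained, and that the factor $\lceil n^{1-a}\rceil$ contributes only a constant multiple of $n^{1-a}$, which is harmlessly absorbed into the $n^{3-a+o(1)}$ bound.
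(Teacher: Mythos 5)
Your proposal is correct and is essentially the paper's own argument: the paper splits $A$ into $n^{1-\alpha}$ row blocks of size $n^{\alpha}\times n$ and multiplies each against $B$, which is just the transpose of your column-block decomposition of $B$, yielding the same bound $n^{\omega+o(1)}\leq n^{1-\alpha}\cdot n^{2+o(1)}$. Your extra care in working with $a<\alpha$ and passing to the supremum is a minor tightening the paper omits, but the route is the same.
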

\begin{proof}
We consider a $n \times n$ matrix $A$ multiply another $n \times n$ matrix $B$. We split $A$ into $n^{1-\alpha}$ fat matrices where each of them has size $n^\alpha \times n$. Since $\omega$ is the best exponent of matrix multiplication, thus we know
\begin{align*}
n^{\omega+o(1)} \leq n^{1-\alpha} \cdot n^{2+o(1)}.
\end{align*}
Taking $n \rightarrow \infty$, this implies $\omega \leq 3 - \alpha$.
\end{proof}
Note that the bound in Lemma~\ref{lem:omega_leq_3_minus_a} can be improved to $\omega + \frac{1}{2} \omega \alpha \leq 3$ \cite{cglz20} via tensor rank.

\begin{lemma}[Rectangular matrix multiplication]\label{lem:rectangular_matrix_multiplication}
For any $n\geq r$, multiplying an $n \times r$ with an $r \times n$ matrix or $n \times n$ with $n \times r$ takes time
$$n^{2+o(1)}+r^{\frac{\omega-2}{1-\alpha}}n^{2-\frac{\alpha(\omega-2)}{1-\alpha}+o(1)}.$$
\end{lemma}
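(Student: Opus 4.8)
The plan is to derive the bound directly from the definition of the dual exponent $\alpha$ by a two-level block decomposition; no external rectangular-matrix-multiplication result is needed. Write $r=n^{\beta}$ with $\beta=\log_n r\le 1$. If $\beta\le\alpha$, I would pad the thin matrix with zero columns up to $n^{\alpha}$ columns, so that the product becomes an $n\times n$ by $n\times n^{\alpha}$ multiplication; by the definition of $\alpha$ (up to $n^{o(1)}$ factors) this costs $n^{2+o(1)}$, which is the first term of the claim. So the substantive case is $\alpha<\beta\le 1$. By transposition it suffices to handle $n\times n$ times $n\times r$; the $n\times r$ times $r\times n$ form has the same exponent by the permutation-invariance of the matrix-multiplication tensor (the rank of $\langle n,n,r\rangle$ is invariant under permuting its three modes), a classical fact that I would simply invoke.

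For $\alpha<\beta\le 1$, set $m=n^{(1-\beta)/(1-\alpha)}$ and $k=n/m$; note $1\le m\le n$ precisely because $\alpha\le\beta\le 1$. View $A\in\R^{n\times n}$ as a $k\times k$ block matrix with $m\times m$ blocks, and $B\in\R^{n\times r}$ as a block matrix with $m\times m^{\alpha}$ blocks; the choice of $m$ is exactly what makes $B$ have $n^{\beta}/m^{\alpha}=k$ block columns, so at the block level this is a $k\times k$ by $k\times k$ (square) matrix multiplication. I would then invoke a bilinear algorithm computing the $k\times k$ product with $k^{\omega+o(1)}$ multiplications (and a comparable number of additions), substituting the blocks for the scalar entries via the standard substitution argument. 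Each block multiplication is an $m\times m$ by $m\times m^{\alpha}$ product, costing $m^{2+o(1)}$ by the definition of $\alpha$; each block addition costs $O(m^{2})$ and these do not dominate. Hence the total time is $k^{\omega+o(1)}\cdot m^{2+o(1)}=(n/m)^{\omega}m^{2}n^{o(1)}=n^{\omega}m^{2-\omega+o(1)}$.

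Substituting $m=n^{(1-\beta)/(1-\alpha)}$, the exponent becomes $\omega+\frac{(2-\omega)(1-\beta)}{1-\alpha}$, which a short computation rewrites as $2+(\omega-2)\frac{\beta-\alpha}{1-\alpha}$; since $r=n^{\beta}$, this bound is exactly $r^{\frac{\omega-2}{1-\alpha}}\,n^{2-\frac{\alpha(\omega-2)}{1-\alpha}+o(1)}$, the second term of the claim. Finally I would note that this second term equals $n^{2}$ at $r=n^{\alpha}$ and is increasing in $r$, so it is dominated by $n^{2+o(1)}$ on $r\le n^{\alpha}$ and dominates on $r\ge n^{\alpha}$; adding the two regimes yields $n^{2+o(1)}+r^{\frac{\omega-2}{1-\alpha}}n^{2-\frac{\alpha(\omega-2)}{1-\alpha}+o(1)}$ for all $1\le r\le n$.

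The routine points I would not belabor are: rounding $m$ and $k$ to integers (pad to the next multiple of $m$, changing only constant factors); the fact that $\alpha$ is a supremum rather than a maximum (use an exponent $a<\alpha$ and let $a\to\alpha$, absorbed into the $n^{o(1)}$ — and since $\omega>2$ both exponents in the target formula are monotone in $a$, so no cancellation issue arises); and the verification that a bilinear algorithm for square matrix multiplication stays correct when its scalar entries are replaced by rectangular blocks of consistent inner dimension. The only genuinely non-elementary ingredient is the permutation-invariance of the rectangular-matrix-multiplication exponent used to pass between the two forms; everything else is bookkeeping. The single step that most needs care in the write-up is therefore the exponent arithmetic — checking $\omega+\frac{(2-\omega)(1-\beta)}{1-\alpha}=2+(\omega-2)\frac{\beta-\alpha}{1-\alpha}$ and that this converts to $r^{(\omega-2)/(1-\alpha)}\,n^{2-\alpha(\omega-2)/(1-\alpha)}$.
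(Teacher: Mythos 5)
Your proof is correct, and the key exponent identity checks out: $\omega+(2-\omega)\tfrac{1-\beta}{1-\alpha}=2+(\omega-2)\tfrac{\beta-\alpha}{1-\alpha}$, which is exactly the exponent of the second term. Your decomposition is, however, the dual of the paper's. The paper works with the $n\times r$ times $r\times n$ form and tiles it as a $k\times k^{\alpha}$ by $k^{\alpha}\times k$ grid of \emph{square} $(n/k)\times(n/k)$ blocks with $k=(n/r)^{1/(1-\alpha)}$: the outer structure is the cheap rectangular product ($k^{2+o(1)}$ block operations) and each block operation is a square multiplication costing $(n/k)^{\omega+o(1)}$. You work with the $n\times n$ times $n\times r$ form and tile it as a \emph{square} $k\times k$ block product ($k^{\omega+o(1)}$ block operations) whose individual block products are the cheap rectangular ones, each costing $m^{2+o(1)}$. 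Since your $m$ equals the paper's $k$ (both are $(n/r)^{1/(1-\alpha)}$), the two totals are literally the same expression $k^{2}(n/k)^{\omega}=(n/m)^{\omega}m^{2}$, so nothing is gained or lost asymptotically. Your version has two minor advantages in self-containedness: you dispose of the $r\le n^{\alpha}$ regime by direct zero-padding rather than by citing \cite{gu18}, and you make explicit the points the paper leaves implicit — the substitution of blocks into a bilinear algorithm, the permutation-invariance needed to pass between $\langle n,n,r\rangle$ and $\langle n,r,n\rangle$ (which the paper asserts in one sentence), and the fact that $\alpha$ is a supremum, handled by monotonicity in $a$. These are all standard and your treatment of them is sound.
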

\begin{proof}
The cost for multiplying a $n\times n$ and a $n\times r$ matrix is the same as multiplying a $n\times r$ and a $r\times n$ matrix \cite[page 51]{pan1984multiply}. So, we focus on the later case.

For the case $r \leq n^\alpha$, it follows from the rectangular matrix multiplication result in \cite{gu18}.

For the case $r \geq n^\alpha$, we let $k = (n/r)^\frac{1}{1-\alpha}$. We can view the problem as multiplying a $k\times k^\alpha$ and a $k^\alpha \times k$ block matrices and each block has size $\frac{n}{k} \times \frac{n}{k}$ size. Therefore, the total cost is
$$k^{2+o(1)} \times (\frac{n}{k})^{\omega + o(1)} = r^{\frac{\omega-2}{1-\alpha}}n^{2-\frac{\alpha(\omega-2)}{1-\alpha}+o(1)}.$$
\end{proof}

\begin{lemma}\label{lem:feasible_LP}
Let $A \in \R^{d \times n}$, $b \in \R^d$ and $c \in \R^n$. For a matrix $A$, we define $\| A\|_1$ to be $\sum_{i,j} |A_{i,j}|$. Given a linear program $\min_{Ax = b, x \geq 0} c^\top x$ with $n$ variables and $d$ constraints. Assume that \\
1. Diameter of the polytope : For any $x \geq 0$ with $A x = b$, we have that $\| x \|_{\infty} \leq R$. \\
2. Lipschitz constant of the linear program : $\| c \|_{\infty} \leq L$. 

For any $\delta \in (0,1]$, the modified linear program $\min_{\ov{A} \ov{x} = \ov{b} , \ov{x} \geq 0 } \ov{c}^\top \ov{x}$ with
\begin{align*}
\ov{A}=
\begin{bmatrix}
A & 0 & \frac{1}{R} b - A 1_n \\
1_n^\top & 1 & 0 
\end{bmatrix}
\in \R^{(d+1) \times (n+2)}
,
\ov{b}=
\begin{bmatrix}
\frac{1}{R} b \\
n+1 
\end{bmatrix}
\in \R^{d+1}
\text{~and,~}
\ov{c}
=
\begin{bmatrix}
\frac{\delta}{L} \cdot c \\
0 \\
1
\end{bmatrix}
\in \R^{n+2}
\end{align*}
satisfies the following : \\
1. $\ov{x} = \begin{bmatrix} 1_n \\ 1 \\ 1 \end{bmatrix} \in \R^{n+2}$, $\ov{y} = \begin{bmatrix} 0_d \\ - 1 \end{bmatrix} \in \R^{d+1}$ and $\ov{s} = \begin{bmatrix} 1_n + \frac{\delta}{ L } \cdot c \\ 1 \\ 1 \end{bmatrix} \in \R^{n+2}$ are feasible primal dual vectors. \\
2. For any feasible primal dual vectors $(\ov{x}, \ov{y}, \ov{s}) \in \R^{(n+2) \times (d+1) \times (n+2)}$ with duality gap $\leq \delta^2$, the vector $\wh{x} = R \cdot \ov{x}_{1:n} \in \R^n$ ($\ov{x}_{1:n}$ is the first $n$ coordinates of $x$) is an approximate solution to the original linear program in the following sense
\begin{align*}
c^\top \wh{x} \leq & ~ \min_{A x = b , x \geq 0} c^\top x + LR \cdot \delta, \\
\| A \wh{x} - b \|_1 \leq & ~ 4 n \delta \cdot ( R \| A \|_1 + \| b \|_1 ) , \\
\wh{x} \geq & ~ 0 .
\end{align*}
\end{lemma}
\begin{proof}

{\bf Part 1.} 
For the first result, straightforward calculations show that $(\ov{x}, \ov{y}, \ov{s}) \in \R^{(n+2) \times (d+1) \times (n+2)}$ are feasible, i.e., %using $\delta < 1$,
\begin{align*}
\ov{A} \ov{x} = \begin{bmatrix}
A & 0 & \frac{1}{R} b - A 1_n \\
1_n^\top & 1 & 0 
\end{bmatrix} 
\cdot
\begin{bmatrix} 1_n \\ 1 \\ 1 \end{bmatrix}
= 
\begin{bmatrix}
\frac{1}{R} b \\
n+1
\end{bmatrix}
= \ov{b}
\end{align*}
and
\begin{align*}
\ov{A}^\top \ov{y} + \ov{s}
 = & ~
\begin{bmatrix}
A^\top & 1_n \\
0 & 1 \\
\frac{1}{R} b^\top - 1_n^\top A^\top & 0 \\
\end{bmatrix}
\cdot
\begin{bmatrix}
0_d \\
-1 
\end{bmatrix}
+
\begin{bmatrix}
1_n + \frac{\delta}{L} \cdot c \\
1 \\
1
\end{bmatrix} \\
= & ~ 
\begin{bmatrix}
-1_n \\
-1 \\
0\\
\end{bmatrix}
+
\begin{bmatrix}
1_n + \frac{\delta}{ L } \cdot c \\
1 \\
1
\end{bmatrix} \\
= & ~
\begin{bmatrix}
\frac{\delta}{L} \cdot c \\
0 \\
1
\end{bmatrix} \\
= & ~ \ov{c}
\end{align*}

{\bf Part 2.} 
For the second result, we let 
\begin{align*}
\OPT = \min_{Ax = b, x \geq 0} c^\top x, \text{~~~and,~~~} \ov{\OPT} = \min_{\ov{A} \ov{x} = \ov{b} , \ov{x} \geq 0 } \ov{c}^\top \ov{x}.
\end{align*}
For any optimal $x \in \R^n$ in the original LP, we consider the following $\ov{x} \in \R^{n+2}$
\begin{align}\label{eq:another_ov_x_in_appendix}
\ov{x} = 
\begin{bmatrix}
\frac{1}{R} x \\
n+1 - \frac{1}{R} \sum_{i=1}^n x_i \\
0
\end{bmatrix}
\end{align}
and $\ov{c} \in \R^{n+2}$
\begin{align}\label{eq:another_ov_c_in_appendix}
\ov{c} = 
\begin{bmatrix}
\frac{\delta}{L}\cdot c^\top \\
0 \\
1
\end{bmatrix}
\end{align}
We want to argue that $\ov{x} \in \R^{n+2}$ is feasible in the modified LP. It is obvious that $\ov{x} \geq 0$, it remains to show $\ov{A} \ov{x} = \ov{b} \in \R^{d+1}$. We have
\begin{align*}
\ov{A} \ov{x} = 
\begin{bmatrix}
A & 0 & \frac{1}{R} b - A 1_n \\
1_n^\top & 1 & 0 
\end{bmatrix} 
\cdot
\begin{bmatrix}
\frac{1}{R} x \\
n+1 - \frac{1}{R} \sum_{i=1}^n x_i \\
0 
\end{bmatrix}
=
\begin{bmatrix}
\frac{1}{R} A x \\
n+1 
\end{bmatrix}
= 
\begin{bmatrix}
\frac{1}{R} b \\
n+1
\end{bmatrix}
=\ov{b}
,
\end{align*}
where the third step follows from $Ax = b$, and the last step follows from the definition of $\ov{b}$.

Therefore, using the definition of $\ov{x}$ in \eqref{eq:another_ov_x_in_appendix} we have that 
\begin{align}\label{eq:bounding_ov_OPT_by_OPT}
\ov{\OPT} \leq \ov{c}^\top \ov{x} =
\begin{bmatrix}
\frac{\delta}{L} \cdot c^\top & 0 & 1
\end{bmatrix}
\cdot
\begin{bmatrix}
\frac{1}{R} x \\
n+1 - \frac{1}{R} \sum_{i=1}^n x_i \\
0
\end{bmatrix}
= \frac{\delta}{ LR} \cdot c^\top x =  \frac{\delta}{ LR} \cdot \OPT. 
\end{align}
where the first step follows from the fact that modified program is a minimization problem, the second step follows from the definitions of $\ov{x} \in \R^{n + 2}$ \eqref{eq:another_ov_x_in_appendix} and $\ov{c} \in \R^{n + 2}$ \eqref{eq:another_ov_c_in_appendix}, the last step follows from the fact that $x \in \R^n$ is an optimal solution in the original linear program.

Given a feasible $(\ov{x}, \ov{y}, \ov{s}) \in \R^{(n+2) \times (d+1) \times (n+2)}$ with duality gap $\delta^2$. Write $\ov{x} = \begin{bmatrix} \ov{x}_{1:n} \\ \tau \\ \theta \end{bmatrix} \in \R^{n+2}$ for some $\tau \geq 0$, $\theta \geq 0$. We can compute $\ov{c}^\top \ov{x}$ which is $\frac{\delta}{ L } \cdot c^\top \ov{x}_{1:n} + \theta$. Then, we have
\begin{align}\label{eq:eq1_in_appendix}
\frac{ \delta }{ L } \cdot c^\top \ov{x}_{1:n} + \theta \leq \ov{\OPT} + \delta^2 \leq \frac{ \delta }{ LR } \cdot \OPT + \delta^2,
\end{align}
where the first step follows from definition of duality gap, the last step follows from \eqref{eq:bounding_ov_OPT_by_OPT}.

Hence, we can upper bound the $\OPT$ of the transformed program as follows:
\begin{align*}
c^\top \wh{x} = R \cdot c^\top \ov{x}_{1:n} = \frac{LR}{ \delta } \cdot \frac{\delta}{L} c^\top \ov{x}_{1:n} \leq \frac{RL}{\delta} ( \frac{\delta}{LR} \cdot \OPT + \delta^2 ) = \OPT + LR \cdot \delta,
\end{align*}
where the first step follows by $\wh{x} = R \cdot \ov{x}_{1:n}$, the third step follows by \eqref{eq:eq1_in_appendix}.

%For the feasibility, we have that $\theta \leq \frac{ \delta }{ L R } \cdot \OPT _ \delta^2 \leq 2 n \delta $ because $\OPT = \min_{A x = b, x \geq 0} c^\top x \leq n L R$. 

Note that 
\begin{align}\label{eq:lower_bound_on_delta_L_c_ovx}
\frac{\delta}{L} c^\top \ov{x}_{1:n}\geq -\frac{\delta}{L} \|c\|_\infty \| \ov{x}_{1:n} \|_1 = -\frac{\delta}{L} \|c\|_\infty \| \frac{1}{R} x \|_1 \geq  -\frac{\delta}{L} \|c\|_\infty \frac{n}{R} \| x \|_{\infty} \geq -\delta n,
\end{align}
where the second step follows from the definition of $\ov{x} \in \R^{n+2}$, and the last step follows from $\| c \|_{\infty} \leq L$ and $\| x \|_{\infty} \leq R$.

We can upper bound the $\theta$ in the following sense,
\begin{align}\label{eq:theta_is_at_most_4n_delta}
\theta \leq \frac{\delta}{LR} \cdot \OPT + \delta^2 +\delta n
\leq 2n \delta + \delta^2 \leq  4 n \delta
\end{align}
where the first step follows from \eqref{eq:eq1_in_appendix} and \eqref{eq:lower_bound_on_delta_L_c_ovx}, the second step follows by $\OPT = \min_{A x = b, x \geq 0} c^\top x \leq n L R$ (because $\| c \|_{\infty} \leq L$ and $\| x \|_{\infty} \leq R$), and the last step follows from $\delta \leq 1 \leq n$.

The constraint in the new polytope shows that
\begin{align*}
A \ov{x}_{1:n} + ( \frac{1}{R} b - A 1_n ) \theta = \frac{1}{R} b.
\end{align*}
Using $\wh{x} = R x_{1:n} \in \R^n$, we have
\begin{align*}
A \frac{1}{R} \wh{x} + ( \frac{1}{R} b - A 1_n ) \theta = \frac{1}{R} b.
\end{align*}
Rewriting it, we have $A \wh{x} - b = ( R A 1_n - b ) \theta \in \R^d$ and hence
\begin{align*}
\| A \wh{x} - b \|_1 = \| ( R A 1_n - b ) \theta \|_1 \leq \theta ( \| R A 1_n \|_1 + \| b \|_1 ) \leq \theta \cdot ( R \| A \|_1 + \| b \|_1 ) \leq 4 n \delta \cdot ( R \| A \|_1 + \| b \|_1 ),
\end{align*}
where the second step follows from the triangle inequality, the third step follows from $\| A 1_n \|_1 \leq \| A \|_1$ (because the definition of entry-wise $\ell_1$ norm), and the last step follows from \eqref{eq:theta_is_at_most_4n_delta}.

Thus, we complete the proof.
\end{proof}

\section{Generalized Projection Maintenance}
Given the usefulness of projection maintenance, we state a more general version of Theorem \ref{thm:maintain_projection} for future use.
\begin{theorem}
\label{thm:maintain_projection2}Assume the following about the cost
of matrix operations: 
\begin{itemize}
\item In $O(t_{k})$ time, we can multiply a $n\times n$ and a $n\times k$
matrix, and we can multiply a $n\times k$ and a $k\times n$ matrix.
\item In $O(s_{n})$ time, we can invert a $n\times n$ matrix, and we can
multiply a $n\times n$ and a $n\times n$ matrix.
\item $t_{k}/k$ is decreasing $k$.
\end{itemize}
Given a matrix $A\in\R^{d\times n}$ with $n\geq d$, a tolerance
parameter $0<\epsilon_{\mathrm{mp}}<1/4$ and $k^{*}\in[n]$, there is a deterministic
data structure that approximately maintains the projection matrices
\begin{align*}
\sqrt{W}A^{\top}(AWA^{\top})^{-1}A\sqrt{W}
\end{align*}
and the inverse matrices
$(A W A^{\top})^{-1} \in \R^{d \times d}$ for positive diagonal matrices $W \in \R^{n \times n}$ through the
following operations: 
\begin{itemize}
\item $\textsc{Update}(w)$: Output a vector $\tilde{v}$ such that for
all $i$, 
\begin{align*}
(1-\epsilon_{\mathrm{mp}})\tilde{v_{i}}\leq w_{i}\leq(1+\epsilon_{\mathrm{mp}})\tilde{v_{i}}.
\end{align*}
\item $\textsc{Query}_{1}(h)$: Output $\sqrt{\tilde{V}}A^{\top}(A\tilde{V}A^{\top})^{-1}A\sqrt{\tilde{V}}h \in \R^n$
for the $\tilde{v}$ outputted by the last call to $\textsc{Update}$. 
\item $\textsc{Query}_{2}(h)$: Output $(A\tilde{V}A^{\top})^{-1}h \in \R^d$ for
the $\tilde{v}$ outputted by the last call to $\textsc{Update}$.
\item $\textsc{Insert}(a,w_{a})$: Insert a column $a$ into $A$, a weight
$w_{a}$ into $w$.
\item $\textsc{Delete}(a)$: Delete a column $a$ from $A$ and its
corresponding weight from $w$.
\item $\textsc{Output}()$: Output $\sqrt{\tilde{V}}A^{\top}(A\tilde{V}A^{\top})^{-1}A\sqrt{\tilde{V}} \in \R^{n \times n}$
and $(A\tilde{V}A^{\top})^{-1} \in \R^{d \times d}$ for the $\tilde{v}$ outputted by
the last call to $\textsc{Update}$. 
\end{itemize}
Suppose that the number of columns is $O(n)$ during the whole algorithm
and that for any call of $\textsc{Update}$, we have 
\begin{align*}
\sum_{i=1}^{n}\left( \E[ \ln w_{i} ] - \ln ( w_{i}^{\mathrm{old}} ) \right)^{2}\leq C_{1}^{2}\qquad\text{and}\qquad\sum_{i=1}^{n} ( \Var[\ln (w_{i} ) ] )^{2} \leq C_{2}^{2}
\end{align*}
where $w$ is the input of call, $w^{\mathrm{old}}$ is the weight
before the call, and the expectation and variance is conditional on
$w_{i}^{\mathrm{old}}$. Then, we have that:
\begin{itemize}
\item The data structure takes $O(s_{n}+t_{n})$ time to initialize.
\item Each call of $\textsc{Query}$ takes time $O(n\cdot\|h\|_{0}+s_{k^{*}}+nk^{*})$.
\item Each call of $\textsc{Output}()$ takes $O(t_{k^{*}})$ time.
\item Each call of $\textsc{Insert}$ and $\textsc{Delete}$ takes $O(n^{2})$
time.
\item Each call of \textsc{Update} takes 
\begin{align*}
O \left( (C_{1}/\epsilon_{\mathrm{mp}}+C_{2}/\epsilon_{\mathrm{mp}}^{2})\cdot \left( \frac{t_{k^{*}}^2}{k^{*}}+\sum_{i=k^{*}}^{n}\frac{t_{i}^{2}}{i^{2}} \right)^{1/2} \cdot\log n \right)
\end{align*}
expected time in amortized.
\end{itemize}
\end{theorem}

\begin{proof}
The proof is essentially the same as Theorem \ref{thm:maintain_projection}. The way to maintain
$A\tilde{V}A^{\top} \in \R^{d \times d}$ is almost identical to $\sqrt{\tilde{V}}A^{\top}(A\tilde{V}A^{\top})^{-1}A\sqrt{\tilde{V}}h \in \R^n$.
Updating both matrices under insertion and deletion can be done via
Sherman Morrison formula in $O(n^{2})$ time. We note that these updates
does not increase our potential and hence it does not affect the amortized
cost for $\textsc{Update}$. Finally, to bound the runtime using $t_{k}$
and $s_{n}$ instead of $\alpha$ and $\omega$, we use the same potential
$\Psi_{k}=\sum_{i=1}^n g_{i}\psi(x_{i}^{(k)})$ with a new definition of
$g$: 
\begin{align*}
g_{i}=\begin{cases}
t_{i}/i, & \text{if }i\geq k^{*};\\
t_{k^{*}}/k^{*}, & \text{otherwise}.
\end{cases}.
\end{align*}
Now, the update time in Lemma \ref{lem:maintain_projection_update} becomes $O(rg_{r})$. The rest of
the proof is identical. In particular, Lemma \ref{lem:sum_g_i} becomes
\begin{align*}
\|g\|_{2}= \left( \frac{t_{k^{*}}}{k^{*}}+\sum_{i=k^{*}}^{n}\frac{t_{i}^{2}}{i^{2}} \right)^{1/2}
\end{align*}
and hence Lemma \ref{lem:bounding_E_psi_y_minus_psi_x} gives the bound 
\begin{align*}
O (C_{1}+C_{2}/\epsilon_{ \mathrm{mp} } ) \cdot \left( \frac{t_{k^{*}}}{k^{*}}+\sum_{i=k^{*}}^{n}\frac{t_{i}^{2}}{i^{2}} \right)^{1/2}.
\end{align*}
\end{proof}

%\begin{figure}
%\centering
%\includegraphics[width=0.6\textwidth]{chase.pdf}
%\caption{TODO}
%\end{figure}

\begin{figure}[!h]
    \centering
    \includegraphics[width=0.98\textwidth]{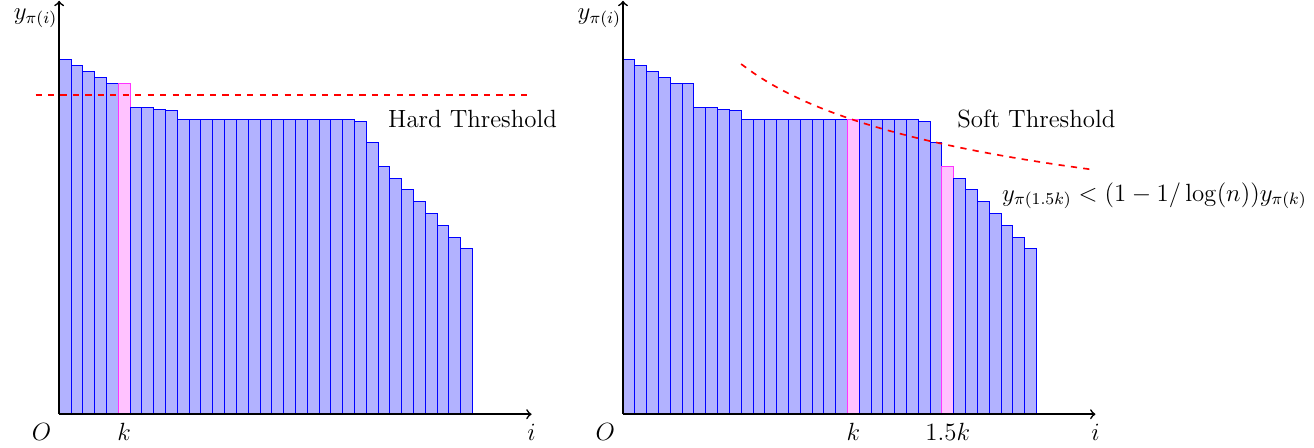}
    \caption{In this figure we illustrate the naive hard threshold and the soft threshold. The x-axis represents the sorted $n$ coordinates, and the y-axis represents the errors $y_{\pi(i)}$. All the coordinates smaller than the threshold $k$ are updated. In the left figure, we choose the hard threshold $k$ as the smallest coordinates such that $y_{\pi(k)}\leq \epsilon_{\mathrm{mp}}$. In the right figure, we choose the soft threshold $k$ as the smallest coordinates that satisfies both $y_{\pi(k)}\leq \epsilon_{\mathrm{mp}}$ and $y_{\pi(1.5k)}<(1-1/\log(n))y_{\pi(k)}$.}
    \label{fig:hard_soft_thresholding}
\end{figure}

%
%\begin{figure}[ht]
%    \centering
%    \includegraphics[width = 0.4 \textwidth]{time}
%    \caption{Running time trade-off. The $n^{\omega a}$ and $n^{2a}$ terms are always dominated by the $n^{1+a}$ term when trade-off with $n^{2-a/2}$.}
%    \label{fig:cls_time}
%\end{figure}
%
%
%

%\input{hardness}

%%% some writing rules

%% Writing rule for creating tags.
%% Tags :
%% Theorem    \ref{thm:bla_bla}
%% Lemma      \ref{lem:bla_bla}
%% Claim      \ref{cla:bla_bla}
%% Corollary  \ref{cor:bla_bla}
%% Fact       \ref{fac:bla_bla}
%% Definition \ref{def:bla_bla}
%% Section    \ref{sec:bla_bla}
%% Subsection \ref{sub:bla_bla}
%% Equation   \ref{eq:bla_bla}

\end{document}